\newtheorem{theorem}{Theorem}
\newtheorem{proposition}[theorem]{Proposition}
\newtheorem{lemma}[theorem]{Lemma}
\newtheorem{corollary}[theorem]{Corollary}
\crefname{conjecture}{conjecture}{conjectures}
\Crefname{conjecture}{Conjecture}{Conjectures}
\newtheorem{definition}[theorem]{Definition}
\Crefname{observation}{Observation}{Observations}
\crefname{secinapp}{appendix}{appendices}
\Crefname{secinapp}{Appendix}{Appendices}
\newcommand{\abs}[1]{\ensuremath{{\lvert#1\rvert}}}
\newcommand{\bra}[1]{\ensuremath{\left\langle{#1}\right|}}
\newcommand{\ket}[1]{\ensuremath{\left|{#1}\right\rangle}}
\newcommand{\braket}[2]{\ensuremath{\left\langle{#1}\middle|{#2}\right\rangle}}
\newcommand{\ip}[1]{\left< #1 \right>}
\newcommand{\braopket}[3]{\ensuremath{\left\langle{#1}\middle|{#2}\middle|{#3}\right\rangle}}
\newcommand{\braopketnolr}[3]{\langle #1 | #2 | #3 \rangle}
\newcommand{\rank}[0]{\ensuremath{\textnormal{rank}}}
\newcommand{\linspan}[0]{\ensuremath{\textnormal{span}}}
\newcommand{\diag}[0]{\ensuremath{\textnormal{diag}}}
\newcommand{\Tr}[0]{\ensuremath{\textnormal{Tr}}}
\newcommand{\linop}[1]{\ensuremath{\mathcal{L}(#1)}}
\newcommand{\proj}[1]{{\ket{#1}\bra{#1}}}
\newcommand{\conj}[1]{\overline{#1}}
\newcommand{\homm}[0]{\to}
\newcommand{\home}[0]{\stackrel{*}{\to}}
\newcommand{\djp}[0]{*}
\newcommand{\cart}{\mathbin{\square}}
\newcommand{\chan}[1]{\ensuremath{\mathcal{#1}}}
\newcommand{\chanadj}[1]{\ensuremath{\mathcal{#1}^*}}
\newcommand{\chanconj}[1]{\ensuremath{\conj{\mathcal{#1}}}}
\newcommand{\chanadjconj}[1]{\ensuremath{\conj{\mathcal{#1}^*}}}
\newcommand{\cN}[0]{\chan{N}}
\newcommand{\gcmpl}[1]{\overline{#1}}
\newcommand{\ot}[0]{\otimes}
\newcommand{\Gc}[0]{{\gcmpl{G}}}
\newcommand{\Hc}[0]{{\gcmpl{H}}}
\newcommand{\thm}[0]{\vartheta'}
\newcommand{\thp}[0]{\vartheta^+}
\newcommand{\thbar}[0]{\overline{\vartheta}}
\newcommand{\thmbar}[0]{\overline{\vartheta}'}
\newcommand{\thpbar}[0]{\overline{\vartheta}^+}
\newcommand{\qth}[0]{\tilde{\vartheta}}
\newcommand{\qthperp}[0]{\overline{\vartheta}}
\newcommand{\qthpperp}[1]{\overline{\vartheta}^+_{#1}}
\newcommand{\qthmperp}[1]{\overline{\vartheta}'_{#1}}
\newcommand{\opnorm}[1]{\ensuremath{\left\lVert#1\right\rVert}}
\newcommand{\bvec}[0]{\mathbf{b}}
\newcommand{\cvec}[0]{\mathbf{c}}
\newcommand{\xvec}[0]{\mathbf{x}}
\newcommand{\yvec}[0]{\mathbf{y}}
\newcommand{\rot}[0]{\mathcal{R}}
\newcommand{\sep}[0]{\textnormal{SEP}}
\newcommand{\cone}[0]{\mathcal{C}}
\newcommand{\psdcone}[0]{\mathcal{S}^+}
\newcommand{\nonneg}[0]{\mathcal{N}}
\newcommand{\ppt}[0]{\textnormal{PPT}}
\newcommand{\cliquec}[0]{\omega}
\newcommand{\cliqueq}[0]{\omega_q}
\newcommand{\cliquece}[0]{\omega_*}
\newcommand{\cliqueqe}[0]{\omega_{q*}}
\newcommand{\indepc}[0]{\alpha}
\newcommand{\indepq}[0]{\alpha_q}
\newcommand{\indepce}[0]{\alpha_*}
\newcommand{\indepqe}[0]{\alpha_{q*}}
\newcommand{\chromc}[0]{\chi}
\newcommand{\chromq}[0]{\chi_q}
\newcommand{\chromce}[0]{\chi_*}
\newcommand{\chromqe}[0]{\chi_{q*}}
\newcommand{\orthrank}[0]{\xi}
\begin{document}

\title{Quantum zero-error source-channel coding and non-commutative graph theory}
\author{Dan~Stahlke
    \thanks{
        Dan Stahlke was with the Department of Physics, Carnegie Mellon University,
        Pittsburgh, Pennsylvania 15213, USA
        (e-mail: dan@stahlke.org)
    }
    \thanks{
        Copyright (c) 2014 IEEE. Personal use of this material is permitted.  However,
        permission to use this material for any other purposes must be obtained from the
        IEEE by sending a request to pubs-permissions@ieee.org.
    }
}
\date{\today}
\maketitle

\begin{abstract}
    Alice and Bob receive a bipartite state (possibly entangled) from some finite
    collection or from some subspace.  Alice sends a message to Bob through a noisy
    quantum channel such that Bob may determine the initial state, with zero chance
    of error.
    This framework encompasses, for example, teleportation, dense coding, entanglement
    assisted quantum channel capacity, and one-way communication complexity of function
    evaluation.

    With classical sources and channels, this problem can be analyzed using graph
    homomorphisms.  We show this quantum version can be analyzed using homomorphisms
    on non-commutative graphs (an operator space generalization of graphs).
    Previously the Lov{\'a}sz $\vartheta$ number has been generalized to non-commutative
    graphs; we show this to be a homomorphism monotone, thus providing bounds on quantum
    source-channel coding.  We generalize the Schrijver and Szegedy numbers, and show
    these to be monotones as well.  As an application we construct a quantum channel
    whose entanglement assisted zero-error one-shot capacity can only be unlocked by using
    a non-maximally entangled state.

    These homomorphisms allow definition of a chromatic number for
    non-commutative graphs.  Many open questions are presented regarding the possibility
    of a more fully developed theory.
\end{abstract}

\begin{IEEEkeywords}
    Graph theory, Quantum entanglement, Quantum information, Zero-error information
    theory, Linear programming
\end{IEEEkeywords}

\IEEEpeerreviewmaketitle

\section{Introduction}

We investigate a quantum version of zero-error source-channel coding (communication over a
noisy channel with side information).  This includes such problems as zero-error
quantum channel capacity (with or without entanglement
assistance)~\cite{dsw2013,arxiv:0906.2527,arxiv:1301.1166,arxiv:0709.2090},
dense coding~\cite{PhysRevLett.69.2881}, teleportation~\cite{PhysRevLett.70.1895},
function evaluation using one-way (classical or quantum)
communication~\cite{witsen76,dewolfphd}, and measurement of bipartite states
using local operations and one-way communication (LOCC-1)~\cite{PhysRevA.88.062316}.
Unless otherwise mentioned all discussion is in the context of zero-error
information theory---absolutely no error is allowed.

The problem we consider is as follows.  Alice and Bob each receive half of a bipartite state
$\ket{\psi_i}$ from some finite collection that has been agreed to in advance (the
\emph{source}).  Alice sends
a message through a noisy quantum channel, and Bob must determine $i$ using Alice's noisy
message and his half of the input $\ket{\psi_i}$.
The goal is to determine whether such a protocol is possible
for a given collection of input states and a given noisy channel.
One may also ask how many channel uses are needed per input state if
several different input states arrive in parallel and are coded using a block code.
This is known as the \emph{cost rate}.
We also consider a variation in which the discrete index $i$ is replaced by a quantum register.

For classical inputs and a classical channel, source-channel coding is possible if and only if
there is a graph homomorphism between two suitably defined graphs.
Since the Lov{\'a}sz $\vartheta$ number of a graph is a homomorphism monotone, it provides a
lower bound on the cost rate~\cite{1705019}.
This bound also applies if Alice and Bob can make use of an entanglement
resource~\cite{6994835,6880319}.
We extend the notion of graph homomorphism to non-commutative graphs and show the
generalized Lov{\'a}sz $\vartheta$ number of~\cite{dsw2013} to be monotone under these
homomorphisms, providing a lower bound on cost rate for quantum source-channel coding.

Schrijver's $\thm$ and Szegedy's $\thp$, which are variations on Lov{\'a}sz's $\vartheta$, are
also homomorphism monotones.
We generalize these for non-commutative graphs, providing stronger bounds on
one-shot quantum channel capacity in particular and on quantum source-channel coding in
general.
Although $\thm$ and $\thp$ provide only mildly stronger bounds as compared to $\vartheta$
for classical graphs, with non-commutative graphs the differences are often dramatic.
For classical graphs $\thm$ and $\thp$ are monotone under entanglement assisted
homomorphisms~\cite{6880319}, but oddly this is not the case for non-commutative graphs.
As a consequence, these quantities can be used to study the power of entanglement assistance.
We construct a channel with large one-shot entanglement assisted capacity but no one-shot
capacity when assisted by a maximally entangled state.

In \cref{sec:classical} we review graph theory and (slightly generalized) classical
source-channel coding.
In \cref{sec:ncgraph} we review the theory of non-commutative graphs and define a
homomorphism for these graphs.
In \cref{sec:qsrcchan} we build the theory of quantum source-channel coding and provide a
few basic examples.
In \cref{sec:qthmon} we prove that $\vartheta$ is monotone under entanglement assisted
homomorphisms of non-commutative graphs.
In \cref{sec:repetitions} we consider block coding and define various products on
non-commutative graphs.
In \cref{sec:schrijver} we define Schrijver $\thm$ and Szegedy $\thp$ numbers for
non-commutative graphs;  we then revisit some examples from the
literature and also show that one-shot entanglement assisted capacity for a quantum
channel can require a non-maximally entangled state.
We conclude with a list of many open questions in \cref{sec:conclusion}.

\subsection{Relation to prior work}

Zero-error source-channel coding in a quantum context was first considered
in~\cite{6994835} then in~\cite{6880319}.  There the sources
and channels are classical but an entanglement resource is available.
Zero-error entanglement assisted capacity of quantum channels was
considered in~\cite{dsw2013}, but without sources.
Measurement of bipartite states using one-way classical communication was considered
in~\cite{PhysRevA.88.062316}; however, this was not in the context of source-channel
coding.
We consider for the first time (in a zero-error context) quantum sources, and consider
their transmission using quantum channels.
To this end we apply the concept of a non-commutative graph, first conceived
in~\cite{dsw2013}, to characterize a quantum source.
This is new, as previously only classical sources were considered so only classical graphs
were needed.
This gives novel perspective even for classical sources with entanglement assistance:
using a non-commutative graph allows to consider the entanglement as part of the source.
In~\cite{dsw2013,6994835,6880319} entanglement was considered
separate from the source, with no framework available for investigating the type of entanglement needed.

Graph homomorphisms are central to classical source-channel coding~\cite{1705019}.  This
concept has been extended to the entanglement-assisted case but still with classical
graphs~\cite{6994835,6880319}.
We define graph homomorphisms for non-commutative graphs, potentially opening a path for a
more developed theory.  Already this leads to a chromatic number for non-commutative
graphs; previously only independence number was defined~\cite{dsw2013}.
Subsequent to first submission of the present paper, an alternative definition has been
provided for the chromatic number of a non-commutative graph~\cite{arxiv:1411.7666}.

The Lov{\'a}sz, Schrijver, and Szegedy numbers were known to provide bounds on classical
source-channel coding~\cite{1705019,de2013optimization}.
These bounds were recently shown to hold also when entanglement assistance is
allowed~\cite{6994835,6880319}.
A Lov{\'a}sz number has been defined for non-commutative graphs, providing a bound on
zero-error entanglement assisted capacity of a quantum channel~\cite{dsw2013}.
We show this generalized Lov{\'a}sz number also provides a bound on quantum source-channel
coding.
Inspired by~\cite{dsw2013}, we provide analogous generalizations for the Schrijver and
Szegedy numbers.  Such a generalization is non-obvious as it involves a basis-independent
reformulation of entrywise positivity constraints on a matrix.
We show these generalized Schrijver and Szegedy numbers provide bounds on quantum
source-channel coding, effectively providing a fully quantum generalization
of~\cite{6994835,6880319}.
Interestingly, these generalized quantities become sensitive to the nature of the
entanglement resource.  This leads to a counterintuitive result: a quantum
channel whose one-shot zero-error capacity can only be unlocked by using a non-maximally
entangled state.  The existence of a classical channel with such a property is still an
open question.

These generalized Lov{\'a}sz, Schrijver, and Szegedy numbers are used to reproduce well
known bounds on dense coding and teleportation, as well as results
from~\cite{arxiv:0906.2527,PhysRevA.88.062316}.  Although our techniques yield less
direct proofs than those previously known, it is notable that such diverse results can be
reproduced using a single technique.

Parallel to entanglement assisted communication runs the subfield of
quantum non-locality games.  In the context of entanglement assisted
communication one has the entanglement assisted independence and chromatic numbers and
entanglement assisted homomorphisms~\cite{6994835,6880319}.  In the
context of quantum non-locality one has quantum independence and chromatic numbers and
quantum homomorphisms~\cite{david2006quantum,cameron2007quantum,arxiv:1212.1724}.
These concepts are mathematically similar, and indeed it is an open question whether they
are identical.
Given this similarity, it seems feasible that the work of the present paper could have an
analogue in non-locality games.  This is beyond the present scope and is left as a potential
direction for further research.

\section{Classical source-channel coding}
\label{sec:classical}

We will make use of the following graph theory terminology.
A graph $G$ consists of a finite set of \emph{vertices} $V(G)$ along with a symmetric binary
relation $x \sim_G y$ (the \textrm{edges} of $G$).
The absence of an edge is denoted $x \not\sim_G y$.
The subscript will be omitted when the graph can be inferred from context.
We allow loops on vertices.  That is to say, we allow $x \sim x$ for some of the $x \in V(G)$.
Typically we will be dealing with graphs that do not have loops (\emph{simple graphs}),
but allow the possibility due to the utility and insight that loops will afford.
We will note the subtleties that this causes, as they arise.
We denote by $\Gc$ the \emph{complement} of $G$, having vertices $V(G)$ and edges
$x \sim_\Gc y \iff x \not\sim_G y, x \ne y$.
For graphs with loops it is also common to use as the complement the graph with edges
$x \sim_\Gc y \iff x \not\sim_G y$.  Fortunately, we will only consider the complement of
loop graphs that have loops on all vertices, and in this case the two definitions
coincide.
A \emph{clique} is a set of vertices $C \subseteq V(G)$ such that $x \sim y$ for all $x, y
\in C, x \ne y$.  An \emph{independent set} is a clique of $\Gc$, equivalently a set
$C \subseteq V(G)$ such that $x \not\sim y$ for all $x, y \in C, x \ne y$.
The \emph{clique number} $\omega(G)$ is the size of the largest clique, and the
\emph{independence number} $\alpha(G)$ is the size of the largest independent set.
A \emph{proper coloring} of $G$ is a map $f : G \to \{1,\cdots,n\}$ (an assignment of
\emph{colors} to the vertices of $G$) such that $f(x) \ne f(y)$
whenever $x \sim y$ (note that this is only possible for graphs with no loops).
The \emph{chromatic number} $\chi(G)$ is the smallest possible number of colors needed.
If no proper coloring exists (i.e.\ if $G$ has loops) then $\chi(G) = \infty$.
The \emph{complete graph} $K_n$ has vertices $\{ 1,\cdots,n \}$ and edges
$x \sim y \iff x \ne y$ (note in particular that $K_n$ does not have loops).
$G$ is a \textit{subgraph} of $H$ if $V(G) \subseteq V(H)$ and
$x \sim_G y \implies x \sim_H y$.

Suppose Alice wishes to send a message to Bob through a noisy classical channel $\cN : S \to V$
such that Bob can decode Alice's message with zero chance of error.
How big of a message can be sent?
Denote by $\cN(v|s)$ the probability that sending $s \in S$ through $\cN$ will result in Bob
receiving $v \in V$, and define the graph $H$ with vertex set $S$ and with edges
\begin{align}
    \label{eq:distinguishgraph}
    s \sim_H t \iff \cN(v|s) \cN(v|t) = 0 \textrm{ for all } v \in V.
\end{align}
Two codewords $s$ and $t$ can be distinguished with certainty by Bob if they are
never mapped to the same $v$.
Therefore, the largest set of distinguishable codewords corresponds to the largest clique in
$H$, and the number of such codewords is the clique number $\omega(H)$.
We will call $H$ the \emph{distinguishability graph} of the channel $\cN$.
It is traditional to instead deal with the \emph{confusability graph}, of
which~\eqref{eq:distinguishgraph} is the complement.
We choose to break with this tradition as this will lead to cleaner notation.
Also the distinguishability graph has the advantage of not having loops, making it more
natural from a graph-theoretic perspective.
In order to facilitate comparison to prior results we will sometimes speak of
$\alpha(\Hc)$ rather than $\omega(H)$ (note that these are equal).

If Bob already has some side information regarding the message Alice wishes to send, the
communication task becomes easier: the number of codewords is no longer limited to $\omega(H)$.
This situation is known as \emph{source-channel coding}.
We will use a slightly generalized version of source-channel coding, as this will aid in the
quantum generalization in \cref{sec:qsrcchan}.
Suppose Charlie chooses a value $i$ and sends a value $x$ to Alice and $u$ to Bob with
probability $P(x,u|i)$.
Alice sends Bob a message through a noisy channel.
Bob uses Alice's noisy message, along
with his side information $u$, to deduce Charlie's input $i$ (\cref{fig:coding}).
This reduces to standard source-channel coding if $P(x,u|i) \ne 0$ only when $x=i$.
In other words, the standard scenario has no Charlie, $x$ and $u$ come in with probability $P(x,u)$,
and Bob is supposed to produce $x$.

\begin{figure}
    \centering
    \includegraphics{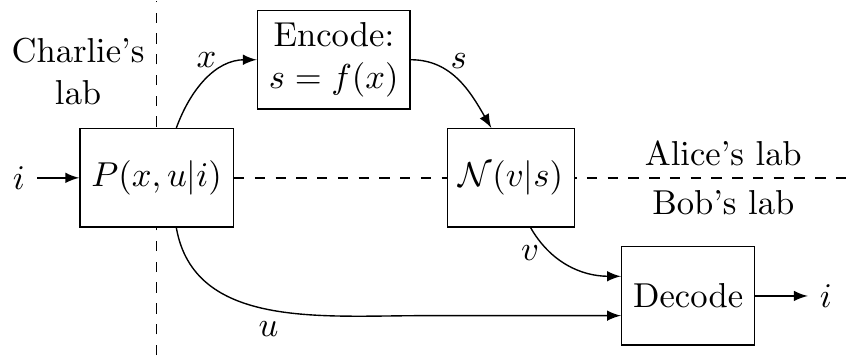}
    \caption{
        Zero-error source-channel coding.
    }
    \label{fig:coding}
\end{figure}

There are a number of reasons one might wish to consider such a scenario.
For instance, suppose that $x=i$ always.  The side information $u$ might have originated from a
previous noisy transmission of $x$ from Alice to Bob.  The goal is to resend using channel
$\chan{N}$ in order to fill in the missing information.
Or, the communication complexity of bipartite function evaluation fits into this model.
Suppose that Alice and Bob receive $x$ and $u$, respectively, from a referee Charlie.  Alice
must send a message to Bob such that Bob may evaluate some function $g(x,u)$.
To fit this into the model of \cref{fig:coding}, imagine that Charlie first chooses a value $i$
for $g$, then sends Alice and Bob some $x,u$ pair such that $g(x,u)=i$.
From the perspective of Alice and Bob, determining $i$ is equivalent to evaluating $g(x,u)$.
One may ask how many bits Alice needs to send to Bob to accomplish this.

In general, Alice's strategy is to encode her input $x$ using some function $f : X \to S$
before sending it through the channel (a randomized strategy never helps when zero-error
is required).
As before, Bob receives a value $v$ with probability $\cN(v|s)$.
The values $u$ and $v$ must be sufficient for Bob to compute $i$.
For a given $u$, Bob knows Alice's input comes from the set
$\{x : \exists i \textrm{ such that } P(x,u|i) \ne 0\}$.
Bob only needs to distinguish between the values of $x$ corresponding to different $i$, since
his goal is to determine $i$.
Define a graph $G$ with vertices $V(G)=X$ and with edges between Alice inputs that Bob
sometimes needs to distinguish:
\begin{align}
    \label{eq:chargraph}
    x \sim_G y \iff \exists u, \exists i \ne j \textrm{ s.t. }
        P(x,u|i) P(y,u|j) \ne 0.
\end{align}
This is the \emph{characteristic graph} of the source $P$.
If Bob must sometimes distinguish $x$ from $y$ then Alice's encoding must ensure that $x$ and
$y$ never get mapped to the same output by the noisy channel.
In other words, her encoding must satisfy $f(x) \sim_H f(y)$ whenever $x \sim_G y$.
By definition, this is possible precisely when $G$ is homomorphic to $H$.

\begin{definition}
    \label{def:homc}
    Let $G$ and $H$ be graphs without loops.
    $G$ \emph{is homomorphic to} $H$, written $G \homm H$, if there is
    a function $f : V(G) \to V(H)$ such that $x \sim y \implies f(x) \sim f(y)$.
    The function $f$ is said to be a \emph{homomorphism from $G$ to $H$}.
\end{definition}

Graph homomorphisms are examined in great detail in~\cite{hahn1997graph,HellNesetril200409}.
We state here some basic facts that can be immediately verified.
\begin{proposition}
    \label{thm:homfacts}
    Let $F,G,H$ be graphs without loops.
    \begin{enumerate}
        \item If $F \homm G$ and $G \homm H$ then $F \homm H$.
        \item If $G$ is a subgraph of $H$ then $G \homm H$.
        \item The clique number $\omega(H)$
            is the largest $n$ such that $K_n \homm H$.
        \item The chromatic number $\chi(G)$ is the smallest $n$ such
            that $G \homm K_n$.
    \end{enumerate}
\end{proposition}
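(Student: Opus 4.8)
All four parts follow by unwinding \cref{def:homc}; the plan is to treat them in turn, flagging the one place where the loopless hypotheses actually do any work.

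For part 1, the plan is simply to check that a composition of homomorphisms is a homomorphism: if $f : V(F) \to V(G)$ realizes $F \homm G$ and $g : V(G) \to V(H)$ realizes $G \homm H$, then for $x \sim_F y$ we get $f(x) \sim_G f(y)$ and hence $(g \circ f)(x) \sim_H (g \circ f)(y)$, so $g \circ f$ realizes $F \homm H$. For part 2, the inclusion map $V(G) \hookrightarrow V(H)$ is a homomorphism, since $x \sim_G y \implies x \sim_H y$ is exactly the defining condition for $G$ to be a subgraph of $H$.

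For part 3, the key observation is that any homomorphism $f : K_n \homm H$ must be injective: since every pair of distinct vertices of $K_n$ is adjacent, $f(i) = f(j)$ with $i \ne j$ would force a loop $f(i) \sim_H f(i)$, impossible because $H$ has no loops. Adjacency of all distinct pairs in $K_n$ then makes $f(V(K_n))$ an $n$-element clique of $H$; conversely an $n$-clique $\{v_1, \dots, v_n\}$ of $H$ yields the homomorphism $i \mapsto v_i$. Hence $K_n \homm H$ holds precisely when $H$ contains a clique of size $n$, and the largest such $n$ is $\omega(H)$ by definition.

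For part 4, the plan is to note that $a \sim_{K_n} b \iff a \ne b$, so a map $f : V(G) \to \{1,\dots,n\}$ satisfies $x \sim_G y \implies f(x) \sim_{K_n} f(y)$ if and only if $x \sim_G y \implies f(x) \ne f(y)$, i.e.\ exactly when $f$ is a proper coloring of $G$ using $n$ colors; the smallest such $n$ is $\chi(G)$. (If $G$ has a loop, neither a proper coloring nor such a homomorphism exists, consistent with the convention $\chi(G) = \infty$.) I do not anticipate any genuine obstacle; the only care needed in the final write-up is to make the loop bookkeeping explicit in parts 3 and 4 rather than gloss over it, since that is precisely what the caveats in the surrounding text are warning about.
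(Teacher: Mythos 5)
Your proof is correct and is precisely the routine verification the paper omits (it states these as "basic facts that can be immediately verified" and gives no proof). The loop bookkeeping you flag in parts 3 and 4 — injectivity of any homomorphism out of $K_n$ because $H$ is loop-free, and the identification of homomorphisms into $K_n$ with proper colorings — is exactly the right place to be careful, and you handle it correctly.
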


The above arguments can be summarized as follows.

\begin{proposition}
    \label{thm:src_chan_hom}
    There exists a zero-error source-channel coding protocol for source $P(x,u|i)$ and channel
    $\chan{N}(v|s)$ if and only if $G \homm H$ where
    $G$ is the characteristic graph of the source,~\eqref{eq:chargraph}, and
    $H$ is the distinguishability graph of the channel,~\eqref{eq:distinguishgraph}.
\end{proposition}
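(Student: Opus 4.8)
The plan is to prove both directions of the equivalence directly from the definitions that have already been set up in the text, essentially by making precise the informal argument given in the paragraphs preceding the statement. Throughout, $G$ is the characteristic graph of the source $P$ on vertex set $X$ as in~\eqref{eq:chargraph}, and $H$ is the distinguishability graph of $\cN$ on vertex set $S$ as in~\eqref{eq:distinguishgraph}; note both are loopless, so \cref{def:homc} applies.

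First I would handle the ``if'' direction. Suppose $f : V(G) \to V(H)$ is a homomorphism, i.e.\ $x \sim_G y \implies f(x) \sim_H f(y)$. I claim Alice's strategy of sending $f(x)$ through $\cN$, together with an appropriate decoder for Bob, succeeds. Given his side information $u$ and a received channel output $v$, Bob outputs any $i$ for which there exists $x$ with $P(x,u|i) \ne 0$ and $\cN(v|f(x)) \ne 0$. I must check this is unambiguous whenever the true input was some $x_0$ with $P(x_0, u|i_0) \ne 0$ and Bob received $v$ with $\cN(v|f(x_0)) \ne 0$. If Bob's rule also admitted some $j \ne i_0$, there would be $x_1$ with $P(x_1,u|j) \ne 0$ and $\cN(v|f(x_1)) \ne 0$. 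Then by~\eqref{eq:chargraph} (with the pair $x_0, x_1$, side information $u$, and indices $i_0 \ne j$) we get $x_0 \sim_G x_1$, hence $f(x_0) \sim_H f(x_1)$, hence by~\eqref{eq:distinguishgraph} $\cN(v|f(x_0))\cN(v|f(x_1)) = 0$, contradicting that both factors are nonzero. (One also notes $i_0$ itself is always admitted, so Bob's output set is nonempty and equals $\{i_0\}$.) Hence the protocol is zero-error.

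Next, the ``only if'' direction. Suppose a zero-error protocol exists. Since zero-error permits restricting to deterministic encoders (as remarked in the text), Alice's encoding is a function $f : X \to S$. I must show $f$ is a graph homomorphism $G \homm H$, i.e.\ that $x \sim_G y$ forces $f(x) \sim_H f(y)$. Take $x \sim_G y$: by~\eqref{eq:chargraph} there are $u$ and $i \ne j$ with $P(x,u|i) \ne 0$ and $P(y,u|j) \ne 0$. Suppose for contradiction $f(x) \not\sim_H f(y)$; then by~\eqref{eq:distinguishgraph} there is some $v$ with $\cN(v|f(x)) \ne 0$ and $\cN(v|f(y)) \ne 0$. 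Now consider two runs of the protocol: in the first, Charlie picks $i$ and sends $(x,u)$; in the second, Charlie picks $j$ and sends $(y,u)$. Both have positive probability by choice of $u,i,j$. In each run Bob's view is $(u, v)$ with positive probability (Alice sends $f(x)$ resp.\ $f(y)$, which $\cN$ maps to $v$ with positive probability). But Bob is a fixed function of $(u,v)$ and must output $i$ in the first case and $j \ne i$ in the second — impossible. Hence $f(x) \sim_H f(y)$, so $f$ is a homomorphism.

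I expect no serious obstacle here; the main subtlety worth being careful about is the quantifier structure in~\eqref{eq:chargraph} and~\eqref{eq:distinguishgraph} — making sure that in the ``if'' direction we apply~\eqref{eq:chargraph} to the \emph{specific} witnesses produced by Bob's ambiguity, and that the contradiction uses the existence (not the for-all) clause of~\eqref{eq:distinguishgraph} correctly — and the mild point that $G,H$ loopless is exactly what is needed for \cref{def:homc} to be the right notion. I would close by remarking that this is simply a restatement of \cref{def:homc} in the language of the coding problem, as the text's informal discussion already indicated.
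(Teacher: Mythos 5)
Your proof is correct and is essentially a careful formalization of the paper's own argument, which appears as the informal discussion immediately preceding the proposition (Alice encodes via a deterministic $f$, Bob must distinguish $x$ from $y$ exactly when $x\sim_G y$, and this is possible iff $f(x)\sim_H f(y)$). The only point worth noting is that you assume $G$ is loopless at the outset, whereas the paper defers this to the remark following the proposition; but this does not affect the substance.
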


As required by \cref{def:homc}, neither $G$ nor $H$ have loops.
More precisely, $G$ has a loop if and only if there is an $x,u$ that can occur for two
different inputs by Charlie.  In this case it is impossible for Alice and Bob to recover
Charlie's input, no matter how much communication is allowed.

We emphasize that, although we refer to source-channel coding and use the associated
terminology, we are actually considering something
a bit more general since we use a source $P(x,u|i)$, with Bob answering $i$, rather than a
source $P(x,u)$, with Bob answering $x$.
Standard source-channel coding, which can be recovered by setting
$P(x,u|i) \ne 0 \iff x=i$, was characterized in terms of graph homomorphisms in~\cite{1705019}.
Our generalization does not substantially change the theory,\footnote{
    Although, for our generalization extra care needs to be taken when considering
    block coding.  This will be discussed in \cref{sec:repetitions}.
} and will allow a smoother transition to the quantum version (in the next section).

The Lov{\'a}sz number of the complementary graph, $\thbar(G) = \vartheta(\Gc)$, is given
by the following dual (and equivalent) semidefinite
programs:~\cite{lovasz79,lovaszsemidef}\footnote{
    The first of these follows from theorem 6 of~\cite{lovasz79} by setting
    $T=A/\abs{\lambda_n(A)}$ (note that in~\cite{lovasz79} vertices are considered
    adjacent to themselves).
    The second comes from page 167 of~\cite{lovaszsemidef}, or from
    theorem 3 of~\cite{lovasz79} by taking $Z=\lambda I-A+J$ with $\lambda$ being
    the maximum eigenvalue of $A$.
}
\begin{align}
    \thbar(G) &= \max\{ \opnorm{I + T} : I + T \succeq 0, \notag
        \\ &\hphantom{= \max\{\;}
        T_{ij}=0 \textnormal{ for } i \not\sim j \},
    \\ \thbar(G) &= \min\{ \lambda : \exists Z \succeq J, Z_{ii} = \lambda, \notag
        \\ &\hphantom{= \max\{\;}
        Z_{ij} = 0 \textnormal{ for } i \sim j \},
\end{align}
where we assume that $G$ has no loops.
The norm here is the operator norm (equal to the largest singular value),
$J$ is the matrix with every entry equal to 1, and
$Z \succeq J$ means that $Z-J$ is positive semidefinite.
This quantity is a homomorphism monotone in the sense
that~\cite{de2013optimization}
\begin{align}
    \label{eq:thcmon}
    G \homm H \implies \thbar(G) \le \thbar(H).
\end{align}
Consequently (see \cref{thm:homfacts}) we have the
Lov{\'a}sz sandwich theorem
\begin{align}
    \label{eq:sandwich}
    \omega(G) \le \thbar(G) \le \chi(G).
\end{align}
Since source-channel coding is only possible when $G \homm H$, it follows that
$\thbar(G) \le \thbar(H)$ is a necessary condition.
Two related quantities, Schrijver's $\thmbar$ and Szegedy's $\thpbar$, which will be
defined in \cref{sec:schrijver}, have similar monotonicity
properties~\cite{de2013optimization} so they provide similar bounds.

\begin{proposition}
    \label{thm:coding_mon_classical}
    One-shot source-channel coding is possible only if
    $\thbar(G) \le \thbar(H)$, $\thmbar(G) \le \thmbar(H)$, and
    $\thpbar(G) \le \thpbar(H)$, where $G$ is the characteristic graph of the
    source,~\eqref{eq:chargraph}, and $H$ is the distinguishability graph of the
    channel,~\eqref{eq:distinguishgraph}.
\end{proposition}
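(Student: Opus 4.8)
The plan is to obtain \cref{thm:coding_mon_classical} as an immediate corollary of the reduction to graph homomorphisms in \cref{thm:src_chan_hom}, combined with the homomorphism monotonicity \eqref{eq:thcmon} of the Lov{\'a}sz number and its analogues for Schrijver's $\thmbar$ and Szegedy's $\thpbar$. Concretely, suppose a one-shot source-channel coding protocol exists for the source $P(x,u|i)$ and channel $\cN(v|s)$. By \cref{thm:src_chan_hom} this is equivalent to $G \homm H$, where $G$ is the characteristic graph \eqref{eq:chargraph} of the source and $H$ the distinguishability graph \eqref{eq:distinguishgraph} of the channel. As noted after that proposition, both $G$ and $H$ are loopless ($H$ by construction, and a loop in $G$ would already preclude recovery of Charlie's input), so $\thbar$, $\thmbar$, and $\thpbar$ are defined on both graphs.

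The main step is then just to feed the homomorphism $f : V(G) \to V(H)$ into the monotonicity statements: \eqref{eq:thcmon} gives $\thbar(G) \le \thbar(H)$, and the corresponding monotonicity of the Schrijver and Szegedy numbers (\cite{de2013optimization}) gives $\thmbar(G) \le \thmbar(H)$ and $\thpbar(G) \le \thpbar(H)$. These are exactly the three claimed necessary conditions. If one wants a self-contained argument for the latter two, the route is the standard one: a feasible dual matrix $Z$ for $H$ pulls back to $Z'_{xy} = Z_{f(x)f(y)}$, which is feasible for $G$ with the same objective value because the pullback preserves the diagonal value, sends edges of $G$ to entries forced to vanish (since $x \sim_G y \Rightarrow f(x) \sim_H f(y)$), and copies entries verbatim so the extra entrywise sign constraints defining $\thmbar$ and $\thpbar$ survive; positive semidefiniteness is preserved since the pullback is congruence by the $0/1$ incidence matrix of $f$.

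The only bookkeeping subtlety is that $\thmbar$ and $\thpbar$ are not introduced until \cref{sec:schrijver}, so the intended reading is that the $\thbar$ portion is established here and the $\thmbar, \thpbar$ portions follow once those quantities and their SDP descriptions are in hand. In fact all three classical monotonicity facts used here reappear as special cases of the non-commutative-graph versions proved in \cref{sec:qthmon,sec:schrijver}, so no genuinely separate work is needed. There is no real obstacle: the substance of the proposition is the homomorphism characterization of \cref{thm:src_chan_hom} together with standard properties of the Lov{\'a}sz-type numbers, and the interesting generalizations are the quantum ones developed in the remainder of the paper.
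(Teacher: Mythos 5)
Your proposal is correct and matches the paper's own (implicit) argument exactly: the paper derives this proposition from the homomorphism characterization of \cref{thm:src_chan_hom} together with the monotonicity of $\thbar$, $\thmbar$, and $\thpbar$ under graph homomorphisms, citing~\cite{de2013optimization} for the latter two. Your optional pullback argument for the dual SDPs is a correct way to make the cited monotonicity self-contained, but no new idea is needed beyond what the paper already uses.
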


Traditionally, source-channel coding has been studied in the case where
$P(x,u|i) \ne 0$ only when $x=i$.
In this case, the following bound holds~\cite{1705019}:\footnote{
    Actually,~\cite{1705019} seems to have stopped just short of stating such a bound,
    although they lay all the necessary foundation.
}
\begin{proposition}
    \label{thm:costrate_classical}
    Suppose $P(x,u|i) \ne 0$ only when $x=i$ and let graphs $G$ and $H$ be given
    by~\eqref{eq:chargraph} and~\eqref{eq:distinguishgraph}.
    Then $m$ parallel instances of the source can be sent using $n$ parallel instances of
    the channel only if
    \begin{align*}
        \frac{n}{m} \ge \frac{\log \thbar(G)}{\log \thbar(H)}.
    \end{align*}
\end{proposition}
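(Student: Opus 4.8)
The plan is to reduce the $m$-parallel / $n$-parallel coding problem to a single homomorphism question and then apply the monotonicity of $\thbar$ together with its behaviour under graph products. First I would observe that sending $m$ parallel instances of the source $P(x,u|i)$ amounts to source-channel coding for the product source whose characteristic graph is the appropriate product of $m$ copies of $G$; since we are in the case $P(x,u|i)\ne 0\iff x=i$, two input tuples $(x_1,\dots,x_m)$ and $(y_1,\dots,y_m)$ must be distinguished by Bob precisely when they differ (and hence must be distinguished) in at least one coordinate, which is exactly the edge relation of the OR-product (co-normal product) $G^{*m}$ of $m$ copies of $G$. Similarly, using $n$ parallel instances of the channel $\cN$ gives a channel whose distinguishability graph is the strong product $H^{\boxtimes n}$: two codeword tuples are distinguishable iff they are distinguishable in some coordinate, i.e. non-adjacent in the confusability graph in every coordinate, which is the strong product on the distinguishability side. (These product identities are standard; I would state them as a short lemma, taking care with the footnoted subtlety about block coding for the generalized source, but in the restricted case $x=i$ no such subtlety arises.) By \cref{thm:src_chan_hom}, the block code exists if and only if $G^{*m} \homm H^{\boxtimes n}$.

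Next I would invoke the two multiplicativity facts for $\thbar$: $\thbar(G^{*m}) = \thbar(G)^m$ and $\thbar(H^{\boxtimes n}) = \thbar(H)^n$. Both are classical properties of the Lov\'asz number — $\vartheta$ is multiplicative under the strong product, and $\thbar(G)=\vartheta(\Gc)$ converts the co-normal product of $G$ into the strong product of $\Gc$, on which $\vartheta$ again multiplies. Combining this with the homomorphism monotonicity~\eqref{eq:thcmon}, the existence of the block code forces
\begin{align*}
    \thbar(G)^m = \thbar(G^{*m}) \le \thbar(H^{\boxtimes n}) = \thbar(H)^n.
\end{align*}
Taking logarithms (note $\thbar(H)\ge 1$, so $\log\thbar(H)\ge 0$, and the inequality is vacuous or the hypothesis forces $\thbar(H)>1$ when $\thbar(G)>1$) and rearranging yields $n/m \ge \log\thbar(G)/\log\thbar(H)$, which is the claim.

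The main obstacle is bookkeeping rather than conceptual: I must make sure the two product constructions on the source and channel sides are the \emph{right} products — co-normal (OR) on $G$ and strong (AND) on $H$ — and that this is exactly the pairing under which the homomorphism statement $G^{*m}\homm H^{\boxtimes n}$ is equivalent to feasibility of the block code. The asymmetry here (different products on the two sides) is the subtle point, and it is precisely why $\thbar=\vartheta(\Gc)$, which is multiplicative under both relevant products in the appropriate sense, is the natural quantity to use; Schrijver's and Szegedy's numbers would require checking analogous multiplicativity, which is why only the $\thbar$ bound is stated here. A secondary point to handle cleanly is the degenerate case $\thbar(H)=1$ (equivalently $H$ has no edges), where no block code exists unless also $\thbar(G)=1$, so the stated inequality should be read with the convention that it is vacuous when the right-hand side is $0/0$ or forces $\infty$ otherwise; I would dispatch this in a sentence.
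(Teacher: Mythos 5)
Your overall strategy (reduce to a single homomorphism, then combine multiplicativity of $\thbar$ under the relevant products with the monotonicity~\eqref{eq:thcmon}) is the same as the paper's, but you have the two graph products swapped, and this makes your key reduction false as stated. For the source: two input tuples must be distinguished by Bob iff they differ somewhere \emph{and} are jointly compatible with some side-information tuple $u$, i.e.\ in every coordinate they are either equal or adjacent in $G$, and they are not identical. That is the \emph{strong} product $G^{\boxtimes m}$ of~\eqref{eq:strong_prod} (Witsenhausen), not the OR-product: if $x_k \ne y_k$ and $x_k \not\sim_G y_k$ in some coordinate, Bob's side information already rules out one of the two tuples, so no edge is present, whereas your co-normal product would put one there. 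For the channel: output tuples are perfectly distinguishable iff they are distinguishable in at least one coordinate, which is the \emph{disjunctive} (co-normal) product $H^{\djp n}$ of the distinguishability graphs, as in~\eqref{eq:disjunctive_classical}; equivalently, the confusability graphs combine by the strong product. The correct equivalence is therefore ``block code exists iff $G^{\boxtimes m} \homm H^{\djp n}$.'' Your claimed equivalence, $G^{*m} \homm H^{\boxtimes n}$, is a strictly stronger condition (one checks $G^{\boxtimes m} \subseteq G^{*m}$ and $H^{\boxtimes n} \subseteq H^{\djp n}$), and it does not follow from the existence of a block code, so the chain of implications breaks at its first link.

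The final inequality you reach happens to be correct anyway, because $\vartheta$ (hence $\thbar$) is multiplicative under \emph{both} the strong and the co-normal products, so the identities $\thbar(G^{*m}) = \thbar(G)^m$ and $\thbar(H^{\boxtimes n}) = \thbar(H)^n$ are true statements --- they are just not the ones the argument needs. With the products corrected, your proof becomes essentially the paper's: the paper establishes the more general \cref{thm:generalized_costrate} by showing that the composite source's characteristic graph contains $G^{\boxtimes m}$ as a subgraph, then using $\thbar(G^{\boxtimes m}) = \thbar(G)^m$, $\thbar(H^{\djp n}) = \thbar(H)^n$, and monotonicity under homomorphisms. Your remark about the degenerate case $\thbar(H)=1$ is fine but inessential.
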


We will always take logarithms to be base 2.
The infimum of $n/m$ (equivalent to the limit as $m \to \infty$) is known as the
\emph{cost rate}; \cref{thm:costrate_classical} can be interpreted as an upper bound on
the cost rate.
This bound relies on the fact that $\thbar$ is multiplicative under various graph
products, a property not shared by $\thmbar$ or $\thpbar$.
\Cref{thm:coding_mon_classical,thm:costrate_classical}
apply also to the case of entanglement assisted source-channel coding,
still with classical inputs and a classical channel~\cite{6880319}.
We will later show (\cref{thm:generalized_costrate}) that the condition $P(x,u|i) \ne 0$
only when $x=i$ is not necessary in \cref{thm:costrate_classical}.

With some interesting caveats, these two theorems in fact also apply to a generalization
of source-channel coding in which the source produces bipartite entangled states and in
which the channel is quantum.
The rest of this paper is devoted to development of this theory.

\section{Non-commutative graph theory}
\label{sec:ncgraph}

Given a graph $G$ on vertices $V(G) = \{1,\dots,n\}$ we may define the operator space
\begin{align}
    \label{eq:SfromG}
    S &= \linspan\{ \ket{x}\bra{y} : x \sim y \} \subseteq \linop{\mathbb{C}^n}
\end{align}
where $\ket{x}$ and $\ket{y}$ are basis vectors from the standard basis.
Because we consider symmetric rather than directed graphs, this space is Hermitian:
$A \in S \iff A^\dag \in S$ (more succinctly, $S = S^\dag$).
If $G$ has no loops, $S$ is trace-free (it consists only of trace-free operators).
If $G$ has loops on all vertices, $S$ contains the identity.

Concepts from graph theory can be rephrased in terms of such operator spaces.
For example, for trace-free spaces the clique number can be defined as the size of the largest
set of nonzero vectors
$\{\ket{\psi_i}\}$ such that $\ket{\psi_i}\bra{\psi_j} \in S$ for all $i \ne j$.
Note that since $S$ is trace-free, these vectors must be orthogonal.
Although not immediately obvious, this is indeed equivalent to $\omega(G)$ when $S$ is defined
as in~\eqref{eq:SfromG}.

Having defined clique number in terms of operator spaces, one can drop the requirement
that $S$ be of the form~\eqref{eq:SfromG} and can speak of the clique number of an arbitrary
Hermitian subspace.
Such subspaces, thought of in this way, are called
\emph{non-commutative graphs}~\cite{dsw2013}.
Note that~\cite{dsw2013} requires $S$ to contain the identity, but we drop this requirement and
insist only that $S=S^\dag$.
Such a generalization is analogous to allowing the vertices of a graph to not have loops.
Dropping also the condition $S = S^\dag$ would give structures analogous to directed graphs,
however we will not have occasion to consider this.

To draw clear distinction between non-commutative graphs and the traditional kind, we will
often refer to the latter as \emph{classical graphs}.
We will say $S$ \emph{derives from a classical graph} if $S$ is of the
form~\eqref{eq:SfromG}.

The distinguishability graph of a quantum channel $\chan{N} : \linop{A} \to
\linop{B}$ with Kraus operators $\{N_i\}$ can be defined as
\begin{align}
    \label{eq:SNiNj}
    T &= (\linspan\{ N_i^\dag N_j : \forall i,j \})^\perp \subseteq \linop{A}
\end{align}
where $\perp$ denotes the perpendicular subspace under the Hilbert--Schmidt inner product
$\ip{X,Y} = \Tr(X^\dag Y)$.
For a classical channel this is equal to~\eqref{eq:SfromG} with $G$ given
by~\eqref{eq:distinguishgraph}.
The space $\linspan\{ N_i^\dag N_j \}$ (the confusability graph) was considered in
\cite{arxiv:0709.2090,arxiv:0906.2527,arxiv:1301.1166,dsw2013}; however, we consider
the perpendicular space for the same reason that we considered the distinguishability rather
than the confusability graph in \cref{sec:classical}: it leads to simpler notation
especially when discussing homomorphisms.
It will be convenient to use the notation
\begin{align*}
    N := \linspan\{ N_i \},
\end{align*}
and likewise for other sets
of Kraus operators so that~\eqref{eq:SNiNj} becomes simply
\begin{align}
    \label{eq:qm_nonconf}
    T = (N^\dag N)^\perp,
\end{align}
with the multiplication of two operator spaces defined to be the linear span of the
products of operators from the two spaces.
Note that the closure condition for Kraus operators gives
$\sum_i N_i^\dag N_i = I \implies I \in N^\dag N \implies I \perp T$.  Therefore $T$
is trace-free.

In~\cite{dsw2013} a generalization of the Lov{\'a}sz $\vartheta(G)$ number was provided for
non-commutative graphs, which they called $\tilde{\vartheta}(S)$.
We present the definition in terms of $\qthperp(S) := \qth(S^\perp)$, which should be
thought of as a generalization of $\thbar(G) = \vartheta(\Gc)$.
\begin{definition}[\cite{dsw2013}]
    \label{def:qth}
    Let $S \subseteq \linop{A}$ be a trace-free non-commutative graph.
    Let $A'$ be an ancillary system of the same dimension as $A$, and define the vector
    $\ket{\Phi} = \sum_i \ket{i}_A \ot \ket{i}_{A'}$.
    Then $\qthperp(S)$ is defined by the following dual (and equivalent) programs:
    \begin{align}
        \label{eq:qth_primal}
        \qthperp(S) &= \max\{ \opnorm{I+X} : X \in S \ot \linop{A'}, I+X \succeq 0 \},
        \\
        \label{eq:qth_dual}
        \qthperp(S) &= \min\{ \opnorm{\Tr_A Y} : Y \in S^\perp \ot \linop{A'},
            Y \succeq \ket{\Phi}\bra{\Phi} \}.
    \end{align}
    We will use the notation $\qth(S^\perp)=\qthperp(S)$.
\end{definition}
When $S$ derives from loop-free graph $G$ via~\eqref{eq:SfromG}, this reduces to the
standard Lov{\'a}sz number: $\qthperp(S) = \thbar(G)$.
Similarly, when $S$ derives from a graph $G$ having loops on all vertices,
$\qth(S) = \vartheta(G)$.
Analogous to the classical case, $\qthperp(S)$ gives an upper bound on the zero-error
capacity of a quantum channel.  In fact, it even gives an upper bound on the zero-error
entanglement assisted capacity~\cite{dsw2013}.

Independence number for non-commutative graphs has been investigated
in~\cite{arxiv:0709.2090,arxiv:0906.2527,arxiv:1301.1166,dsw2013}, and
in~\cite{dsw2013} the authors posed the question of whether further concepts from graph theory
can be generalized as well.
We carry out this program by generalizing graph homomorphisms, which will in turn lead to a
chromatic number for non-commutative graphs.
These generalized graph homomorphisms will characterize quantum source-channel
coding in analogy to \cref{thm:src_chan_hom}.
In fact, one could \emph{define} non-commutative graph homomorphisms as being the relation that
gives a generalization of \cref{thm:src_chan_hom}, but we choose instead to provide more direct
justification for our definition.

We begin by describing ordinary graph homomorphisms in terms of operator spaces of the
form~\eqref{eq:SfromG}; this will lead to a natural generalization to non-commutative graphs.
Suppose that $S \subseteq \linop{A}$ and $T \subseteq \linop{B}$ are derived from graphs
$G$ and $H$ via~\eqref{eq:SfromG}, and consider a function $f : V(G) \to V(H)$.
In terms of $S$ and $T$, the homomorphism condition $x \sim_G y \implies f(x) \sim_H f(y)$
becomes
\begin{align}
    \label{eq:qhom1}
    \ket{x}\bra{y} \in S \implies \ket{f(x)}\bra{f(y)} \in T,
\end{align}
where $\ket{x}$ and $\ket{y}$ are vectors from the standard basis.
Consider the classical channel that maps $x \to f(x)$.
Viewed as a quantum channel, this can be written as the superoperator
$\chan{E} : \linop{A} \to \linop{B}$ with the action
$\chan{E}(\ket{x}\bra{x}) = \ket{f(x)}\bra{f(x)}$.
%and $\chan{E}(\ket{x}\bra{y})=0$ for $x \ne y$.
The Kraus operators of this channel are $E_x = \ket{f(x)}\bra{x}$.
Again using the notation $E = \linspan\{ E_i \}$,~\eqref{eq:qhom1} can be written
$E S E^\dag \subseteq T$.
The generalization to non-commutative graphs is obtained by dropping the condition that
$\chan{E}$ be a classical channel, allowing instead arbitrary completely positive trace preserving
(CPTP) maps.

\begin{definition}
    \label{def:homq}
    Let $S \subseteq \linop{A}$ and $T \subseteq \linop{B}$ be trace-free non-commutative
    graphs.
    We write $S \homm T$ if there exists a completely positive trace preserving (CPTP)
    map $\chan{E} : \linop{A} \to \linop{B}$
    with Kraus operators $\{E_i\}$ such that
    \begin{align}
        E S E^\dag &\subseteq T \textrm{ or, equivalently,}
        \label{eq:homq_def1}
        \\ E^\dag T^\perp E &\subseteq S^\perp.
        \label{eq:homq_def2}
    \end{align}
    Equivalently, $S \to T$ if and only if there is a Hilbert space $C$ and an isometry
    $J : A \to B \ot C$ such that
    \begin{align}
        &JSJ^\dag \subseteq T \ot \linop{C} \textrm{ or, equivalently,}
        \label{eq:homq_def3}
        \\ &J^\dag (T^\perp \ot \linop{C}) J \subseteq S^\perp.
        \label{eq:homq_def4}
    \end{align}
    We will say that the subspace $E$, or the Kraus operators $\{E_i\}$, or the isometry
    $J$, is a \emph{homomorphism from $S$ to $T$}.
\end{definition}

That~\eqref{eq:homq_def1}-\eqref{eq:homq_def4} are equivalent can be seen as follows.
$\eqref{eq:homq_def1} \iff (\Tr\{ese'^\dag t'\}=0 \; \forall e,e'\in E, s\in S, t'\in T^\perp)
\iff \eqref{eq:homq_def2}$.
Similar reasoning shows $\eqref{eq:homq_def3} \iff \eqref{eq:homq_def4}$, using
$(T \ot \linop{C})^\perp = T^\perp \ot \linop{C}$.
Equivalence of~\eqref{eq:homq_def2} and~\eqref{eq:homq_def4} follows from the fact that
$E=\linspan_{\ket{\phi}}\{ (I \ot \bra{\phi})J \}$ where $J$ is related to $\chan{E}$ by
Stinespring's dilation theorem.

When $S$ and $T$ derive from classical graphs \cref{def:homq} is equivalent to
\cref{def:homc}, as we will now show.

\begin{theorem}
    \label{thm:hom_cq_coincide}
    For non-commutative graphs that derive from classical graphs,
    \cref{def:homc,def:homq} coincide.
    In other words, if $S$ and $T$ derive from graphs $G$ and $H$ according to the
    recipe~\eqref{eq:SfromG} then $G \homm H \iff S \homm T$.
\end{theorem}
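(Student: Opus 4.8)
The plan is to prove the two implications separately, using the Kraus-operator form $ESE^\dag \subseteq T$ of \eqref{eq:homq_def1} in \cref{def:homq}, since expanding it in the standard basis is what makes both directions transparent.

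For $G \homm H \implies S \homm T$, I would take a homomorphism $f : V(G) \to V(H)$ and use the classical channel already described just before \cref{def:homq}: the one with Kraus operators $E_x = \ket{f(x)}\bra{x}$. It is CPTP because $\sum_x E_x^\dag E_x = \sum_x \ket{x}\bra{x} = I$. With $E = \linspan\{E_x\}$, the products $E_x \ket{a}\bra{b} E_y^\dag$ that span $ESE^\dag$ equal $\delta_{xa}\delta_{yb}\,\ket{f(a)}\bra{f(b)}$; since $\ket{a}\bra{b} \in S$ forces $a \sim_G b$ and hence $f(a) \sim_H f(b)$, each such product is either $0$ or lies in $T$. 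Thus $ESE^\dag \subseteq T$, so $S \homm T$.

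For the converse, suppose $\chan{E}$ is CPTP with Kraus operators $\{E_i\}$ and $ESE^\dag \subseteq T$; the idea is to read off a vertex map from how $\chan{E}$ acts on basis states. Set $\rho_x = \chan{E}(\ket{x}\bra{x})$. Trace preservation gives $\Tr\rho_x = 1$, so there is a vertex $f(x) \in V(H)$ with $\bra{f(x)}\rho_x\ket{f(x)} > 0$, equivalently with $\bra{f(x)}E_i\ket{x}\ne 0$ for some $i$. If $x \sim_G y$ then $\ket{x}\bra{y}\in S$, so $E_i\ket{x}\bra{y}E_j^\dag \in T$ for all $i,j$. Picking $i_0,j_0$ with $\bra{f(x)}E_{i_0}\ket{x}\ne 0$ and $\bra{f(y)}E_{j_0}\ket{y}\ne 0$, the $(f(x),f(y))$ matrix entry of $E_{i_0}\ket{x}\bra{y}E_{j_0}^\dag$ is $\bra{f(x)}E_{i_0}\ket{x}\,\overline{\bra{f(y)}E_{j_0}\ket{y}}\ne 0$; but every matrix in $T = \linspan\{\ket{c}\bra{d}:c\sim_H d\}$ has vanishing $(c,d)$ entry whenever $c\not\sim_H d$, so necessarily $f(x)\sim_H f(y)$. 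Hence $f$ is a homomorphism $G\homm H$.

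The step I would flag is the definition of $f$ in the converse: one must know $\rho_x\ne 0$ and that at least one diagonal entry is strictly positive (this is precisely where trace preservation, and thus the trace-free/no-loops setting, enters), and one should observe that which admissible vertex is chosen for $f(x)$ is irrelevant, since the edge test only uses positivity of the single diagonal entry $\bra{f(x)}\rho_x\ket{f(x)}$. Beyond that there is no real obstacle; this is essentially the statement that \cref{def:homq} was engineered to specialize correctly to \cref{def:homc}.
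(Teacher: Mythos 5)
Your proof is correct and follows essentially the same route as the paper's: the forward direction uses the identical Kraus operators $E_x = \ket{f(x)}\bra{x}$, and the converse defines $f(x)$ by picking a nonvanishing component of some $E_i\ket{x}$ (guaranteed by trace preservation) and then shows a nonzero $(f(x),f(y))$ entry of $E_{i_0}\ket{x}\bra{y}E_{j_0}^\dag \in T$ forces $f(x)\sim_H f(y)$. The paper phrases this last step via $\ket{f(x)}\bra{f(y)}\not\in T^\perp$ rather than via vanishing matrix entries, but the content is the same.
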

\begin{proof}
    Let $S$ and $T$ be non-commutative graphs deriving from classical graphs $G$ and $H$.

    ($\implies$)
    Suppose $G \homm H$.
    By \cref{def:homc} there is an $f : G \to H$ such that
    $x \sim_G y \implies f(x) \sim_H f(y)$.
    Consider the set of Kraus operators $E_x = \ket{f(x)}\bra{x}$.
    Then,
    \begin{align*}
        E S E^\dag
        &= \linspan\{ E_i \ket{x}\bra{y} E_j^\dag : i,j, x \sim_G y \}
        \\ &= \linspan\{ \ket{f(x)}\bra{f(y)} : x \sim_G y \}
        \\ &\subseteq T.
    \end{align*}

    ($\impliedby$)
    Suppose $S \homm T$.
    By \cref{def:homq} there is a channel $\chan{E} : \linop{A} \to \linop{B}$
    such that $E S E^\dag \subseteq T$.
    For each vertex $x$ of $G$, there is an $i(x)$ such that $E_{i(x)} \ket{x}$ does not vanish.
    Pick an arbitrary nonvanishing index of the vector $E_{i(x)} \ket{x}$ and call this
    $f(x)$ so that $\bra{f(x)} E_{i(x)} \ket{x} \ne 0$.

    Now consider any edge $x \sim_G y$.
    We have
    \begin{align*}
        \ket{x}\bra{y} \in S
        &\implies E \ket{x}\bra{y} E^\dag \in T
        \\ &\implies E_{i(x)} \ket{x}\bra{y} E^\dag_{i(y)} \in T.
    \end{align*}
    Define $\tau := E_{i(x)} \ket{x}\bra{y} E^\dag_{i(y)}$.
    Then $\tau \in T$ and
    \begin{align*}
        \braopket{f(x)}{\tau}{f(y)} \ne 0
        &\implies \Tr\{ \tau \ket{f(y)}\bra{f(x)} \} \ne 0
        \\ &\implies \ket{f(x)}\bra{f(y)} \not\in T^\perp
        \\ &\implies \ket{f(x)}\bra{f(y)} \in T
        \\ &\implies f(x) \sim_H f(y).
    \end{align*}
    Therefore $x \sim_G y \implies f(x) \sim_H f(y)$.
\end{proof}

\Cref{def:homq} could be loosened to require only that $\sum_i E_i^\dag E_i$ be
invertible (equivalently $E \ket{\psi} \ne \{0\}$ for all $\ket{\psi}$, equivalently
$J^\dag J$ invertible)
rather than $\chan{E}$ being trace preserving.
\Cref{thm:hom_cq_coincide} would still hold; however, \cref{def:homq} as currently stated
has an operational interpretation in terms of quantum source-channel coding (which we will
introduce in \cref{sec:qsrcchan}) and satisfies the monotonicity relation
$S \homm T \implies \qthperp(S) \le \qthperp(T)$ (which we will show in \cref{sec:qthmon}).
Hilbert space structure seems to be important for non-commutative graphs, so it is
reasonable to require that $J$ preserve this structure (i.e.\ $J$ should be an isometry).

As a guide to the intuition, one should not think of $ESE^\dag$ in~\eqref{eq:homq_def1} as
density operators $\rho \in S$ going into a channel, like $\sum_i E_i \rho E_i^\dag$, but
rather as a mechanism for
comparing the action of the channel on two different states, something like
$\{ E_i \ket{\psi}\bra{\phi} E_j^\dag : \forall i,j \}$ with
$\ket{\psi}\bra{\phi} \in S$.
But this is only a rough intuition, as $S$ might not necessary be composed of dyads
$\ket{\psi}\bra{\phi}$.
The two copies of $E$ here are analogous to the two Kraus operators appearing in the
Knill--Laflamme condition, which we will explore in \cref{sec:qsrcchan}.
Note that $E \ket{\psi}$ is equal to the support of
$\chan{E}(\ket{\psi}\bra{\psi})$.

The non-commutative graph homomorphism of \cref{def:homq} satisfies properties analogous to
those of \cref{thm:homfacts}.

\begin{proposition}
    Let $R, S, T$ be trace-free non-commutative graphs.
    \begin{enumerate}
        \item \label{thm:qhom_facts_trans}
            If $R \homm S$ and $S \homm T$ then $R \homm T$.
        \item \label{thm:qhom_facts_subg}
            If $S \subseteq T$ then $S \homm T$.
            More generally, if $J$ is an isometry and $JSJ^\dag \subseteq T$ then
            $S \homm T$.
    \end{enumerate}
\end{proposition}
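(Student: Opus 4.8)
The plan is to treat the two parts separately; each reduces to a short manipulation of operator-space products together with the equivalences collected in \cref{def:homq}.

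For part~\ref{thm:qhom_facts_trans}, I would take a CPTP map $\chan{E}$ witnessing $R \homm S$, with Kraus operators $\{E_i\}$, so that $ERE^\dag \subseteq S$, and a CPTP map $\chan{F}$ witnessing $S \homm T$, with Kraus operators $\{F_j\}$, so that $FSF^\dag \subseteq T$; then propose the composite $\chan{G} := \chan{F} \circ \chan{E}$. This is CPTP, being a composition of CPTP maps; explicitly it has Kraus operators $\{F_j E_i\}$, and trace preservation is the one-liner $\sum_{i,j}(F_j E_i)^\dag (F_j E_i) = \sum_i E_i^\dag\!\left(\sum_j F_j^\dag F_j\right)\!E_i = \sum_i E_i^\dag E_i = I$. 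Writing $G := \linspan\{F_j E_i\}$, the convention that the product of two operator spaces is the span of products of their elements gives $G = FE$; hence, since that product is associative and $(FE)^\dag = E^\dag F^\dag$, we get $GRG^\dag = F\big(ERE^\dag\big)F^\dag \subseteq FSF^\dag \subseteq T$, and \eqref{eq:homq_def1} then yields $R \homm T$.

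For part~\ref{thm:qhom_facts_subg}, the general statement is almost immediate from \cref{def:homq}: given an isometry $J$ with $JSJ^\dag \subseteq T$, the map $\rho \mapsto J\rho J^\dag$ is CPTP (single Kraus operator $J$, with $J^\dag J = I$ giving trace preservation), and with $E := \linspan\{J\}$ one has $ESE^\dag = JSJ^\dag \subseteq T$, so $S \homm T$. (Equivalently, this is the isometric characterization~\eqref{eq:homq_def3} with a one-dimensional ancilla, so that $T \ot \linop{C} \cong T$.) The special case $S \subseteq T$, with $S$ and $T$ subspaces of a common $\linop{A}$, is recovered by taking $J = I_A$.

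I do not anticipate a genuine obstacle; both parts are bookkeeping. The one point that deserves care is the identification $G = FE$ and $GRG^\dag = F(ERE^\dag)F^\dag$ in part~\ref{thm:qhom_facts_trans}: one must note that the span of the pairwise products $F_j E_i$ of the composite Kraus operators is exactly the operator-space product $FE$, so that the inclusion $ERE^\dag \subseteq S$ can be sandwiched between $F$ and $F^\dag$ and then chained with $FSF^\dag \subseteq T$. Alternatively one can argue through the Stinespring form~\eqref{eq:homq_def3}, composing isometries $J_1 : A \to B \ot C_1$ and $J_2 : B \to D \ot C_2$ into $(J_2 \ot I_{C_1})J_1 : A \to D \ot (C_2 \ot C_1)$, which works equally well at the cost of carrying along the ancillary tensor factors.
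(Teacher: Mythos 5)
Your proposal is correct and follows essentially the same route as the paper: part~\ref{thm:qhom_facts_trans} via the composite channel with Kraus operators $\{F_j E_i\}$ and the chain $GRG^\dag = F(ERE^\dag)F^\dag \subseteq FSF^\dag \subseteq T$, and part~\ref{thm:qhom_facts_subg} via the single isometric Kraus operator $J$, i.e.\ \eqref{eq:homq_def3} with a trivial ancilla. You simply spell out the bookkeeping (trace preservation of the composite and the identification $G = FE$) that the paper leaves implicit.
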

\begin{proof}
    \Cref{thm:qhom_facts_trans} follows from considering the composition of channels
    associated with the homomorphisms $R \homm S$ and $S \homm T$.
    \Cref{thm:qhom_facts_subg} follows trivially from~\eqref{eq:homq_def3}, taking space
    $C$ to be trivial (one dimensional).
\end{proof}

(The condition that appears above, $JSJ^\dag \subseteq T$ with $J$ an isometry,
seems to be a reasonable
generalization of the notion of subgraphs for non-commutative graphs, although we won't be
making use of this concept.
Note that~\cite{dsw2013} defined \textit{induced subgraphs} as $J^\dag S J$.
It appears that these two definitions are somewhat incompatible.)

For classical graphs the clique number is the greatest $n$ such that
$K_n \homm G$ and the chromatic number is the least $n$ such that $G \homm K_n$.
We use this to extend these concepts to non-commutative graphs.
In the previous section, the complete graph $K_n$ was defined to have no loops.
The corresponding non-commutative graph, defined via~\eqref{eq:SfromG}, is
$\linspan\{\ket{x}\bra{y} : x \ne y\}$, the space of matrices with zeros on the diagonal.
However, it is reasonable to also consider $(\mathbb{C} I)^\perp$, the space of trace-free
operators.
We consider both.

\begin{definition}
    For $n \ge 1$ define the \emph{classical and quantum complete graphs}
    \begin{align*}
        K_n &= \linspan\{\ket{x}\bra{y} : x \ne y \} \subseteq \linop{\mathbb{C}^n},
        \\
        Q_n &= (\mathbb{C} I)^\perp \subseteq \linop{\mathbb{C}^n}.
    \end{align*}
\end{definition}

One can think of $K_n$ as consisting of the operators orthogonal to the ``classical
loops'' $\proj{x}$ and $Q_n$ as consisting of the operators orthogonal to the ``coherent
loop'' $I$.
We use these to define clique, independence, and chromatic numbers for
non-commutative graphs.
In \cref{sec:qsrcchan} we will see that all of these quantities have operational
interpretations in the context of communication problems.
These quantities, and others, are summarized in \cref{tab:quantities_summary}.

\begin{definition}
    \label{def:omega_xi}
    Let $S$ be a trace-free non-commutative graph.
    We define the following quantities.
    \begin{enumerate}
        \item $\cliquec(S)$ is the greatest $n$ such that $K_n \homm S$
        \item $\cliqueq(S)$ is the greatest $n$ such that $Q_n \homm S$
        \item $\indepc(S^\perp) = \cliquec(S)$ and
            $\indepq(S^\perp) = \cliqueq(S)$.
            Note that $I \in S^\perp$.
        \item $\chromc(S)$ is the least $n$ such that $S \homm K_n$, or $\infty$ if
            $S \not\homm K_n$ for all $n$
        \item $\chromq(S)$ is the least $n$ such that $S \homm Q_n$
    \end{enumerate}
    The quantities $\cliqueq$ and $\chromq$ are not to be confused with the quantities of
    similar name that are discussed in the context of Bell-like nonlocal
    games~\cite{david2006quantum,cameron2007quantum,arxiv:1212.1724}.
\end{definition}

When $S$ derives from a classical graph $G$, our $\cliquec$ and $\chromc$ correspond to the
ordinary definitions of clique number and chromatic number and our $\chromq$
corresponds to the orthogonal rank $\orthrank(G)$.\footnote{
The \emph{orthogonal rank} of a graph is the smallest dimension of a vector space such
that each vertex may be assigned a nonzero vector, with the vectors of adjacent vertices
being orthogonal.
}
This will be proved shortly.
For non-commutative graphs with $I \in S$,
our definition of $\indepc(S)$ and $\indepq(S)$ corresponds to
that of~\cite{dsw2013,arxiv:0906.2527,arxiv:1301.1166,arxiv:0709.2090}, as we will
show in \cref{thm:alpha_indep}.
In other words, when $S=N^\dag N$ is the confusability graph of a channel $\chan{N}$,
$\indepc(S)$ and $\indepq(S)$ correspond to the one-shot classical and quantum capacities;
when $S=(N^\dag N)^\perp$ the same can be said for
$\cliquec(S)$ and $\cliqueq(S)$.

\begin{theorem}
    \label{thm:alpha_xi_reduce}
    Let $S$ be the non-commutative graph associated with a classical loop-free graph $G$.
    Then $\cliquec(S)=\omega(G)$, $\chromc(S)=\chi(G)$, $\chromq(S)=\orthrank(G)$,
    and $\cliqueq(S)=1$.
\end{theorem}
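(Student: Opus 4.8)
The plan is to verify each of the four claimed equalities by translating between the combinatorial statements about $G$ and the operator-space statements about $S = \linspan\{\ket{x}\bra{y} : x \sim_G y\}$, using \cref{thm:hom_cq_coincide} wherever a homomorphism into or out of a classical-graph space is involved.

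First I would handle $\cliquec(S) = \omega(G)$: by \cref{def:omega_xi}, $\cliquec(S)$ is the greatest $n$ with $K_n \homm S$; since $K_n$ derives from the classical complete graph and $S$ derives from $G$, \cref{thm:hom_cq_coincide} converts this to the greatest $n$ with $K_n \homm G$ in the classical sense, which is $\omega(G)$ by \cref{thm:homfacts}. The same argument gives $\chromc(S) = \chi(G)$, now using $S \homm K_n$ and the characterization of $\chi$ in \cref{thm:homfacts}; when $G$ has no loops this is always finite. For $\chromq(S) = \orthrank(G)$, I would unwind \cref{def:homq} directly: $S \homm Q_n$ means there is a CPTP map $\chan{E} : \linop{\mathbb{C}^{|V(G)|}} \to \linop{\mathbb{C}^n}$ with $E S E^\dag \subseteq Q_n = (\mathbb{C}I)^\perp$. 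For each vertex $x$ pick, as in the proof of \cref{thm:hom_cq_coincide}, a Kraus index $i(x)$ with $E_{i(x)}\ket{x} \ne 0$ and set $\ket{v_x} = E_{i(x)}\ket{x}$; the condition $\ket{x}\bra{y} \in S \implies E_{i(x)}\ket{x}\bra{y}E_{i(y)}^\dag$ is traceless forces $\braket{v_x}{v_y} = 0$ whenever $x \sim_G y$, exhibiting an orthogonal representation in $\mathbb{C}^n$, so $\chromq(S) \le n \implies \orthrank(G) \le n$. Conversely, given an orthogonal representation $x \mapsto \ket{v_x} \in \mathbb{C}^n$ with all $\ket{v_x}$ unit vectors, the Kraus operators $E_x = \ket{v_x}\bra{x}$ satisfy $\sum_x E_x^\dag E_x = I$ and $E S E^\dag = \linspan\{\ket{v_x}\bra{v_y} : x \sim_G y\} \subseteq (\mathbb{C}I)^\perp$ since orthogonality of $\ket{v_x}, \ket{v_y}$ makes each such dyad traceless; hence $S \homm Q_n$.

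Finally, for $\cliqueq(S) = 1$: $\cliqueq(S)$ is the greatest $n$ with $Q_n \homm S$, and $Q_1 = (\mathbb{C}I)^\perp \subseteq \linop{\mathbb{C}} = \{0\}$, so $Q_1 \homm S$ holds trivially via the (unique) map into $\linop{A}$, giving $\cliqueq(S) \ge 1$. For the upper bound I would show $Q_2 \not\homm S$: $Q_2 = (\mathbb{C}I)^\perp \subseteq \linop{\mathbb{C}^2}$ is the span of the traceless Hermitian $2\times2$ matrices (the Pauli span), which contains $\ket{0}\bra{0} - \ket{1}\bra{1}$, a self-adjoint element with a strictly negative eigenvalue; if there were an isometry $J$ with $J Q_2 J^\dag \subseteq S \ot \linop{A'}$, then $S \ot \linop{A'}$ — hence $S$ — would contain a nonzero self-adjoint operator, contradicting the fact that $S$, deriving from a loop-free $G$ via \eqref{eq:SfromG}, consists entirely of matrices with zero diagonal and therefore contains no nonzero positive-semidefinite or, more to the point, no nonzero element of the form $J^\dag P J$ with $P = J(\ket{0}\bra{0}-\ket{1}\bra{1})J^\dag$. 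More cleanly: $Q_2 \homm S$ would imply $\qthperp(Q_2) \le \qthperp(S)$ by the monotonicity promised in \cref{sec:qthmon}, but $Q_2$ is the non-commutative graph of $K_2$ with a loop on every vertex... — I will instead just argue that $Q_2 \homm S$ forces $S$ to contain a nonzero traceless self-adjoint $X$ with an eigenspace decomposition using two orthogonal vectors $\ket{\psi},\ket{\phi}$ such that $\ket{\psi}\bra{\psi} \in S$, which is impossible in a traceless diagonal-free space.

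The routine parts are the two forward translations via \cref{thm:hom_cq_coincide}. The main obstacle is the last equality $\cliqueq(S) = 1$: it requires showing no isometric embedding of the Pauli span into $S \ot \linop{A'}$ exists, i.e.\ that a space deriving from a loop-free classical graph admits no ``coherent loop.'' The cleanest route is to observe that $Q_2 \homm S$ would, by \cref{def:homq}, give an isometry $J$ with $J Q_2 J^\dag \subseteq S \ot \linop{A'}$; applying this to $\proj{0}_{\mathbb{C}^2} - \tfrac12 I = \tfrac12(\proj0 - \proj1) \in Q_2$ produces a nonzero self-adjoint element of $S \ot \linop{A'}$ with a nontrivial positive part, whence (taking a spectral projector and tracing down appropriately) a nonzero traceless positive semidefinite element of $S$ itself — impossible, since every element of $S$ has zero diagonal in the standard basis and a nonzero PSD matrix has positive trace. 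I expect the fiddly point to be making the ``trace down to $S$'' step precise; I would do it by choosing a product vector $\ket{x}\ot\ket{a}$ on which the positive part of $J Q_2 J^\dag$'s witness is nonzero and extracting a contradiction with $S$ being diagonal-free, exactly mirroring the ($\impliedby$) direction in the proof of \cref{thm:hom_cq_coincide}.
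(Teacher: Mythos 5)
Your treatment of the first three equalities is correct and essentially identical to the paper's: $\cliquec(S)=\omega(G)$ and $\chromc(S)=\chi(G)$ via \cref{thm:hom_cq_coincide} and \cref{thm:homfacts}, and $\chromq(S)=\orthrank(G)$ by the same two-directional unwinding of \cref{def:homq} (picking a nonvanishing Kraus index $i(x)$ one way, and using $E_x=\ket{v_x}\bra{x}$ the other way).

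The fourth claim, $\cliqueq(S)=1$, is where your argument has a genuine gap. All of your attempted contradictions hinge on extracting from $JQ_2J^\dag\subseteq S\ot\linop{C}$ a nonzero positive semidefinite (or rank-one) element of $S$ itself, but the extraction step is invalid: the spectral projectors of an element of a linear subspace need not lie in that subspace. Concretely, $J(\proj{0}-\proj{1})J^\dag$ lies in $S\ot\linop{C}$, but its positive part $J\proj{0}J^\dag$ does not; and if you instead compress with a product vector, $(I_A\ot\bra{a}_C)\,J(\proj{0}-\proj{1})J^\dag\,(I_A\ot\ket{a}_C)$ is a difference $\proj{v_0}-\proj{v_1}$ of equal-norm rank-one operators, which can perfectly well lie in $S$ (e.g.\ $v_0=(\ket{x}+\ket{y})/\sqrt2$, $v_1=(\ket{x}-\ket{y})/\sqrt2$ gives $\ket{x}\bra{y}+\ket{y}\bra{x}\in S$ for any edge $x\sim y$). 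So no contradiction follows from inspecting a single image element, and your ``traceless diagonal-free spaces contain no coherent loop'' intuition is not established by this route. (Your aside about using $\qthperp$-monotonicity also cannot work: it only yields $n^2\le\thbar(G)$, not $n=1$.)

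The paper closes this gap by switching to the dual formulation~\eqref{eq:homq_def2}: a homomorphism $Q_n\homm S$ requires $E^\dag S^\perp E\subseteq Q_n^\perp=\mathbb{C}I_n$. Loop-freeness of $G$ puts every $\proj{x}$ in $S^\perp$, and since some $\bra{x}E_i\ne 0$, the operator $E_i^\dag\proj{x}E_i$ is a nonzero operator of rank one that would have to be a scalar multiple of $I_n$ --- impossible for $n\ge 2$. This is the missing idea: the obstruction lives in $S^\perp$ (which contains rank-one diagonal projectors precisely because $G$ has no loops), not in $S$.
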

\begin{proof}
    $\cliquec(S)=\omega(G)$ and $\chromc(S)=\chi(G)$ follow directly from
    \cref{def:omega_xi} and \cref{thm:homfacts,thm:hom_cq_coincide}.

    An \emph{orthogonal representation} of $G$ is a map from vertices to nonzero vectors
    such that adjacent vertices correspond to orthogonal vectors.
    The \emph{orthogonal rank}
    $\orthrank(G)$ is defined to be the smallest possible dimension of an orthogonal
    representation.
    Let $\{\ket{\psi_x}\}_{x \in V(G)} \subseteq \linop{\mathcal{C}^n}$ be an
    orthogonal representation of $G$.  Without loss of generality assume these vectors to
    be normalized.  The Kraus operators $E_x = \ket{\psi_x}\bra{i}$
    provide a homomorphism $S \homm Q_n$.  So $\chromq(S) \le \orthrank(G)$.

    Conversely, suppose a set of Kraus operators $\{E_i\}$ provides a homomorphism
    $S \homm Q_n$ with $n = \chromq(S)$.
    Because $\sum_i E_i^\dag E_i = I$, for each $x \in G$ there is an $i(x)$ such that
    $E_{i(x)} \ket{x}$ does not vanish.  Define $\ket{\psi_x} = E_{i(x)} \ket{x}$.
    For any edge $x \sim y$ of $G$ we have
    \begin{align*}
        \ket{x}\bra{y} \in S
        &\implies E \ket{x}\bra{y} E^\dag \in Q_n
        \\ &\implies \ket{\psi_x}\bra{\psi_y} \in Q_n
        \\ &\implies \braket{\psi_x}{\psi_y} = 0.
    \end{align*}
    So $\{\ket{\psi_x}\}_{x \in V(G)}$ is an orthogonal representation of $G$ of dimension
    $n$, giving $\orthrank(G) \le \chromq(S)$.

    $\cliqueq(S)=1$ because it is not possible to have $Q_n \homm S$ if $n > 1$.
    For, suppose that such a homomorphism $E$ existed.
    There must be some $x \in V(G)$ and some $i$ such that $\bra{x}E_i \ne 0$.
    Since $G$ is loop free, $\proj{x} \in S^\perp$ so
    $E_i^\dag \proj{x} E_i \in E^\dag S^\perp E$.
    But $Q_n^\perp = \mathbb{C}I$ contains no rank-1 operators so
    $E^\dag S^\perp E \not\subseteq Q_n^\perp$ and $E$ cannot be a homomorphism from $Q_n$
    to $S$.
\end{proof}

\begin{theorem}
    \label{thm:alpha_indep}
    Let $S \subseteq \linop{A}$ be a non-commutative graph with $I \in S$.
    Then our $\indepc(S)$ and $\indepq(S)$ are equivalent to the independence number and
    quantum independence number
    of~\cite{dsw2013,arxiv:0906.2527,arxiv:1301.1166,arxiv:0709.2090}.
\end{theorem}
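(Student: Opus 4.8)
The plan is to unwind both sides of the claimed equivalence to a concrete statement about vectors and orthogonality, and then observe that the two statements are literally the same. Recall that in the cited works the independence number $\indepc(S)$ of a non-commutative graph $S$ with $I \in S$ is defined as the largest $k$ for which there exist nonzero vectors $\ket{\psi_1},\dots,\ket{\psi_k} \in A$ with $\ket{\psi_a}\bra{\psi_b} \in S$ for all $a \ne b$ (these vectors are then automatically orthogonal because $S^\perp$ is trace-free, hence contains $I$... wait, no: here $I \in S$, so it is $S^\perp$ that is trace-free, and $\ket{\psi_a}\bra{\psi_a} \notin S^\perp$ automatically; the orthogonality $\braket{\psi_a}{\psi_b}=0$ for $a\ne b$ follows instead from a separate standard argument, or one simply takes it as part of the definition as in~\cite{dsw2013}). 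Likewise $\indepq(S)$ is the largest $k$ for which there exist an ancilla $B'$ and nonzero vectors $\ket{\psi_1},\dots,\ket{\psi_k} \in A \ot B'$ with $\ket{\psi_a}\bra{\psi_b} \in S \ot \linop{B'}$ for all $a \ne b$, together with the requirement that the reduced states on $B'$ be mutually orthogonal (this is the ``quantum independence number'' in the entanglement-assisted sense). I would state these definitions explicitly at the top of the proof so the comparison is self-contained.

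Next I would rewrite our quantities. By \cref{def:omega_xi}, $\indepc(S) = \cliquec(S^\perp)$, which by \cref{def:omega_xi} again is the greatest $n$ with $K_n \homm S^\perp$, and similarly $\indepq(S) = \cliqueq(S^\perp)$ is the greatest $n$ with $Q_n \homm S^\perp$. Since $I \in S$, the space $S^\perp \subseteq \linop{A}$ is trace-free, so \cref{def:homq} applies. Unpacking $K_n \homm S^\perp$ via \cref{def:homq} using Kraus operators $\{E_i\}$ of a CPTP map $\chan{E} : \linop{\mathbb{C}^n} \to \linop{A}$: the condition $E K_n E^\dag \subseteq S^\perp$ means $E_i \ket{x}\bra{y} E_j^\dag \in S^\perp$ for all $i,j$ and all $x \ne y$ in $\{1,\dots,n\}$. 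Setting $\ket{\psi_x^{(i)}} := E_i\ket{x}$, this says $\ket{\psi_x^{(i)}}\bra{\psi_y^{(j)}} \in S^\perp$ for all $i,j$ and $x\ne y$; trace-preservation $\sum_i E_i^\dag E_i = I$ guarantees that for each $x$ some $\ket{\psi_x^{(i(x))}}$ is nonzero. One direction of the equivalence is then immediate: from an independent set $\{\ket{\psi_x}\}$ in the sense of~\cite{dsw2013} the single Kraus operator $E = \sum_x \ket{\psi_x}\bra{x}$ (normalized, padded with complementary Kraus operators to make it CPTP) gives a homomorphism $K_n \homm S^\perp$; conversely from a homomorphism one extracts $\ket{\psi_x} := E_{i(x)}\ket{x}$ and checks $\ket{\psi_x}\bra{\psi_y} \in S^\perp$ for $x \ne y$, exactly as in the proof of \cref{thm:alpha_xi_reduce}. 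I would present this as the classical case first.

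For the quantum case, $Q_n \homm S^\perp$ with isometry $J : \mathbb{C}^n \to A \ot C$ (using the isometric form~\eqref{eq:homq_def3}): $J Q_n J^\dag \subseteq S^\perp \ot \linop{C}$ means $J(\ket{x}\bra{x}-\ket{y}\bra{y})J^\dag \in S^\perp \ot \linop{C}$ (since $Q_n = (\mathbb{C} I)^\perp$ is spanned by such differences together with the off-diagonal operators) and $J\ket{x}\bra{y}J^\dag \in S^\perp \ot \linop{C}$ for $x \ne y$. Writing $\ket{\psi_x} := J\ket{x} \in A \ot C$, the off-diagonal conditions give $\ket{\psi_x}\bra{\psi_y} \in S^\perp \ot \linop{C}$ for $x \ne y$, and the diagonal differences being in $S^\perp \ot \linop{C}$ — combined with $I \in S$, i.e.\ $I \ot I_C \perp S^\perp \ot \linop{C}$ — forces $\Tr_A \ket{\psi_x}\bra{\psi_x}$ to be independent of $x$; together with the off-diagonal orthogonality this is precisely the defining condition of a quantum (entanglement-assisted) independent set of size $n$ in~\cite{dsw2013}, with $C$ playing the role of Bob's ancilla $B'$. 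The converse runs the same substitution backwards, taking $J = \sum_x \ket{\psi_x}\bra{x}$ (an isometry after normalizing so that all reduced states on $C$ are equal, which the orthogonality-plus-equal-trace condition permits). The main obstacle I anticipate is bookkeeping the exact form in which the source definitions of the (quantum) independence number are phrased in the several cited papers — in particular whether the ``equal reduced state'' / normalization condition is imposed there or derived — and reconciling that with the CPTP (rather than merely isometric) normalization built into \cref{def:homq}; handling this cleanly is why I would pin down the reference definitions precisely at the outset and route everything through the isometric reformulation~\eqref{eq:homq_def3}, where the vectors $J\ket{x}$ make the correspondence most transparent.
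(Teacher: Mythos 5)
Your classical half is essentially the paper's argument: from an independent set in the sense of~\cite{dsw2013} one builds Kraus operators $E_i = \ket{\psi_i}\bra{i}$ (the orthogonality, and hence the closure condition $\sum_i E_i^\dag E_i = I$, follows at once from $I \in S$ together with $\ket{\psi_a}\bra{\psi_b} \in S^\perp$ for $a \ne b$ --- no ``separate standard argument'' or padding with complementary Kraus operators is needed), and conversely one extracts $\ket{\psi_x} := E_{i(x)}\ket{x}$ from a homomorphism $K_n \homm S^\perp$. That part is fine.

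The quantum half has a genuine gap: you are comparing against the wrong reference definition. The quantum independence number of~\cite{dsw2013} is not an entanglement-assisted quantity involving an ancilla $B'$ and ``mutually orthogonal reduced states''; it is the largest $n$ for which there exists a rank-$n$ projector $P$ on $A$ with $PSP = \mathbb{C}P$ (the Knill--Laflamme condition, with no ancilla at all). Your unpacking of $Q_n \homm S^\perp$ through an isometry $J : \mathbb{C}^n \to A \ot C$ lands on a condition about vectors in $A \ot C$ with equal reduced states on $C$, which is not that definition, and you never bridge the two. The bridge is precisely where $I \in S$ does its real work in this half: writing the homomorphism in Kraus form as $E^\dag S E \subseteq Q_n^\perp = \mathbb{C}I$, pick a single Kraus operator $E_0$ with $E_0^\dag E_0 \ne 0$; since $I \in S$ we get $E_0^\dag I E_0 = E_0^\dag E_0 \in \mathbb{C}I$, so $E_0 = \sqrt{\alpha}\,J'$ for an isometry $J' : \mathbb{C}^n \to A$, and $P := J'J'^\dag$ then satisfies $PSP \subseteq \mathbb{C}P$. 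Without this step (or some other argument eliminating the ancilla and reducing to a single isometry into $A$), the equivalence for $\indepq$ is not established.
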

\begin{proof}
    This is a consequence of the operational interpretation of non-commutative graph
    homomorphisms which we will prove in \cref{sec:qsrcchan}; however, we give here a
    direct proof.
    The independence number of~\cite{dsw2013} is the largest number of nonzero vectors
    $\{\ket{\psi_i}\}_i$ such that
    \begin{align}
        \label{eq:dsw_alpha}
        \ket{\psi_i}\bra{\psi_j} \in S^\perp \textrm{ when } i \ne j.
    \end{align}
    Given such a collection of $n$ vectors one can define $E_i : \mathbb{C}^n \to A$
    as $E_i = \ket{\psi_i}\bra{i}$.  Since $I \in S$,~\eqref{eq:dsw_alpha} requires
    orthogonal vectors; thus $\sum_i E_i^\dag E_i = I$ so these $\{E_i\}$ are indeed Kraus
    operators.  Now,
    \begin{align*}
        E K_n E^\dag &= \linspan\{ E_{i'} \ket{i}\bra{j} E_{j'} : i \ne j \}
        \\ &= \linspan\{ \ket{\psi_i}\bra{\psi_j} : i \ne j \}
        \subseteq S^\perp,
    \end{align*}
    giving $K_n \to S^\perp$, or $\indepc(S) \ge n$.

    Conversely, take $n=\indepc(S)$.  By the definition of $\indepc(S)$, we have
    $K_n \to S^\perp$.
    Let $\{E_i\}$ be the Kraus operators that satisfy $E K_n E^\dag \subseteq S^\perp$,
    as per \cref{def:homq}.
    Since $\sum_k E_k^\dag E_k = I$, for each $i \in \{1,\dots,n\}$ there must be
    some $k(i)$ such that $E_{k(i)} \ket{i} \ne 0$.
    Define $\ket{\psi_i} = E_{k(i)} \ket{i}$.  Then for $i \ne j$,
    $E K_n E^\dag \subseteq S^\perp \implies \ket{\psi_i}\bra{\psi_j} \in S^\perp$.

    The quantum independence number is the largest rank projector $P$ such that
    $PSP = \mathbb{C}P$.
    Suppose we have such a projector.  Let $n=\rank(P)$ and let $J:\mathbb{C}^n \to A$
    be an isometry such that $J J^\dag = P$.
    Then $J^\dag S J = J^\dag PSP J = \mathbb{C} J^\dag P J = \mathbb{C} I = Q_n^\perp$.
    By~\eqref{eq:homq_def4}, taking $C$ to be the trivial (one-dimensional) space,
    this gives $Q_n \to S^\perp$, or $\indepq(S) \ge n$.

    Conversely, take $n=\indepq(S)$.  Since $Q_n \to S^\perp$, there are Kraus operators
    $\{E_i\}$ such that $E^\dag S E \subseteq Q_n^\perp = \mathbb{C}I$, as
    per~\eqref{eq:homq_def2}.  At least one of these Kraus operators, call it $E_0$, must
    satisfy $E_0^\dag E_0 \ne 0$.  Since $I \in S$, $E_0^\dag I E_0 \in E^\dag S E
    \subseteq \mathbb{C}I$, so $E_0^\dag E_0 = \alpha I$ with $\alpha \ne 0$.
    Then $J := E_0 / \sqrt{\alpha}$ is an isometry and $P:=J J^\dag$ is a rank $n$
    projector.  Furthermore,
    $PSP = J E_0^\dag S E_0 J^\dag \subseteq \mathbb{C} JIJ^\dag = \mathbb{C} P$.
\end{proof}

\section{Quantum source-channel coding}
\label{sec:qsrcchan}

We construct a quantum version of source-channel coding, as depicted in \cref{fig:fgame}.
The channel $\chan{N}$ from Alice to Bob is now a quantum channel.
Instead of classical inputs $x$ and $u$, Alice and Bob receive a bipartite quantum state.
One may imagine that a referee Charlie chooses a bipartite mixed state
$\rho_i \in \linop{A} \ot \linop{B}$ from some finite collection and sends the $A$
subsystem to Alice and the $B$ subsystem to Bob.
The details of the collection $\{ \rho_i \}$ are known ahead of time to Alice and Bob.
Bob must determine $i$, with zero chance of error, using Alice's message and his share of
$\rho_i$.
We call this \textit{discrete quantum source-channel coding} (discrete QSCC).
Here ``discrete'' refers to $i$; we will later quantize even this.
Discrete QSCC reduces to classical source-channel coding (\cref{sec:classical}) by taking
$\chan{N}$ to be a classical channel and the source to be of the form
$\rho_i = \sum_{xu} P(x,u|i) \proj{x} \ot \proj{u}$.

\begin{figure}
    \centering
    \includegraphics{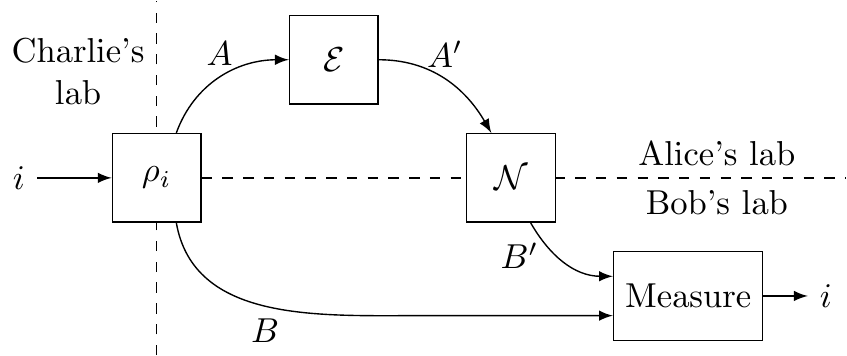}
    \caption{
        Discrete quantum source-channel coding (discrete QSCC).
    }
    \label{fig:fgame}
\end{figure}

The most general strategy is for Alice to encode her portion of $\rho_i$ using some
quantum operation (some CPTP map) $\chan{E} : \linop{A} \to \linop{A'}$ before sending it
through $\cN$ to Bob, and for Bob to perform a POVM measurement on the joint state
consisting of his portion of $\rho_i$ and the message received from Alice.
After receiving Alice's message, Bob is in possession of the mixed state
\begin{align}
    \sigma_i &= \cN( \chan{E}(\rho_i) ) \notag
    \\ &=
    \sum_{jk} (N_k E_j \ot I) \rho_i (E_j^\dag N_k^\dag \ot I) \notag
    \\ &=
    \sum_{jkl} (N_k E_j \ot I) \ket{\psi_{il}}\bra{\psi_{il}} (E_j^\dag N_k^\dag \ot I),
    \label{eq:sigma_NEpsi_sum}
\end{align}
where the unnormalized vectors $\ket{\psi_{il}}$ are defined according to
$\rho_i = \sum_l \ket{\psi_{il}}\bra{\psi_{il}}$.
There is a measurement that can produce the value $i$ with zero error if and only if the
states $\sigma_i$ and $\sigma_{i'}$ are orthogonal whenever $i \ne i'$.
Since each term of~\eqref{eq:sigma_NEpsi_sum} is positive semidefinite we have, with
$\left< \cdot, \cdot \right>$
denoting the Hilbert--Schmidt inner product,
\begin{align*}
    \left< \sigma_i, \sigma_{i'} \right> = 0
    &\iff
    \braopket{\psi_{il}}{(E_j^\dag N_k^\dag \ot I) (N_{k'} E_{j'} \ot I)}{\psi_{i'l'}} = 0
    \\ &\hspace{10em} \forall j,j',k,k',l,l'
    \\ &\iff
    \ip{E_{j'} \Tr_B\{\ket{\psi_{i'l'}}\bra{\psi_{il}}\} E_j^\dag, N_{k'}^\dag N_k} = 0
    \\ &\hspace{10em} \forall j,j',k,k',l,l'
    \\ &\iff E \cdot \Tr_B\{\ket{\psi_{i'l'}}\bra{\psi_{il}}\} \cdot E^\dag \perp N^\dag N
    \\ &\hspace{10em} \forall l,l'.
\end{align*}
By \cref{def:homq}, such an encoding $\chan{E}$ exists if and only if
$\linspan\{\Tr_B\{\ket{\psi_{i'l'}}\bra{\psi_{il}}\} : \forall i \ne i', \forall l,l'\} \to
(N^\dag N)^\perp$.
This immediately leads to the following theorem.
\begin{theorem}
    \label{thm:fgame}

    Consider discrete QSCC (\cref{fig:fgame}) with $i \in \{1,\dots,n\}$.
    For each $i$, let $\ket{\psi_i} \in A\ot B\ot C$ be a purification of
    $\rho_i \in \linop{A} \ot \linop{B}$.
    Define the isometry $J : R \to A\ot B\ot C$ with $R = \mathbb{C}^n$ as
    $J = \sum_{i=1}^n \ket{\psi_i}\bra{i}$.
    There is a zero-error strategy if and only if $S \homm T$ where $T$ is the
    distinguishability graph of $\cN$, given by~\eqref{eq:qm_nonconf}, and $S$
    is the characteristic graph of the source, given by
    \begin{align}
        S = \Tr_{BC}\{ \linop{C} J K_n J^\dag \}.
        \label{eq:thm_fgame}
    \end{align}
\end{theorem}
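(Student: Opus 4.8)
The plan is to unwind the zero-error condition for the measurement Bob performs, reduce it via purification to a statement about certain operators living in $\linop{A}$, and then recognize that statement as precisely the homomorphism condition of \cref{def:homq}. The discussion immediately preceding the theorem already does most of the work at the level of mixed-state decompositions $\rho_i = \sum_l \ket{\psi_{il}}\bra{\psi_{il}}$; what remains is to repackage that into the clean isometry form~\eqref{eq:thm_fgame}. So first I would recall that a zero-error strategy exists iff there is a CPTP encoder $\chan{E}$ with $\sigma_i \perp \sigma_{i'}$ for $i \ne i'$, and then invoke the chain of equivalences already displayed to get that this holds iff
\begin{align*}
    \linspan\{\Tr_B\{\ket{\psi_{i'l'}}\bra{\psi_{il}}\} : i \ne i',\ \forall l,l'\} \homm (N^\dag N)^\perp.
\end{align*}

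Second, I would show that the operator space appearing on the left equals the $S$ of~\eqref{eq:thm_fgame}. Here the key point is that any purification $\ket{\psi_i} \in A \ot B \ot C$ satisfies $\Tr_C \proj{\psi_i} = \rho_i = \sum_l \proj{\psi_{il}}$, and more generally that $\linspan\{\Tr_B\{\ket{\psi_{i'l'}}\bra{\psi_{il}}\} : \forall l,l'\} = \Tr_{BC}\{\linop{C}\,\ket{\psi_i}\bra{\psi_{i'}}\}$, since spanning over $\linop{C}$ on the purifying system and then tracing it out reproduces exactly the span of the cross terms between any two decompositions of $\rho_i$ and $\rho_{i'}$ — this is the standard ``unitary freedom in purification / Kraus decomposition'' fact applied on system $C$. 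Collecting these over all pairs $i \ne i'$ and noting $J K_n J^\dag = \linspan\{\ket{\psi_i}\bra{\psi_{i'}} : i \ne i'\}$ (since $K_n = \linspan\{\ket{i}\bra{i'} : i \ne i'\}$ and $J = \sum_i \ket{\psi_i}\bra{i}$), we get $\Tr_{BC}\{\linop{C} J K_n J^\dag\} = \linspan\{\Tr_B\{\ket{\psi_{i'l'}}\bra{\psi_{il}}\}\}$, as desired. I should also note in passing that $S$ so defined is genuinely a trace-free non-commutative graph: it is Hermitian because the index set $\{(i,i') : i \ne i'\}$ is symmetric and $(\Tr_B\{\ket{\psi_{i'l'}}\bra{\psi_{il}}\})^\dag = \Tr_B\{\ket{\psi_{il}}\bra{\psi_{i'l'}}\}$, and trace-free because $\Tr\Tr_B\{\ket{\psi_{i'l'}}\bra{\psi_{il}}\} = \braket{\psi_{il}}{\psi_{i'l'}}$, which vanishes for $i \ne i'$ as $\rho_i \perp \rho_{i'}$ would be needed — wait, that is not assumed, so trace-freeness must instead come from the requirement that the source states be distinguishable by Bob in principle, i.e. from $\Tr_B \rho_i \perp \Tr_B \rho_{i'}$; I would handle this carefully, since if $S$ is not trace-free then \cref{def:homq} does not literally apply and the problem is simply infeasible (the analogue of $G$ having a loop).

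The main obstacle I anticipate is exactly that last bookkeeping subtlety: being precise about \emph{when} $S$ is trace-free and hence a legitimate input to \cref{def:homq}, versus when it fails to be, in which case the statement ``$S \homm T$'' should be read as vacuously false and correctly matches the impossibility of the coding task. I would resolve this by remarking (as the paper does in the classical case after \cref{thm:src_chan_hom}) that $S$ contains a nonzero trace operator precisely when some reduced state $\Tr_B \rho_i$ overlaps some $\Tr_B \rho_{i'}$ with $i \ne i'$, and in that case Bob cannot distinguish $i$ from $i'$ even with unbounded communication, so no strategy exists and neither side of the iff holds. Everything else — the purification-freedom argument and the recognition of the homomorphism condition — is routine once the equivalences preceding the theorem are in hand.
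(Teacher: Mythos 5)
Your proposal follows the paper's own route exactly: the chain of equivalences reducing zero-error decodability to $\linspan\{\Tr_B\{\ket{\psi_{i'l'}}\bra{\psi_{il}}\} : i \ne i', \forall l,l'\} \homm (N^\dag N)^\perp$ is precisely the displayed derivation immediately preceding the theorem, and your purification-freedom identification of that span with $\Tr_{BC}\{\linop{C}\,J K_n J^\dag\}$ is exactly the step the paper leaves implicit in ``this immediately leads to the following theorem.'' One small correction to your side remark on trace-freeness: $S$ is trace-free iff the full bipartite states satisfy $\Tr(\rho_i \rho_{i'}) = 0$ for $i \ne i'$ (since $\sum_{l,l'}\abs{\braket{\psi_{il}}{\psi_{i'l'}}}^2 = \Tr(\rho_i\rho_{i'})$), not iff Alice's marginals $\Tr_B\rho_i$ are orthogonal --- this is the observation the paper records after \cref{thm:any_is_char}, and with it your vacuous-falsity reading of the infeasible case is the right one.
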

Suppose that Alice and Bob also share an entanglement resource
$\ket{\lambda} \in A'' \ot B''$.  This can be
absorbed into the source, considering the source to be $\rho_i \ot \proj{\lambda}$.
Then~\eqref{eq:thm_fgame} becomes
$\Tr_{BC}\{ \linop{C} J K_n J^\dag \} \ot \Lambda$
where $\Lambda = \Tr_{B''}\{ \proj{\lambda} \}$.
This motivates the following definition:
\begin{definition}
    \label{def:homq_star}
    Let $S$ and $T$ be trace-free non-commutative graphs.
    We say there is an \emph{entanglement assisted homomorphism} $S \home T$ if there
    exists an operator $\Lambda \succ 0$ such that $S \ot \Lambda \homm T$.
    The entanglement assisted quantities
    $\indepce(S)$, $\indepqe(S)$, $\cliquece(S)$, $\cliqueqe(S)$, $\chromce(S)$, and
    $\chromqe(S)$ are defined by using $\home$ rather than $\homm$ in \cref{def:omega_xi}.
\end{definition}
If $S$ and $T$ are induced by classical graphs $G$ and $H$ then $S \home T$ if and only if
$G \home H$ as defined in~\cite{6994835,6880319}.
This equivalence follows from the fact that $S \home T$ and $G \home H$ have identical
operational interpretation in terms of entanglement assisted source-channel coding.
Our $\indepce(S)$ corresponds to the entanglement assisted independence
number of~\cite{dsw2013} and if $S$ derives from a classical graph our $\chromce(S)$
corresponds to the entangled chromatic number of~\cite{6994835,6880319}.
These quantities, and others, are summarized in \cref{tab:quantities_summary}.

We give some examples.
\begin{itemize}
    \item Dense coding.
        Let $\rho_i = \proj{i}_{A_1} \ot \proj{\lambda}_{A_2 B}$ where $i \in \{1,\dots,m\}$
        represents the codeword to be transmitted and $\proj{\lambda}_{A_2 B}$ is an
        entanglement resource shared by Alice and Bob.
        Take $\cN$ to be a noiseless quantum channel of dimension $n$ (i.e.\ a channel of
        $\log n$ qubits).
        By \cref{thm:fgame}, dense coding is possible if and only if
        $K_m \ot \Tr_B\{\proj{\lambda}\} \homm Q_n$.
        The well known bound $m \le n^2$ for dense coding gives
        $(K_m \home Q_n \iff m \le n^2)$.
        In other words, $\cliquece(Q_n) = n^2$ and $\chromqe(K_{n^2}) = n$.

    \item Entanglement assisted zero-error communication of $n$ different codewords through a
        noisy channel $\chan{N}$ is possible if and only if $K_n \home (N^\dag N)^\perp$.
        So the one-shot entanglement assisted classical capacity is $\log \indepce(N^\dag N)$.

    \item
        Classical or quantum one-way communication complexity of a function.
        Suppose the referee sends Alice a classical message $x$ and sends Bob a classical
        message $y$, with $(x,y) \in R$.
        How large of a message must Alice send to Bob such that Bob may compute some function
        $f(x,y)$?
        The set $R$ and function $f$ are known ahead of time to all parties.

        Take $\rho_i = \sum_{(x,y) \in R \cap f^{-1}(i)} \proj{x} \ot \proj{y}$.
        Let $S = \Tr_{BC}\{ \linop{C} J K_n J^\dag \}$ from~\eqref{eq:thm_fgame}.
        Then $S$ derives (via~\eqref{eq:SfromG}) from the graph $G$ with edges
        $x \sim x' \iff \exists y \textrm{ s.t. } f(x,y) \ne f(x',y)$.
        A classical channel of size $n$ suffices iff $S \homm K_n$, and a quantum channel
        suffices iff $S \homm Q_n$.
        So the smallest sufficient $n$ for a classical channel is $\chromc(S)$
        and for a quantum channel is $\chromq(S)$.
        Since $S$ derives from a classical graph, $\chromc(S)$ and $\chromq(S)$ are just the
        chromatic number and orthogonal rank of $G$.
        This reproduces the result of~\cite{witsen76} and theorem 8.5.2 of~\cite{dewolfphd}.

        If Alice and Bob can share an entangled state the condition becomes
        $S \home K_n$ or $S \home Q_n$ and the smallest $n$ is $\chromce(S)$ or
        $\chromqe(S)$.

    \item One-way communication complexity of nonlocal measurement.
        Alice and Bob each receive half of a bipartite state
        $\ket{\psi_i} \in \linop{A} \ot \linop{B}$ drawn from some finite collection
        agreed to ahead of time.
        What is the smallest message that must be sent from Alice to Bob so that
        Bob can determine $i$?
        Defining $S = \linspan\{\Tr_B\{\ket{\psi_i}\bra{\psi_j}\} : i \ne j \}$,
        a quantum message of dimension $n$ suffices if and only if
        $S \homm Q_n$.
        So the message from Alice to Bob must be at least $\log \chromq(S)$ qubits or
        $\log \chromc(S)$ bits.
        If the states $\{ \ket{\psi_i} \}$ are not distinguishable via one-way local
        operations and classical communication (LOCC-1) then $\chromc(S)=\infty$.
\end{itemize}

We further generalize by replacing the index $i$ with a quantum state.
Instead of the referee sending $\rho_i$, we imagine an isometry
$J : R \to A\ot B\ot C$ into which the referee passes a quantum state
$\ket{\psi} \in R$.
Alice receives subsystem $A$, Bob receives $B$, and $C$ is dumped to the environment.
One may think of $J$ as the Stinespring isometry for a channel
$\mathcal{J} : \linop{R} \to \linop{A \ot B}$.
We call this \textit{coherent QSCC}; the setup is depicted in \cref{fig:fgame_coh}.
The goal is for Bob to reproduce the state $\ket{\psi}$, with perfect fidelity.
Discrete QSCC is recovered by taking
$J = \sum_i \ket{\psi_i}\bra{i}$ where $\ket{\psi_i} \in A\ot B\ot C$ is
a purification of $\rho_i \in \linop{A} \ot \linop{B}$, and requiring that the input
state be a basis state.

\begin{figure}
    \centering
    \includegraphics{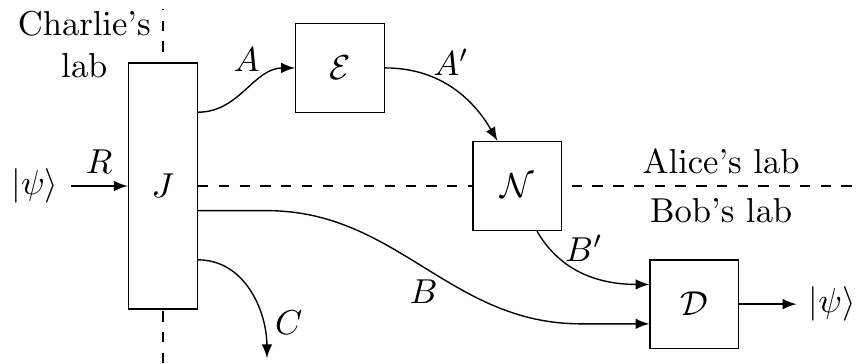}
    \caption{
        Coherent quantum source-channel coding (coherent QSCC).
    }
    \label{fig:fgame_coh}
\end{figure}

After Alice's transmission, Bob is in possession of the state
$\mathcal{N}(\mathcal{E}(\Tr_C(J\proj{\psi}J^\dag)))$.
In order to recover $\ket{\psi}$, Bob must perform some operation that converts the
channel $\rho \to \mathcal{N}(\mathcal{E}(\Tr_C(J \rho J^\dag)))$ into the identity
channel.
The Kraus operators of this channel are
$\{(N_k E_j \ot I_B \ot \bra{l}_C) J\}_{jkl} \subseteq \linop{R \to B \ot B'}$.
By the Knill--Laflamme error correction condition~\cite{PhysRevA.55.900},
recovery of $\ket{\psi}$ is possible if and only if, $\forall j,j',k,k',l,l'$,
\begin{align}
    \label{eq:knill_JEN}
    J^\dag \left( E_j^\dag N_k^\dag N_{k'} E_{j'} \ot I_B \ot \ket{l}\bra{l'}_C \right) J
    \in \mathbb{C} I.
\end{align}
An operator is proportional to $I$ if and only if it is orthogonal to all trace free operators,
so this becomes
\begin{align*}
    &\phantom{\iff}
    \left< J^\dag \left( E_j^\dag N_k^\dag N_{k'} E_{j'} \ot I_B
        \ot \ket{l}\bra{l'}_C \right) J, X \right> = 0
        \\ &\hspace{10em} \forall j,j',k,k',l,l', \forall X \in Q_n
    \\ &\iff
    \left< E_{j'} \Tr_{B}\{ \bra{l'}_C J X J^\dag \ket{l}_C \} E_j^\dag,
        N_{k'}^\dag N_k \right> = 0
        \\ &\hspace{10em} \forall j,j',k,k',l,l', \forall X \in Q_n
    \\ &\iff
    E \cdot \Tr_{BC}\{ \linop{C} J Q_n J^\dag \} \cdot E^\dag \subseteq (N^\dag N)^\perp.
\end{align*}
Or, using the terminology of homomorphisms,
\begin{theorem}
    \label{thm:fgame_coh}
    There is a zero-error strategy for coherent QSCC (\cref{fig:fgame_coh}) if and only if
    $S \homm T$ where $T$ is the
    distinguishability graph of $\cN$, given by~\eqref{eq:qm_nonconf}, and $S$
    is the characteristic graph of the source, given by
    \begin{align}
        S = \Tr_{BC}\{ \linop{C} J Q_n J^\dag \}
    \end{align}
    where $n = \dim(R)$.
\end{theorem}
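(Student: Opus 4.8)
The plan is to run essentially the same computation that established \cref{thm:fgame}, but with the Knill--Laflamme error correction condition~\cite{PhysRevA.55.900} playing the role that the orthogonality-of-$\sigma_i$ criterion played there. The derivation is in fact already sketched in the paragraph preceding the statement, so the ``proof'' amounts to collecting those lines into a clean argument. First I would identify the channel that Bob actually faces: after Alice encodes with $\chan{E}$ and transmits through $\cN$, the overall map from $\linop{R}$ to $\linop{B \ot B'}$ is $\rho \mapsto \cN(\chan{E}(\Tr_C(J\rho J^\dag)))$, whose Kraus operators are $\{(N_k E_j \ot I_B \ot \bra{l}_C)J\}_{jkl}$. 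Since Bob's goal is to reconstruct the input state $\ket{\psi} \in R$ with perfect fidelity, a recovery operation exists if and only if this channel is correctable on all of $R$, which by Knill--Laflamme is exactly condition~\eqref{eq:knill_JEN}: for all $j,j',k,k',l,l'$, the operator $J^\dagger(E_j^\dagger N_k^\dagger N_{k'} E_{j'} \ot I_B \ot \ket{l}\bra{l'}_C)J$ lies in $\mathbb{C}I_R$.

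Next I would convert the ``proportional to $I$'' constraint into an orthogonality constraint: an operator on $R = \mathbb{C}^n$ is in $\mathbb{C}I$ iff it is Hilbert--Schmidt orthogonal to every trace-free operator, i.e.\ to every element of $Q_n$. So~\eqref{eq:knill_JEN} is equivalent to $\ip{J^\dagger(E_j^\dagger N_k^\dagger N_{k'}E_{j'} \ot I_B \ot \ket{l}\bra{l'}_C)J,\, X} = 0$ for all $X \in Q_n$ and all indices. Then I would push the $J$'s and the partial traces across the inner product: using $\ip{J^\dagger M J, X} = \ip{M, JXJ^\dagger}$ and tracing out $B$ and $C$ against the appropriate matrix units, this becomes $\ip{E_{j'}\,\Tr_B\{\bra{l'}_C JXJ^\dagger\ket{l}_C\}\,E_j^\dagger,\, N_{k'}^\dagger N_k} = 0$ for all $j,j',k,k',l,l'$ and all $X \in Q_n$. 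Quantifying over $l,l'$ and over $X \in Q_n$ collapses the $\ket{l}\bra{l'}_C$ factors together with the $X$, leaving exactly $E \cdot \Tr_{BC}\{\linop{C}\, J Q_n J^\dagger\} \cdot E^\dagger \perp N^\dagger N$, where $E = \linspan\{E_i\}$ as usual. Setting $S = \Tr_{BC}\{\linop{C}\, J Q_n J^\dagger\}$ and $T = (N^\dagger N)^\perp$, this last line reads $E S E^\dagger \subseteq T$, which by~\eqref{eq:homq_def1} of \cref{def:homq} is precisely the statement that $\chan{E}$ is a homomorphism $S \homm T$. Since the most general encoding is an arbitrary CPTP map, a zero-error strategy exists iff some such $\chan{E}$ exists iff $S \homm T$.

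Two bookkeeping points deserve a remark rather than being genuine obstacles. One should check that $S$ so defined is a legitimate trace-free non-commutative graph: it is manifestly Hermitian since $Q_n = Q_n^\dagger$ and the partial-trace-with-$\linop{C}$ operation preserves Hermiticity, and it is trace-free because $\Tr(\Tr_{BC}\{L J X J^\dagger\})$ for $L \in \linop{C}$, $X \in Q_n$ reduces to a multiple of $\Tr(X) = 0$ after using that $J$ is an isometry (so \cref{def:homq} applies and $\qthperp(S)$ etc.\ are defined on it). The main thing to get right, and the step I would be most careful with, is the interchange of quantifiers in the last equivalence: one must verify that ranging $X$ over all of $Q_n$ while simultaneously ranging over $l,l'$ really does generate the whole span $\linspan\{\Tr_{BC}\{L J X J^\dagger\} : L \in \linop{C}, X \in Q_n\}$ appearing in $S$, so that the pointwise orthogonality conditions aggregate to the subspace condition $ESE^\dagger \subseteq T$ with no loss. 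This is the same ``collapse of indices into an operator space'' manoeuvre used in \cref{thm:fgame} and in the computation of \cref{sec:qsrcchan}, so it goes through, but it is where a careless argument could overreach. Finally, I would note that \cref{thm:fgame} is recovered as the special case $J = \sum_i \ket{\psi_i}\bra{i}$ with the input restricted to basis states, which replaces $Q_n$ by $K_n$ — consistent with the two statements.
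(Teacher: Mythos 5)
Your proposal is correct and follows essentially the same route as the paper, which derives \cref{thm:fgame_coh} by exactly this chain: Kraus operators of the composite map, the Knill--Laflamme condition~\eqref{eq:knill_JEN}, conversion of $\mathbb{C}I$ into orthogonality against $Q_n$, and the collapse of the indexed orthogonality conditions into $E\cdot\Tr_{BC}\{\linop{C}JQ_nJ^\dag\}\cdot E^\dag \subseteq (N^\dag N)^\perp$. The additional bookkeeping you flag (Hermiticity and trace-freeness of $S$, and the quantifier aggregation) is sound and consistent with the paper's treatment.
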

This differs from \cref{thm:fgame} only in the replacement of $K_n$ by $Q_n$.
As before, if Alice and Bob are allowed to make use of an entanglement resource
the condition becomes $S \home T$ rather than $S \to T$.

We give some examples.
\begin{itemize}
    \item Teleportation.  Take $J = I_{A_1} \ot \ket{\lambda}_{A_2 B}$ where $I$ is the
        identity operator (i.e.\ the referee directly gives $\ket{\psi}$ to Alice) and
        $\ket{\lambda}$ is an entanglement resource.
        Take $\cN$ to be a perfect classical channel.
        By \cref{thm:fgame_coh} teleportation is possible if and only if
        $Q_m \ot \Tr_B\{\proj{\lambda}\} \homm K_n$ where $m$ is the dimension of the state to
        be teleported and $n$ is the dimension of the classical channel.
        The well known bound $m^2 \le n$ for teleportation gives
        $(Q_m \home K_n \iff m^2 \le n)$.
        In other words, $\cliqueqe(K_{m^2}) = m$ and $\chromce(Q_m) = m^2$.

    \item Zero-error one-shot quantum communication capacity.  Take $\cN$ to be a noisy
        channel, and take $J : R \to A$ to be the identity operator (i.e.\ the referee
        gives $\ket{\psi}$ directly to Alice, and Bob gets no input).
        It is possible to send $\log m$ error-free qubits though $\cN$ if and only if
        $Q_m \homm (N^\dag N)^\perp$.
        By definition, $m \le \indepq(N^\dag N)$.
        If Alice and Bob can use an entangled state, these conditions become
        $Q_m \home (N^\dag N)^\perp$ and $m \le \indepqe(N^\dag N)$.

    \item Suppose Alice and Bob each have a share of a quantum state that has been
        cloned in the standard basis.  That is to say, suppose
        $J = \sum_{x=1}^n \ket{xx}_{AB} \bra{x}_R$.
        Can Alice send a classical message to Bob such that Bob may reconstruct the
        original quantum state?
        The characteristic graph of this source (call it $S$) is the space of trace-free diagonal
        matrices.
        Conjugating by the Fourier matrix yields a subspace of $K_n$.
        So the Fourier transform is a homomorphism $S \homm K_n$; indeed a classical
        message does suffice.

    \item Imagine that Alice tries to send a quantum message to Bob, but part of the
        signal bounces back.
        This can be modeled by a channel $\chan{J} : \linop{R} \to \linop{A} \ot \linop{B}$.
        Alice must now send a second message though a second channel $\cN$ in order to
        allow Bob to reconstruct the original message.
        This is exactly the setup depicted in \cref{fig:fgame_coh},
        with Charlie being Alice and $J$ being the Stinespring isometry of $\chan{J}$.

    \item Correction of algebras.
        Suppose instead of transmitting $\ket{\psi}$ perfectly, one needs only
        that some $C^*$-algebra of observables $\mathcal{A}$ be preserved (i.e.\ the
        receiver can do any POVM measurement with elements from $\mathcal{A}$).
        This reduces to discrete QSCC when $\mathcal{A}$ consists of the diagonal operators.
        By theorem 2 of~\cite{PhysRevLett.98.100502}, this problem is analyzed via a
        straightforward modification of the Knill--Laflamme condition: $\mathbb{C}I$
        in~\eqref{eq:knill_JEN} should be replaced by the space of operators that commute
        with everything in $\mathcal{A}$ (the \textit{commutant} of $\mathcal{A}$);
        \cref{thm:fgame_coh} is modified by replacing $Q_n$ with the space
        perpendicular to the commutant of $\mathcal{A}$.
        \Cref{thm:fgame} is recovered by taking $\mathcal{A}$ to consist of the diagonal
        operators.

    \item Consider discrete QSCC with the inputs $\rho_1,\rho_2,\rho_3,\rho_4$ being the
        four Bell states (or even three of the four).
        The characteristic graph is $Q_2$.
        This is the same as the graph for coherent QSCC
        with the goal being for Alice to transmit an arbitrary qubit to Bob
        ($J : R \to A$ is the identity operator).
        Since the characteristic graphs are the same for the two problems, they
        require the same communication resources.
\end{itemize}

\begin{table}
    \centering
    \setlength{\tabcolsep}{2mm}
    \begin{tabular}{p{3.6cm}p{4.5cm}}
        Quantity & Interpretation
        \\ \hline
        $K_n = \{ M \in \linop{\mathbb{C}^n} : M_{ii}=0 \}$ &
        Classical complete graph.
        The set of $n \times n$ matrices with zeros down the diagonal.
        \\[1mm]
        $Q_n = (\mathbb{C} I_n)^\perp$ &
        Quantum complete graph.
        The set of trace-free $n \times n$ matrices.
        \\[1mm]
        $N = \linspan\{ N_i \}$ &
        Span of Kraus operators for channel $\cN$.
        \\[1mm]
        $N^\dag N = \linspan\{ N_i^\dag N_j \}$ &
        Confusability graph of channel $\cN$.
        \\[1mm]
        $(N^\dag N)^\perp$ &
        Distinguishability graph of channel $\cN$.
        \\[1mm]
        $S \homm T \iff ESE^\dag \subseteq T$ with $E$ span of Kraus operators &
        Graph homomorphism.  Source with characteristic graph $S$ can be transmitted
        using channel with distinguishability graph $T$.
        \\[1mm]
        \pbox{4.5cm}{
        $S \home T \iff (\exists \Lambda \succ 0$ \\ s.t.\ $S \ot \Lambda \homm T)$
        } &
        Entanglement assisted homomorphism.
        As before, but sender and receiver share an entanglement resource.
        \\[1mm]
        $\cliquec(S) = \max\{ n : K_n \homm S \}$ &
        Clique number.  One-shot classical capacity of channel with distinguishability
        graph $S$ is $\log \cliquec(S)$.
        \\[1mm]
        $\cliqueq(S) = \max\{ n : Q_n \homm S \}$ &
        Quantum clique number.  One-shot quantum capacity of channel with
        distinguishability
        graph $S$ is $\log \cliquec(S)$.
        \\[1mm]
        $\indepc(S) = \cliquec(S^\perp)$ &
        Independence number.  One-shot classical capacity of channel with confusability
        graph $S$ is $\log \indepc(S)$.
        \\[1mm]
        $\indepq(S) = \cliqueq(S^\perp)$ &
        Quantum independence number.  One-shot quantum capacity of channel with
        confusability graph $S$ is $\log \indepc(S)$.
        \\[1mm]
        $\chromc(S) = \min\{ n : S \homm K_n \}$ &
        Chromatic number.  Source with characteristic graph $S$ can be transmitted using
        $\log \chromc(S)$ classical bits.
        \\[1mm]
        $\chromq(S) = \min\{ n : S \homm Q_n \}$ &
        Quantum chromatic number.  Source with characteristic graph $S$ can be transmitted using
        $\log \chromq(S)$ qubits.
        For classical graphs this equals the orthogonal rank.
        \\[1mm]
        $\cliquece$,
        $\cliqueqe$,
        $\indepce$,
        $\indepqe$,
        $\chromce$,
        $\chromqe$ &
        Entanglement assisted quantities.  Replace $\homm$ with $\home$ in above
        definitions.  Relevant when sender and receiver share an entanglement resource.
    \end{tabular}
    \caption{
        Basic definitions used in this paper, and their interpretations.
        See \cref{def:homq} for the full definition of $S \homm T$.
        See \cref{thm:fgame,thm:fgame_coh} for the definition of characteristic graph.
    }
    \label{tab:quantities_summary}
\end{table}

Lemma 2 of~\cite{arxiv:0906.2527} states that every non-commutative graph
containing the identity is the confusability graph of some channel (equivalently, every
trace-free non-commutative graph is the distinguishability graph of some channel).
A similar statement holds for sources.

\begin{theorem}
    \label{thm:any_is_char}
    Every non-commutative graph $S$ is the characteristic graph for discrete QSCC
    with only two inputs (i.e.\ $\rho_0$ and $\rho_1$).
\end{theorem}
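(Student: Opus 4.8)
The plan is to specialise \cref{thm:fgame} to $n=2$ and then reverse-engineer a source realising the given graph. With two inputs $\rho_0,\rho_1$, purifications $\ket{\psi_0},\ket{\psi_1}\in A\ot B\ot C$, and $J=\ket{\psi_0}\bra{0}+\ket{\psi_1}\bra{1}$, \cref{thm:fgame} says the characteristic graph is $\Tr_{BC}\{\linop{C}\,J K_2 J^\dag\}$. Since $K_2=\linspan\{\ket{0}\bra{1},\ket{1}\bra{0}\}$ we get $J K_2 J^\dag=\linspan\{\ket{\psi_0}\bra{\psi_1}\}+\textnormal{h.c.}$, and writing the $C$-slices $\ket{\phi^0_{l}}:=(I_{AB}\ot\bra{l}_C)\ket{\psi_0}$ and $\ket{\phi^1_{l}}:=(I_{AB}\ot\bra{l}_C)\ket{\psi_1}$ along a basis $\{\ket{l}_C\}$, the operation ``$\linop{C}(\cdot)$ then $\Tr_{BC}$'' just collects all $C$-matrix-element blocks and traces out $B$, so the characteristic graph equals $\linspan\{\Tr_B(\ket{\phi^0_{l'}}\bra{\phi^1_{l}}):l,l'\}+\textnormal{h.c.}$ The problem therefore reduces to: given an arbitrary non-commutative graph $S\subseteq\linop{A}$, choose slices $\{\ket{\phi^0_k}\},\{\ket{\phi^1_l}\}$ in $A\ot B$ so that the partial traces $\Tr_B(\ket{\phi^0_k}\bra{\phi^1_l})$ together with their adjoints span exactly $S$.

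For the construction, take $B$ to be a copy of $A$, let $\ket{\Omega}=\sum_a\ket{a}_A\ket{a}_B$ be the unnormalised maximally entangled vector, and fix a basis $G_1,\dots,G_m$ of $S$ as a complex subspace. Use a single $\psi_0$-slice, $\ket{\phi^0_1}=\ket{\Omega}$, and index the $\psi_1$-slices by the basis, $\ket{\phi^1_l}=(G_l^\dag\ot I_B)\ket{\Omega}$; concretely $\ket{\psi_0}\propto\ket{\Omega}_{AB}\ot\ket{1}_C$ and $\ket{\psi_1}\propto\sum_{l=1}^m (G_l^\dag\ot I_B)\ket{\Omega}_{AB}\ot\ket{l}_C$ with $C=\mathbb{C}^m$, each rescaled to unit norm, and then set $\rho_i:=\Tr_C\proj{\psi_i}$, so $\ket{\psi_i}$ is tautologically a purification of $\rho_i$. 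The only computation needed is the standard identity $\Tr_B\{(M\ot I_B)\proj{\Omega}(N\ot I_B)\}=MN$, which gives $\Tr_B(\ket{\phi^0_1}\bra{\phi^1_l})=I\cdot G_l=G_l$. Hence the characteristic graph of this two-input source is $\linspan\{G_l:l\}+\linspan\{G_l^\dag:l\}=S+S^\dag=S$, using that a non-commutative graph is Hermitian.

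The remaining steps are bookkeeping rather than obstacles: rescaling $\ket{\psi_0},\ket{\psi_1}$ to unit norm does not affect any span; $\rho_0,\rho_1$ are automatically valid density operators ($\rho_0$ happens to be pure, $\rho_1$ of rank at most $m$, both allowed in discrete QSCC); and one should spell out why $\Tr_{BC}\{\linop{C}\,J K_2 J^\dag\}$ collapses to the slice-span described, which is just the observation that ranging $(I_{AB}\ot\bra{l'}_C)(\cdot)(I_{AB}\ot\ket{l}_C)$ over a basis of $C$ recovers all matrix-element blocks. The one mildly non-obvious idea is to make \emph{one} of the two states maximally entangled with the reference, so that $\Tr_B(\ket{\Omega}\bra{\,\cdot\,})$ already realises an arbitrary operator on $A$, while the environment register of the \emph{other} state carries the multiplicity needed to span all of $S$. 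This is the source-side analogue of Lemma~2 of~\cite{arxiv:0906.2527}; note also that if $S$ fails to be trace-free the resulting source is simply one that can never be transmitted, consistent with the classical picture of a characteristic graph carrying a loop.
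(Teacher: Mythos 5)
Your construction is correct and is essentially the paper's own proof: both encode an operator basis of $S$ on an environment register via vectors of the form $(G\ot I)\ket{\Omega}$ and pair these against a maximally entangled state, so that the cross partial traces $\Tr_B\{(M\ot I)\proj{\Omega}(N\ot I)\}=MN$ span exactly $S$, with Hermiticity of $S$ absorbing the ``$+\,$h.c.'' terms. The only differences are cosmetic --- you use a complex rather than Hermitian basis, and you carry the multiplicity label solely on the environment of $\ket{\psi_1}$ rather than on a Bob register appearing in both states.
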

\begin{proof}
    Let $S \in \linop{A}$ be a non-commutative graph and let $\{ S_x \}_{x \in X}$ be a
    basis of $S$, with each $S_x$ being Hermitian.
    That such a Hermitian basis always exists is shown in~\cite{arxiv:0906.2527}.
    Without loss of generality, assume that each $S_x$ is normalized under the Frobenius
    norm.
    Let $(S_x)_{ij}$ be the entries of matrix $S_x$ and define
    $\ket{S_x} = \sum_{ij} (S_x)_{ij} \ket{i}_A \ket{j}_B$.
    Also define $\ket{\Phi} = \dim(A)^{-1/2} \sum_i \ket{i}_A \ot \ket{i}_B$.
    Consider discrete QSCC with sources
    $\rho_i = \Tr_C\{\proj{\psi_i}\}$ for $i \in \{0,1\}$ with
    $\ket{\psi_i} \in A \ot B \ot B' \ot C$ defined by
    \begin{align*}
        \ket{\psi_0} &= \abs{X}^{-1/2} \sum_x \ket{\Phi} \ot \ket{x}_{B'} \ot \ket{x}_C
        \\
        \ket{\psi_1} &= \abs{X}^{-1/2} \sum_x \ket{S_x} \ot \ket{x}_{B'} \ot \ket{x}_C
    \end{align*}
    Alice receives subsystem $A$ and Bob receives subsystems $B \ot B'$.
    Subsystem $C$ goes to the environment.
    As per \cref{thm:fgame}, the characteristic graph is
    \begin{align*}
        S &= \Tr_{BB'C}\{ \linop{C} \ket{\psi_1} \bra{\psi_0} \} + \textrm{h.c.}
        \\ &= \linspan_x\{ \Tr_B( \ket{S_x}\bra{\Phi} ) \} + \textrm{h.c.}
        \\ &= \linspan_x\{ S_x \} + \textrm{h.c.} = S
    \end{align*}
    where ``$+\textrm{h.c.}$'' means that the adjoints of the operators are also included
    in the subspace.
\end{proof}

Note that we didn't require $S$ to be trace-free in \cref{thm:any_is_char}; however, if
$S$ is not trace-free then source-channel coding will be impossible:
$\rho_0$ and $\rho_1$ would be non-orthogonal and so would not be
distinguishable by any measurement.

\section{\texorpdfstring{$\qthperp$}{Lovasz theta} is a homomorphism monotone}
\label{sec:qthmon}

We will show that $\qthperp$ is monotone under entanglement assisted
homomorphisms of non-commutative graphs.
This leads to a Lov{\'a}sz sandwich theorem for non-commutative graphs, and a bound on
quantum source-channel coding.
We begin by showing $\qthperp$ to be insensitive to entanglement.
Recall that a source having non-commutative graph $S$, combined with an entanglement
resource $\ket{\lambda} \in A'' \ot B''$, yields a composite source with non-commutative graph
$S \ot \Lambda$ where $\Lambda = \Tr_{B''}\{ \proj{\lambda} \}$.

\begin{lemma}
    \label{thm:sperp_Lambda}
    Let $S$ be a trace-free non-commutative graph.  Let $\Lambda$ be a positive operator.
    Then $\qthperp(S) = \qthperp(S \ot \Lambda)$.
\end{lemma}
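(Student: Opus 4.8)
The plan is to argue directly from the two semidefinite programs in \cref{def:qth}, establishing $\qthperp(S \ot \Lambda) \ge \qthperp(S)$ from the maximization form~\eqref{eq:qth_primal} and $\qthperp(S \ot \Lambda) \le \qthperp(S)$ from the minimization form~\eqref{eq:qth_dual}, in each case by carrying an optimal point of the program for $S$ over to a feasible point of the program for $S \ot \Lambda$. We may assume $\Lambda \ne 0$ (else $S \ot \Lambda = \{0\}$ and the resource is trivial), and since $S \ot \Lambda$ is unchanged under a positive rescaling of $\Lambda$, we normalize so that $\opnorm{\Lambda} = 1$. Write $S \subseteq \linop{A}$ and $\Lambda \in \linop{A''}$, so $S \ot \Lambda \subseteq \linop{A \ot A''}$ (note $S \ot \Lambda$ is trace-free), and take the ancilla attached to $S \ot \Lambda$ to factor as $A' \ot A'''$ with $A' \cong A$ and $A''' \cong A''$; then the maximally entangled vector factors as $\ket{\Phi} = \ket{\Phi_{AA'}} \ot \ket{\Phi_{A''A'''}}$.

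For $\qthperp(S \ot \Lambda) \ge \qthperp(S)$: let $X_0 \in S \ot \linop{A'}$ with $I + X_0 \succeq 0$ attain $\opnorm{I + X_0} = \qthperp(S)$, and put $X := X_0 \ot \Lambda \ot I_{A'''}$ (with $X_0$ on $A \ot A'$, $\Lambda$ on $A''$, $I$ on $A'''$). Writing $X_0 = \sum_k s_k \ot P_k$ ($s_k \in S$, $P_k \in \linop{A'}$) and regrouping as $X = \sum_k (s_k \ot \Lambda) \ot (P_k \ot I_{A'''})$ shows $X \in (S \ot \Lambda) \ot \linop{A'A'''}$, the operator space appearing in the primal~\eqref{eq:qth_primal} for $S \ot \Lambda$. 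After reordering tensor factors, $I + X = M \ot I_{A'''}$ with $M := I_{AA'} \ot I_{A''} + X_0 \ot \Lambda$. Using the Jordan decomposition $X_0 = X_0^+ - X_0^-$ ($X_0^\pm \succeq 0$, with $X_0^- \preceq I$ since $I + X_0 \succeq 0$) and the fact that $0 \preceq X_0^- \preceq I$ and $0 \preceq \Lambda \preceq I$ force $0 \preceq X_0^- \ot \Lambda \preceq I$, we get $X_0 \ot \Lambda = X_0^+ \ot \Lambda - X_0^- \ot \Lambda \succeq -I$, so $M \succeq 0$. Finally, for a unit top eigenvector $\ket{v}$ of $\Lambda$ (so $\braopket{v}{\Lambda}{v} = 1$) the operator $M$ sends $\ket{\psi} \ot \ket{v} \mapsto \big((I + X_0)\ket{\psi}\big) \ot \ket{v}$ for every $\ket{\psi} \in A \ot A'$, whence $\opnorm{I + X} = \opnorm{M} \ge \opnorm{I + X_0} = \qthperp(S)$; so $X$ is a primal-feasible point witnessing the inequality.

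For $\qthperp(S \ot \Lambda) \le \qthperp(S)$: the key fact is the containment $S^\perp \ot \linop{A''} \subseteq (S \ot \Lambda)^\perp$ inside $\linop{A \ot A''}$, which holds because $\ip{Z_1 \ot Z_2, s \ot \Lambda} = \ip{Z_1, s}\,\Tr(Z_2^\dag \Lambda)$ vanishes whenever $Z_1 \in S^\perp$. Tensoring with $\linop{A'} \ot \linop{A'''}$ and reordering factors, this gives $S^\perp \ot \linop{A'} \ot \linop{A''A'''} \subseteq (S \ot \Lambda)^\perp \ot \linop{A'A'''}$. Now let $Y_0 \in S^\perp \ot \linop{A'}$ with $Y_0 \succeq \proj{\Phi_{AA'}}$ attain $\opnorm{\Tr_A Y_0} = \qthperp(S)$, and put $Y := Y_0 \ot \proj{\Phi_{A''A'''}}$. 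Then $Y \in (S \ot \Lambda)^\perp \ot \linop{A'A'''}$ by the containment just noted, and tensoring both sides of $Y_0 \succeq \proj{\Phi_{AA'}}$ with the positive operator $\proj{\Phi_{A''A'''}}$ yields $Y \succeq \proj{\Phi_{AA'}} \ot \proj{\Phi_{A''A'''}} = \proj{\Phi}$, so $Y$ is feasible for~\eqref{eq:qth_dual} applied to $S \ot \Lambda$. Since $\Tr_{A''} \proj{\Phi_{A''A'''}} = I_{A'''}$, we have $\Tr_{AA''} Y = (\Tr_A Y_0) \ot I_{A'''}$, hence $\opnorm{\Tr_{AA''} Y} = \opnorm{\Tr_A Y_0} = \qthperp(S)$; and because the two programs of \cref{def:qth} agree in value, $\qthperp(S \ot \Lambda) \le \qthperp(S)$.

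The positivity and norm computations above are routine; the step I expect to require the most care is the ambient-space bookkeeping --- seeing that the ancilla of $S \ot \Lambda$ factors as $A' \ot A'''$ with $\ket{\Phi}$ factoring in step, checking that the constructed $X$ really lies in $(S \ot \Lambda) \ot \linop{A'A'''}$ (and not merely in the larger $(S \ot \linop{A''}) \ot \linop{A'A'''}$), and verifying the containment $S^\perp \ot \linop{A''} \subseteq (S \ot \Lambda)^\perp$. Positivity of $\Lambda$ is used only through the inequality $0 \preceq X_0^- \ot \Lambda \preceq I$ that gives $M \succeq 0$.
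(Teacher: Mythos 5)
Your proof is correct. The first direction ($\qthperp(S\ot\Lambda)\ge\qthperp(S)$) coincides with the paper's argument: both tensor an optimal primal solution with $\Lambda$ (normalized to $\opnorm{\Lambda}=1$), check $I+X_0\ot\Lambda\succeq 0$, and extract the norm lower bound along a top eigenvector of $\Lambda$; your Jordan-decomposition justification of positivity is just a more explicit version of the paper's one-line claim. The second direction is where you genuinely diverge. The paper stays entirely in the primal~\eqref{eq:qth_primal}: it invokes the fact from~\cite{dsw2013} that the ancilla dimension may be enlarged, identifies $\Lambda\ot\linop{\mathbb{C}^n}\subseteq\linop{\mathbb{C}^m}$, and concludes that any feasible point for $S\ot\Lambda$ is already feasible for $S$ --- a two-line containment argument, at the cost of needing the ancilla-flexibility result. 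You instead pass to the dual~\eqref{eq:qth_dual} and construct $Y=Y_0\ot\proj{\Phi_{A''A'''}}$ from an optimal dual point for $S$, using the containment $S^\perp\ot\linop{A''}\subseteq(S\ot\Lambda)^\perp$; this is slightly longer but self-contained (your ancillas have exactly the dimensions stipulated in \cref{def:qth}, so you never need the enlargement result), and it works for any nonzero $\Lambda\succeq 0$. Both routes are sound; the paper's is shorter given the cited machinery, yours is more elementary relative to \cref{def:qth} alone. One cosmetic remark: the lemma intends $\Lambda\succ 0$ (as in \cref{def:homq_star}), so your reduction to ``$\Lambda\ne 0$'' is harmless but worth flagging, since for $\Lambda=0$ the statement is actually false.
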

\begin{proof}
    Suppose $S \subseteq \linop{A}$ and $\Lambda \in \linop{B}$.
    By~\eqref{eq:qth_primal} we have
    \begin{align}
        \label{eq:qth_norho}
        \qthperp(S) &= \max \{ \opnorm{I+T} :
            T \in S \ot \linop{\mathbb{C}^m}, \notag
        \\ &\qquad I+T \succeq 0 \}
        \\
        \label{eq:qth_withrho}
        \qthperp(S \ot \Lambda) &= \max \{ \opnorm{I+T} :
            T \in S \ot \Lambda \ot \linop{\mathbb{C}^n}, \notag
        \\ &\qquad I+T \succeq 0 \}.
    \end{align}
    In~\cite{dsw2013} is it shown that the ancillary space can be
    taken to be any dimension at least as large as $\dim(A)$,
    so in~\eqref{eq:qth_norho}-\eqref{eq:qth_withrho} we may take any values
    $m \ge \dim(A)$ and $n \ge \dim(A \ot B)$.

    Take $n = \dim(A \ot B)$ and $m = n \dim(B)$.
    Any $T$ feasible for~\eqref{eq:qth_withrho} is also feasible
    for~\eqref{eq:qth_norho} since $\Lambda \ot \linop{\mathbb{C}^n} \subseteq
        \linop{\mathbb{C}^m}$.
    So $\qthperp(S) \ge \qthperp(S \ot \Lambda)$.

    Now take $m = n = \dim(A \ot B)$.
    Let $T$ be feasible for~\eqref{eq:qth_norho}.
    Without loss of generality, assume $\opnorm{\Lambda}=1$.
    Then $T' := T \ot \Lambda$ is feasible for~\eqref{eq:qth_withrho}.
    Indeed, $T \succeq -I \implies T' \succeq -I$ since $\Lambda \succ 0$ and
    $\opnorm{\Lambda} \le 1$.
    Also, $\opnorm{I+T'} \ge \opnorm{I+T}$ since $\Lambda \succ 0$ and $\opnorm{\Lambda} \ge 1$.
    So $\qthperp(S \ot \Lambda) \ge \qthperp(S)$.
\end{proof}

Before we prove the main theorem, we introduce some notation that will also be used in
\cref{sec:schrijver}.
For any (finite dimensional) Hilbert space $A$, define the state
\begin{align}
    \ket{\Phi}_A = \sum_i \ket{i}_A \ot \ket{i}_{A'}.
    \label{eq:transposer}
\end{align}
where $A'$ is another Hilbert space of the same dimension as $A$.
Note that this provides an isomorphism between $A$ and the dual space of $A'$
via the action $\ket{\psi}_A \to \bra{\Phi} (\ket{\psi}_A \ot I_{A'})$.
A bar over an operator denotes entrywise complex conjugation in the standard basis
(i.e.\ the basis used in~\eqref{eq:transposer}).  Additionally, the bar will be understood
to move an operator to the primed spaces (or from primed to unprimed).
For example, if $J : A \to B \ot C$ then $\conj{J} : A' \to B' \ot C'$ is equal to
\begin{align}
    \label{eq:Jconj}
    \conj{J} = \Tr_{BC}\{ \ket{\Phi}_{BC} \bra{\Phi}_A J^\dag \}.
\end{align}
We now prove the main theorem of this section: that $\qthperp$ is monotone under
entanglement assisted homomorphisms.
Such an inequality was already known for classical graphs~\cite{6880319}.

\begin{theorem}
    \label{thm:qthmon}
    Let $S$ and $T$ be trace-free non-commutative graphs.
    If $S \homm T$ or $S \home T$ then $\qthperp(S) \le \qthperp(T)$.
\end{theorem}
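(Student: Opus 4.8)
The plan is to first reduce the entanglement-assisted case to the plain case using \cref{thm:sperp_Lambda}: if $S \home T$ then $S \ot \Lambda \homm T$ for some $\Lambda \succ 0$, and since $\qthperp(S \ot \Lambda) = \qthperp(S)$, it suffices to prove the implication for $S \homm T$. So assume $S \homm T$, witnessed by an isometry $J : A \to B \ot C$ with $JSJ^\dag \subseteq T \ot \linop{C}$, using form~\eqref{eq:homq_def3} of \cref{def:homq}.

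Next I would work from the dual program~\eqref{eq:qth_dual} for $T$ (or, symmetrically, the primal~\eqref{eq:qth_primal} for $S$). Take an optimal $Y$ feasible for $\qthperp(T)$, so $Y \in T^\perp \ot \linop{B'}$, $Y \succeq \ket{\Phi}_B\bra{\Phi}_B$, and $\opnorm{\Tr_B Y} = \qthperp(T)$. The idea is to pull $Y$ back through $J$ to build a feasible point for $\qthperp(S)$. The natural candidate is something like $Y' = (J^\dag \ot \conj{J}^\dag)\,(Y \ot I_{CC'})\,(J \ot \conj{J})$, acting on $A \ot A'$, using the bar/transpose conventions of~\eqref{eq:transposer}--\eqref{eq:Jconj}; one checks that $\ket{\Phi}_A$ maps to (a piece related to) $\ket{\Phi}_B$ under $J \ot \conj{J}$ since $(J \ot \conj J)\ket{\Phi}_A = \ket{\Phi}_{BC}$ — this is precisely what~\eqref{eq:Jconj} encodes. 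The three things to verify are: (i) $Y' \succeq \ket{\Phi}_A\bra{\Phi}_A$, which follows from $Y \succeq \ket{\Phi}_B\bra{\Phi}_B$ together with the $\ket{\Phi}$ compatibility and positivity of conjugation by $J \ot \conj J$; (ii) $Y' \in S^\perp \ot \linop{A'}$, which is where the homomorphism condition enters — we need $Y \in T^\perp \ot \linop{B'}$ to force $Y'$ orthogonal to $S \ot \linop{A'}$, and this should come from~\eqref{eq:homq_def4}, $J^\dag(T^\perp \ot \linop{C})J \subseteq S^\perp$, applied on the unprimed side (and its conjugate on the primed side); (iii) $\opnorm{\Tr_A Y'} \le \opnorm{\Tr_B Y}$, using that $J$ is an isometry so that partial traces contract the operator norm appropriately.

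The main obstacle I expect is step (ii): making the orthogonality bookkeeping go through cleanly across the tensor factors $A \ot A'$ versus $B \ot C \ot B' \ot C'$, and in particular checking that the ancilla space $A'$ is large enough (here \cref{thm:sperp_Lambda}'s remark that the ancilla in~\eqref{eq:qth_dual} may be taken of any dimension $\ge \dim A$ should be invoked, and one may need to enlarge it to absorb $C$). A subtlety is that $\conj{J}$ lives on primed spaces and one must track the bar carefully so that the pairing with $\ket{\Phi}$ produces genuine transposes rather than mismatched operators; the identity $\langle \Phi_A | (X_A \ot I_{A'}) = \langle \Phi_A|(I_A \ot \conj{X}_{A'})$ is the workhorse. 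Once the three feasibility/objective facts are in hand, we get $\qthperp(S) \le \opnorm{\Tr_A Y'} \le \opnorm{\Tr_B Y} = \qthperp(T)$, completing the proof. An alternative route, possibly cleaner, is to dualize: start from an optimal $X \in S \ot \linop{A'}$ with $I + X \succeq 0$ for $\qthperp(S)$ and push it \emph{forward} as $(J \ot \conj J)(I + X)(J \ot \conj J)^\dag$, then argue this sits inside $I + (T \ot \linop{C} \ot \cdots)$ up to the isometry's deficiency projector; but the pullback-of-the-dual version above keeps the norm estimate most transparent, so that is the one I would write up.
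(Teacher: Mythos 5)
Your overall strategy is the paper's: reduce the entanglement-assisted case via \cref{thm:sperp_Lambda}, then pull an optimal dual solution $Y$ for $\qthperp(T)$ back through the Stinespring isometry to get a feasible point for the dual program of $\qthperp(S)$, checking subspace membership, the $\succeq\proj{\Phi}$ constraint, and the norm bound. Steps (ii) and (iii) are sketched correctly. However, your candidate operator has a genuine gap: with $Y'=(J^\dag\ot\conj{J}^\dag)(Y\ot I_{CC'})(J\ot\conj{J})$ the constraint $Y'\succeq\proj{\Phi}_A$ fails in general. The reason is that $I_{CC'}\not\succeq\proj{\Phi}_C$ once $\dim C>1$ (the unnormalized $\proj{\Phi}_C$ has operator norm $\dim C$), so from $Y\succeq\proj{\Phi}_B$ you only obtain $Y'\succeq(J\ot\conj{J})^\dag(\proj{\Phi}_B\ot I_{CC'})(J\ot\conj{J})$, and that operator is not $\succeq\proj{\Phi}_A$. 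Concretely, for $J\ket{i}=\ket{i}_B\ot\ket{i}_C$ with $\dim A=\dim B=\dim C=2$, one computes $\bra{\Phi}_A(J\ot\conj{J})^\dag(\proj{\Phi}_B\ot I_{CC'})(J\ot\conj{J})\ket{\Phi}_A=2$, whereas $\bra{\Phi}_A\proj{\Phi}_A\ket{\Phi}_A=4$. So your $Y'$ need not be feasible, and the inequality $\qthperp(S)\le\opnorm{\Tr_A Y'}$ is unjustified. (Relatedly, your workhorse identity $(J\ot\conj{J})\ket{\Phi}_A=\ket{\Phi}_{BC}$ is only true for unitary $J$; the correct and needed statement is the adjoint form $(J\ot\conj{J})^\dag(\ket{\Phi}_B\ot\ket{\Phi}_C)=\ket{\Phi}_A$, which is~\eqref{eq:JJcancel}.)

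The fix is precisely the paper's construction~\eqref{eq:Y_from_Yprime}: contract the ancilla pair $CC'$ with the maximally entangled vector rather than the identity, i.e.\ take $(J\ot\conj{J})^\dag(\ket{\Phi}_C\ot Y\ot\bra{\Phi}_C)(J\ot\conj{J})=\sum_{ij}(E_i\ot\conj{E}_i)^\dag\, Y\,(E_j\ot\conj{E}_j)$. Then $Y\ot\proj{\Phi}_C\succeq\proj{\Phi}_B\ot\proj{\Phi}_C$ does hold (tensoring an operator inequality with a fixed positive semidefinite operator is safe), and~\eqref{eq:JJcancel} yields the required $\succeq\proj{\Phi}_A$. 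Note that your $I_{CC'}$ version corresponds to the incoherent Kraus sum $\sum_{ij}(E_i\ot\conj{E}_j)^\dag\, Y\,(E_i\ot\conj{E}_j)$, a genuinely different operator from the coherent sum needed here. With this replacement, your membership argument and the norm estimate $\opnorm{\Tr_A(\cdot)}\le\opnorm{\Tr_B Y}$ go through as in the paper.
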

\begin{proof}
    If we prove monotonicity under $S \homm T$ then monotonicity under $S \home T$ follows.
    Indeed, if $S \home T$ then there is a $\Lambda \succ 0$ such that $S \ot \Lambda \to T$.
    Supposing that $\qthperp$ is monotone under (non-entanglement assisted) homomorphisms,
    we have $\qthperp(S \ot \Lambda) \le \qthperp(T)$.
    \Cref{thm:sperp_Lambda} then gives $\qthperp(S) \le \qthperp(T)$.

    We now show that $S \homm T$ implies $\qthperp(S) \le \qthperp(T)$.
    This can be seen as a consequence of corollaries from~\cite{dsw2013}.
    Let $S \subseteq \linop{A}$ and $T \subseteq \linop{B}$ be trace-free non-commutative
    graphs with $S \homm T$.
    By \cref{def:homq} there is a Hilbert space $C$ and an isometry $J : A \to B \ot C$
    such that $J^\dag (T^\perp \ot \linop{C}) J \subseteq S^\perp$.
    Then
    \begin{align*}
        \qth(T^\perp) &= \qth(T^\perp) \qth(\linop{C})
            && \mbox{(Since $\qth(\linop{C})=1$)}
        \\ &= \qth(T^\perp \ot \linop{C})
            && \mbox{(Corollary 10 of~\cite{dsw2013})}
        \\ &\ge \qth(J^\dag (T^\perp \ot \linop{C}) J)
            && \mbox{(Corollary 11 of~\cite{dsw2013})}
        \\ &\ge \qth(S^\perp).
            && \mbox{(Corollary 11 of~\cite{dsw2013})}
    \end{align*}

    We present also a more direct proof, since this can later be generalized for the
    $\qthmperp{\cone}$ and $\qthpperp{\cone}$ quantities of \cref{sec:schrijver}.
    Let $S \subseteq \linop{A}$ and $T \subseteq \linop{B}$ be trace-free
    non-commutative graphs with $S \homm T$.
    By definition there is a CPTP map $\chan{E} : \linop{A} \to \linop{B}$ with Kraus
    operators $\{E_i\}$ such that
    $E^\dag T^\perp E \subseteq S^\perp$ where $E = \linspan\{ E_i \}$.
    Let $J : A \to B \ot C$ be the Stinespring isometry for the channel $\chan{E}$, so that
    $J = \sum_i \ket{i}_C E_i$.
    Let $\conj{J} : A' \to B' \ot C'$ be the entrywise complex conjugate of $J$.
    Recall that $\conj{J}$ takes the form~\eqref{eq:Jconj}.

    Let $Y' \subseteq \linop{B} \ot \linop{B'}$ be an optimal solution for~\eqref{eq:qth_dual}
    for $\qthperp(T)$.
    Define $Y \subseteq \linop{A} \ot \linop{A'}$ as
    \begin{align}
        Y &= \sum_{ij} (E_i \ot \conj{E}_i)^\dag Y' (E_j \ot \conj{E}_j) \notag
        \\ &= (J \ot \conj{J})^\dag (\ket{\Phi}_C \ot Y' \ot \bra{\Phi}_C) (J \ot \conj{J}).
        \label{eq:Y_from_Yprime}
    \end{align}
    We have that
    \begin{align}
        Y' \in T^\perp \ot \linop{B'}
        &\implies Y \in E^\dag T^\perp E \ot \conj{E}^\dag \linop{B'} \conj{E} \notag
        \\ &\implies Y \in S^\perp \ot \linop{A'}.
        \label{eq:JJYJJ_in_Sperp}
    \end{align}

    Dirac notation becomes unworkable with so many Hilbert spaces, so we will make use of
    diagrams similar to those of~\cite{PhysRevA.73.052309} (such diagrams have been used
    for example in~\cite{PhysRevA.84.032316,PhysRevA.84.022333}).
    Operators and states are denoted by labeled boxes and multiplication (or traces or tensor
    contraction) by wires.  This is very much like standard quantum circuits, except that the
    diagram has no interpretation of time ordering.
    Wires have an arrow pointing away from the ket space and towards the bra space, and are
    labeled with the corresponding Hilbert space.
    Labeled circles represent the transposer operators~\eqref{eq:transposer}:
    \begin{center}
        \includegraphics{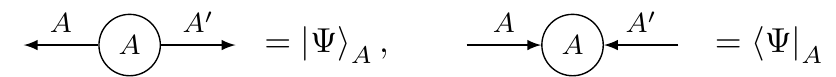}
    \end{center}
    With this notation,~\eqref{eq:Y_from_Yprime} becomes
    \begin{equation}
        \label{eq:Y_def_diagram}
        \begin{minipage}[c]{0.85\linewidth}
            \centering
            \includegraphics{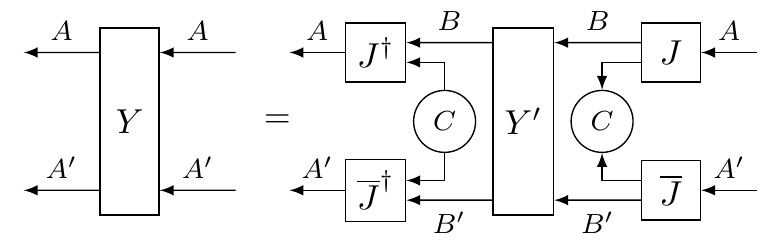}
        \end{minipage}
    \end{equation}
    The operator
    $(J \ot \conj{J})^\dag (\ket{\Phi}_C \ot I_{BB'})$
    turns the transposer $\ket{\Phi}_B$ into $\ket{\Phi}_A$:
    \begin{align}
        (J \ot \conj{J})^\dag (\ket{\Phi}_C \ot \ket{\Phi}_B)
        &= (J^\dag \ot \conj{J}^\dag) (\ket{\Phi}_C \ot \ket{\Phi}_B) \notag
        \\ &= (J^\dag J \ot I) \ket{\Phi}_A = \ket{\Phi}_A.
        \label{eq:JJcancel}
    \end{align}
    With diagrams, the same derivation is written as in \cref{fig:JJcancel}.
    \begin{figure*}
        \centering
        \includegraphics{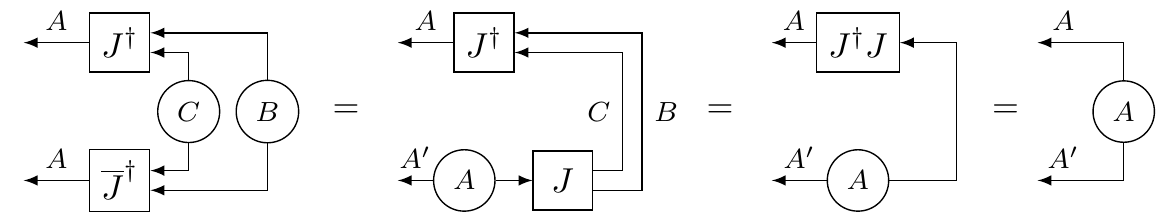}
        \caption{Graphical representation of \eqref{eq:JJcancel}.}
        \label{fig:JJcancel}
    \end{figure*}
    Consequently we have $Y' \succeq \proj{\Phi}_B \implies Y \succeq \proj{\Phi}_A$ so $Y$ is
    feasible for~\eqref{eq:qth_dual} for $\qthperp(S)$.
    To get $\qthperp(S) \le \qthperp(T)$ it remains only to show
    $\opnorm{\Tr_A Y} \le \opnorm{\Tr_B Y'}$.
    Let $\rho \in \linop{A'}$ be a density operator achieving
    $\Tr\{ (I_A \ot \rho) Y \} = \opnorm{\Tr_A Y}$.
    Plugging~\eqref{eq:Y_def_diagram} (the definition of $Y$) into $\Tr\{ (I_A \ot \rho) Y \}$ gives
    the diagram of \cref{fig:Yval}.
    \begin{figure*}
        \centering
        \includegraphics{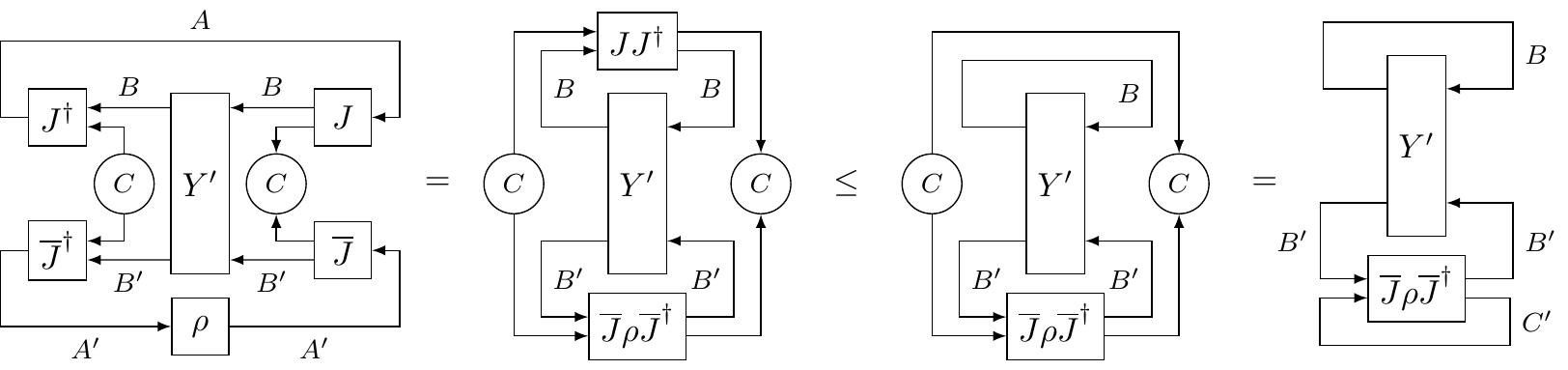}
        \caption{Graphical representation of \eqref{eq:mon_norm_bound}.}
        \label{fig:Yval}
    \end{figure*}
    The first equality involves only a rearrangement of the diagram.
    The inequality follows from the fact that $JJ^\dag \preceq I_{BC}$ (since $JJ^\dag$ is a
    projector) and the rest of the diagram represents a positive semidefinite
    operator.\footnote{
        If a portion of a diagram has reflection symmetry, with the operators located on the
        reflection axis being positive semidefinite, then that portion of the diagram is
        positive semidefinite~\cite{PhysRevA.73.052309}.
        This follows from the fact that $M \succeq 0 \implies N^\dag M N \succeq 0$.
    }
    The last equality uses $\Tr_C\{ \proj{\Phi}_C \} = I_{C'}$.
    The same derivation can be written in equations as
    \begin{align}
        &\Tr_{AA'} \{ (I_A \ot \rho) Y \} \notag
        \\ &= \Tr_{AA'} \{ (I_A \ot \rho) (J \ot \conj{J})^\dag
            (\ket{\Phi}_C \ot Y' \ot \bra{\Phi}_C) (J \ot \conj{J}) \} \notag
        \\ &= \Tr_{BB'CC'} \{ (J J^\dag \ot \conj{J} \rho \conj{J}^\dag)
            (\ket{\Phi}_C \ot Y' \ot \bra{\Phi}_C) \} \notag
        \\ &\le \Tr_{BB'CC'} \{ (I_{BC} \ot \conj{J} \rho \conj{J}^\dag)
            (\ket{\Phi}_C \ot Y' \ot \bra{\Phi}_C) \} \notag
        \\ &= \Tr_{BB'C'} \{ (I_B \ot \conj{J} \rho \conj{J}^\dag) (Y' \ot I_{C'}) \}.
        \label{eq:mon_norm_bound}
    \end{align}
    Since $\conj{J}$ is an isometry, $\Tr_{C'}\{ \conj{J} \rho \conj{J}^\dag \}$ is a density
    operator.  So~\eqref{eq:mon_norm_bound} is bounded from above by $\opnorm{\Tr_B Y'}$ and we
    have $\opnorm{\Tr_A Y} \le \opnorm{\Tr_B Y'}$ as desired.
\end{proof}

An immediate corollary is that $\qthperp$ satisfies a sandwich theorem analogous
to~\eqref{eq:sandwich}.

\begin{corollary}
    \label{thm:sandwich_nc}
    Let $S$ be a trace-free non-commutative graph.  Then
    \begin{enumerate}
        \item $\cliquece(S) \le       \qthperp(S)  \le \chromce(S)$.
        \item $\cliqueqe(S) \le \sqrt{\qthperp(S)} \le \chromqe(S)$.
    \end{enumerate}
\end{corollary}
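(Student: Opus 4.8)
The plan is to derive \cref{thm:sandwich_nc} directly from the monotonicity of $\qthperp$ established in \cref{thm:qthmon}, together with the definitions of the clique and chromatic quantities (\cref{def:omega_xi,def:homq_star}) and the values of $\qthperp$ on the complete graphs $K_n$ and $Q_n$. The key observation is that all six of $\cliquece,\cliqueqe,\chromce,\chromqe$ are defined via the existence of an entanglement-assisted homomorphism to or from $K_n$ or $Q_n$, and $\qthperp$ is monotone under $\home$; so each inequality reduces to computing $\qthperp$ (equivalently $\qth$ of the complement) of a complete graph.

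First I would record the base cases. For $K_n$, its complement (in the operator-space sense) is $K_n^\perp = \linspan\{\proj{x}\}$, the diagonal matrices, so $K_n$ derives from the loopless classical complete graph on $n$ vertices; hence by the reduction noted after \cref{def:qth}, $\qthperp(K_n) = \thbar(\overline{K_n}) = \vartheta(K_n) \cdot$\,(appropriately)\,$= n$. Concretely one checks $\qthperp(K_n)=n$ either from the classical Lov\'asz value of the complete graph or directly from the primal program~\eqref{eq:qth_primal}. For $Q_n$, we have $Q_n^\perp = \mathbb{C}I$, and $\qth(\mathbb{C}I)$ is a quantity whose value is $n^2$: indeed $\qth(\mathbb{C}I_n) = \qthperp(Q_n)$, and one evaluates it from~\eqref{eq:qth_primal} by taking $X$ to be any trace-free Hermitian operator on $\mathbb{C}^n\ot\mathbb{C}^{n}$ with $I+X\succeq0$, whose maximal operator norm is $n^2$ (this is exactly the computation underlying the dense-coding bound $\cliquece(Q_n)=n^2$ quoted in \cref{sec:qsrcchan}). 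So $\qthperp(K_n)=n$ and $\qthperp(Q_n)=n^2$.

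Given those two values, each of the four bounds is immediate. For the lower bound in (1): if $\cliquece(S)=n$ then $K_n \home S$, so by \cref{thm:qthmon} $n = \qthperp(K_n) \le \qthperp(S)$. For the upper bound in (1): if $\chromce(S)=n$ then $S \home K_n$, so $\qthperp(S) \le \qthperp(K_n) = n$. For (2), the quantum complete graph contributes a square: if $\cliqueqe(S)=n$ then $Q_n \home S$, so $n^2 = \qthperp(Q_n) \le \qthperp(S)$, i.e. $n \le \sqrt{\qthperp(S)}$; and if $\chromqe(S)=n$ then $S \home Q_n$ gives $\qthperp(S)\le \qthperp(Q_n)=n^2$, i.e. $\sqrt{\qthperp(S)} \le n$. (One should note the degenerate cases: if no homomorphism $S\home K_n$ exists for any $n$ then $\chromce(S)=\infty$ and the upper bound in (1) is vacuous; similarly if $\cliquece(S)$ or $\cliqueqe(S)$ is undefined one takes the convention that it is $1$, and $\qthperp(S)\ge 1$ always since $X=0$ is feasible in~\eqref{eq:qth_primal}.)

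The only real content is the evaluation $\qthperp(Q_n)=n^2$ (and $\qthperp(K_n)=n$); everything else is a one-line application of \cref{thm:qthmon}. I expect the computation of $\qth(\mathbb{C}I_n)$ to be the main obstacle to write cleanly, though it is standard: using the dual program~\eqref{eq:qth_dual} with $S^\perp = Q_n^\perp{}^\perp{} $ one takes $Y = \proj{\Phi}$ (with $\ket{\Phi}$ unnormalized as in \cref{def:qth}), which lies in $\mathbb{C}I\ot\linop{A'}$\,(after the identification), satisfies $Y\succeq\proj{\Phi}$ trivially, and has $\opnorm{\Tr_A Y} = \opnorm{I_{A'}}\cdot n = \dots = n^2$ once one tracks the normalization; matching lower bound from the primal then pins the value. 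Since these complete-graph values also appear implicitly in the dense-coding and teleportation examples already given, I would cite those computations rather than redo them in full.
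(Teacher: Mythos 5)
Your proof is correct and takes essentially the same route as the paper's: apply \cref{thm:qthmon} to \cref{def:homq_star} and use the values $\qthperp(K_n)=n$ and $\qthperp(Q_n)=n^2$, which the paper does not recompute but simply cites to the literature. One caution about your optional sketch of the dual computation for $Q_n$: the point $Y=\proj{\Phi}$ is not feasible for~\eqref{eq:qth_dual} because it does not lie in $Q_n^\perp\ot\linop{A'}=\mathbb{C}I\ot\linop{A'}$ (and in any case $\opnorm{\Tr_A \proj{\Phi}}=\opnorm{I_{A'}}=1$, not $n^2$); the correct feasible point is $Y=nI\ot I$, which satisfies $Y\succeq\proj{\Phi}$ since $\opnorm{\proj{\Phi}}=n$ and gives $\opnorm{\Tr_A Y}=n^2$, as used in \cref{thm:qthm_qthp_complete}.
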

\begin{proof}
    This follows from applying \cref{thm:qthmon} to \cref{def:homq_star} and using the
    fact that $\qthperp(K_n) = n$ and $\qthperp(Q_n) = n^2$.
    For example, $\cliquece(S)$ is the greatest $n$ such that $K_n \home S$.  By
    \cref{thm:qthmon} we have $\qthperp(K_n) \le \qthperp(S)$.  Since
    $\qthperp(K_n)=n$ this gives $n \le \qthperp(S)$.

    That $\qthperp(K_n)=n$ and $\qthperp(Q_n)=n^2$ is not hard to work
    out, and is proved in~\cite{dsw2013}.
\end{proof}

The bound on $\cliquece(S)$ and $\cliqueqe(S)$ was shown already in~\cite{dsw2013},
although it was stated in terms of $\indepce$ and $\indepqe$.
The bound on $\chromce(S)$ and $\chromqe(S)$ is new, although the bound on
$\chromce(S)$ was known already for classical graphs~\cite{6994835}.
Note that, since for example $\cliquec(S) \le \cliquece(S)$, we also have the weaker
sandwich theorem
$\cliquec(S) \le       \qthperp(S)  \le \chromc(S)$ and
$\cliqueq(S) \le \sqrt{\qthperp(S)} \le \chromq(S)$.

This bound $\sqrt{\qthperp(S)} \le \chromq(S)$ is not particularly tight when $S$
corresponds to a classical graph $G$, for the following reason.
In such cases $\chromq(S) = \orthrank(G)$, the orthogonal rank.
But it is known that $\qthperp(S) = \thbar(G) \le \orthrank(G)$~\cite{lovasz79},
so in this case the square root over $\qthperp$ is unnecessary.
The necessity of the square root arises from the possibility of dense
coding, since we are bounding the entanglement assisted quantities in
\cref{thm:sandwich_nc}.
Notice that $\cliqueqe(S) = \left\lfloor \sqrt{\cliquece(S)} \right\rfloor$ and
$\chromqe(S) = \left\lceil \sqrt{\chromce(S)} \right\rceil$ since
a quantum channel of dimension $n$ can simulate a classical channel of dimension $n^2$,
and teleportation can do the reverse.

The square root in \cref{thm:sandwich_nc} could be eliminated by defining a
different generalization of Lov{\'a}sz's $\thbar$ that is monotone under homomorphisms
and which takes the value $n$ on the graph $Q_n$.
Such a quantity would necessarily not be monotone under entanglement assisted
homomorphisms, since $Q_2 \home K_4$.
Finding such a quantity is left as an open question.

\Cref{thm:qthmon} can be applied to give bounds for all of the examples in \cref{sec:qsrcchan}.
Two are especially noteworthy: \cref{thm:qthmon} gives the well known bounds
$n \le m^2$ for dense coding and $n^2 \le m$ for teleportation (where $n$ is the dimension
of the source and $m$ is the dimension of the channel).

\section{Graph products and block coding}
\label{sec:repetitions}

Consider the problem of sending several parallel sources using several parallel channels.
In general these several sources (as well as the channels) could all be distinct, and we
will in fact consider this.
In the special case where the sources are identical, as well as the channels, one may ask
how many channel uses are required for each instance of the source.
This is known as the cost rate.
For classical sources and channels, we saw already (\cref{thm:costrate_classical}) that a
bound on cost rate is given in terms of the Lov{\'a}sz $\thbar$ number.
The goal of this section is to prove an analogous bound in the case of quantum sources and
channels.
To build this theory, we begin with an investigation of the classical case.

Consider two channels $\cN(v|s)$ and $\cN'(v'|s')$ having distinguishability graphs $H$ and
$H'$.  It is not hard to see that the composite channel
$\cN''(v,v'|s,s') = \cN(v|s) \cN'(v'|s')$ has a distinguishability graph with
vertices $V(H) \times V(H')$ and edges
\begin{align}
    \label{eq:disjunctive_classical}
    (x,x') \sim (y,y') \iff (x \sim y) \textrm{ or } (x' \sim y').
\end{align}
This is known as the \emph{disjunctive product}, denoted $H \djp H'$.
If $n$ identical copies of $\cN$ are used in parallel, the resulting composite
channel will have distinguishability graph $H^{\djp n} = H \djp H \djp \dots \djp H$.
Since the one-shot capacity of a channel is $\log \omega(H)$ bits, the capacity
(per-channel use) of $n$ parallel channels is $\frac{1}{n} \log \omega(H^{\djp n})$.
The capacity in the limit $n \to \infty$ is known as the
\emph{Shannon capacity} of the channel,
\begin{align}
    \label{eq:shannon_H}
    C_0(\Hc) = \lim_{n \to \infty} \frac{1}{n} \log
        \cliquec(H^{\djp n}).
\end{align}
Since $\cliquec(H^{\djp n})$ is super-multiplicative, by Fekete's lemma this limit exists
and is equal to the supremum.
The complement in the argument of $C_0(\Hc)$ is because we consider the
distinguishability graph rather than the confusability graph, in terms of
which $C_0$ is typically defined.
Since $\thbar(H^{\djp n}) = \thbar(H)^n$,
it holds that $C_0(\Hc) \le \log \thbar(H)$~\cite{lovasz79}.
In fact this was the original motivation for defining the $\vartheta$ number.

Now consider parallel sources.
Recall from \cref{sec:classical} that the sources $P(x,u|i)$ we consider are somewhat
generalized from what is traditionally considered.
The traditional definition is obtained by requiring $P(x,u|i) \ne 0$ only when $x=i$.
In this case, the characteristic graphs of parallel sources combine by the \emph{strong
product}~\cite{witsen76} which has vertices $V(G) \times V(G')$ and edges
\begin{align}
    (x,x') \sim (y,y')
    \iff &(x \sim y \textrm{ and } x' \sim y') \textrm{ or } \notag
    \\ &(x = y \textrm{ and } x' \sim y') \textrm{ or } \notag
    \\ &(x \sim y \textrm{ and } x' = y').  \label{eq:strong_prod}
\end{align}
Adapting this to non-commutative graphs is problematic because there is no clear analogue
of the condition $x=y$.
But already for our generalized sources, which can have $P(x,u|i) \ne 0$ when
$x \ne i$, the product rule needs modification.

Consider two parallel sources $P(x,u|i)$ and $P(x',u'|i')$ (these can
be over different alphabets) with characteristic graphs $G$ and $G'$.
Call the combined source $P''(x,x',u,u'|i,i') := P(x,u|i) P(x',u'|i')$.
This has characteristic graph $G''$ with vertex set
$V(G) \times V(G')$ and edges given by a generalization of~\eqref{eq:strong_prod}.
To this end, we introduce a graph $G_0$ having the same vertices as $G$ but with edges
\begin{align}
    \label{eq:G0}
    x \sim_{G_0} y \iff \exists u, \exists i &\textrm{ s.t.\ } P(x,u|i) P(y,u|i) \ne 0.
\end{align}
$G'_0$ is defined similarly.
If $P(x,u|i) \ne 0$ only when $x=i$ then $G_0$
has edges $x \sim_{G_0} y \iff x=y$.  In other words $G_0$
consists only of loops.  So~\eqref{eq:G0} can be regarded as a set of generalized loops.
We will call the pair $(G, G_0)$ a \emph{graph with generalized loops}.
%By convention, we will only consider cases where $x \sim_{G_0} y \implies x \not\sim_G y$.
%Note that this rule would only be violated by sources for which source coding would not be
%possible; indeed if $x \sim_{G_0} y$ and $x \sim_G y$ then\dots
%NOPE: it is possible! Consider P(x=i,u=0|i) \ne 0 and P(x=anything,u=i|i) \ne 0.
We can now compute $G''$, the characteristic graph for the composite source:
\begin{align}
    \notag (x,x') \sim_{G''} &(y,y')
    \\ \notag \iff &\exists u, u', \exists (i,i') \ne (j,j') \textrm{ s.t.\ }
    \\ \notag &P''(x,x',u,u'|i,i') P''(y,y',u,u'|j,j') \ne 0
    \\ \notag \iff &(x \sim_G y \textrm{ and } x' \sim_{G'} y' ) \textrm{ or }
    \\ \notag &(x \sim_{G_0} y \textrm{ and } x' \sim_{G'} y' ) \textrm{ or }
    \\ &(x \sim_G y \textrm{ and } x' \sim_{G'_0} y' ).
     \label{eq:Gpp}
\end{align}
And the graph $G''_0$, defined analogously to~\eqref{eq:G0}, has edges
\begin{align}
    \notag (x,x') \sim_{G''_0} &(y,y')
    \\ \notag \iff &\exists u, u', \exists i,i' \textrm{ s.t.\ }
    \\ \notag &P''(x,x',u,u'|i,i') P''(y,y',u,u'|i,i') \ne 0
    \\ \iff &x \sim_{G_0} y \textrm{ and } x' \sim_{G'_0} y'.
    \label{eq:Gpp0}
\end{align}
We introduce the notation $(G'', G''_0) = (G, G_0) \boxtimes (G', G'_0)$ as shorthand
for~\eqref{eq:Gpp}-\eqref{eq:Gpp0}.
By induction, $m$ parallel instances of a source yields a characteristic graph
$(G, G_0)^{\boxtimes m} := (G, G_0) \boxtimes (G, G_0) \boxtimes \dots \boxtimes (G, G_0)$.

For convenience we will abuse notation by treating these ordered pairs as being graphs
themselves.  For instance, $(G, G_0)^{\boxtimes m} \homm H$ will be taken to mean
$G' \homm H$ where $(G', G'_0) = (G, G_0)^{\boxtimes m}$; similarly
$\thbar((G, G_0)^{\boxtimes m})$ will be taken to mean $\thbar(G')$ and
$(G, G_0)^{\boxtimes m} \supseteq G^{\boxtimes m}$ to mean
$G' \supseteq G^{\boxtimes m}$.

Now we can show that the condition $P(x,u|i) \ne 0$ only when $x=i$ can be dropped in
\cref{thm:costrate_classical}.
This is only a slight generalization of~\cite{1705019}.
We will later generalize this to quantum sources and quantum channels.
\begin{proposition}
    \label{thm:generalized_costrate}
    Let $P(x,u|i)$ be a classical source and $\cN(v|s)$ a classical channel.
    Let graphs $G$ and $H$ be given by~\eqref{eq:chargraph} and~\eqref{eq:distinguishgraph}.
    Then $m$ parallel instances of the source can be sent using $n$ parallel instances of
    the channel only if
    \begin{align*}
        \frac{n}{m} \ge \frac{\log \thbar(G)}{\log \thbar(H)}.
    \end{align*}
\end{proposition}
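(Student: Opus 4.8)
The plan is to reduce the cost-rate claim to a single application of \cref{thm:src_chan_hom} for the ``blocked'' source and channel, and then squeeze the Lov\'asz number through the homomorphism this produces. First I would record the two product rules. By induction on~\eqref{eq:Gpp}--\eqref{eq:Gpp0}, $m$ independent copies of the source $P$ form a source whose characteristic graph is $(G,G_0)^{\boxtimes m}$; by induction on~\eqref{eq:disjunctive_classical}, $n$ independent copies of the channel $\cN$ form a channel whose distinguishability graph is $H^{\djp n}$. Sending $m$ source instances using $n$ channel instances is precisely a one-shot coding protocol for this composite source and channel (Bob must output the whole tuple $(i_1,\dots,i_m)$, which is exactly the composite index), so \cref{thm:src_chan_hom} says it is possible only if $G'\homm H^{\djp n}$, where $(G',G_0'):=(G,G_0)^{\boxtimes m}$.

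It then remains to sandwich $\thbar(G')$. The upper side is immediate: monotonicity~\eqref{eq:thcmon} gives $\thbar(G')\le\thbar(H^{\djp n})$, and $\thbar(H^{\djp n})=\thbar(H)^n$ since $\thbar$ is multiplicative under the disjunctive product (the same fact used for the Shannon capacity bound). For the lower side the key observation is that the generalized loops $G_0$ carry a loop on \emph{every} vertex, since each Alice input occurs for at least one pair $(u,i)$; comparing~\eqref{eq:Gpp} with the definition~\eqref{eq:strong_prod} of the ordinary strong product then shows $G'\supseteq G^{\boxtimes m}$, so the extra edges only help us. Hence $\thbar(G')\ge\thbar(G^{\boxtimes m})\ge\thbar(G)^m$, using that $\thbar$ is monotone under subgraph inclusion (\cref{thm:homfacts} plus~\eqref{eq:thcmon}) and supermultiplicative under the strong product. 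I would justify the last point on the spot from the first semidefinite program defining $\thbar$: if $T$ and $T'$ are optimal for $\thbar(G_1)$ and $\thbar(G_2)$, then $(I+T)\otimes(I+T')-I$ has vanishing diagonal, is supported on the edges of $G_1\boxtimes G_2$ (a one-line check on each of the three edge types in~\eqref{eq:strong_prod}), remains positive semidefinite after adding $I$, and has operator norm $\thbar(G_1)\thbar(G_2)$, so it is feasible for $\thbar(G_1\boxtimes G_2)$; iterating gives $\thbar(G^{\boxtimes m})\ge\thbar(G)^m$.

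Chaining the two bounds yields $\thbar(G)^m\le\thbar(G')\le\thbar(H)^n$; taking base-$2$ logarithms and dividing by $m\log\thbar(H)$ (positive except in the degenerate case $\thbar(H)=1$, where the channel transmits nothing) gives $n/m\ge\log\thbar(G)/\log\thbar(H)$. I expect the only genuinely delicate step to be the reduction to \cref{thm:src_chan_hom}: one must check that the block problem for the generalized source really is one-shot coding for a composite source of the same form $P(x,u\mid i)$ (this is the \emph{extra care} flagged in the footnote of \cref{sec:classical}), and that the inclusion $G^{\boxtimes m}\subseteq(G,G_0)^{\boxtimes m}$ points in the direction needed to lower-bound $\thbar(G')$. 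Everything past that is bookkeeping with facts already in hand.
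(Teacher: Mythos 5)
Your proposal is correct and follows essentially the same route as the paper: block the source via $(G,G_0)^{\boxtimes m}$ and the channel via $H^{\djp n}$, note that $G_0$ having loops on every vertex makes the generalized strong product contain the ordinary one, and squeeze $\thbar$ through the homomorphism using its (super)multiplicativity under the strong and disjunctive products. The only cosmetic differences are that you prove supermultiplicativity from the SDP where the paper cites Knuth, and that the paper makes your "each $x$ occurs" assumption an explicit without-loss-of-generality reduction (discarding isolated vertices of $G$, which leaves $\thbar(G)$ unchanged).
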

\begin{proof}
    Without loss of generality assume that each $x$ is possible.
    In other words, assume that for each $x$ there is an $i$ and $u$ such that
    $P(x,u|i) \ne 0$.  Generality is not lost because one can decrease the alphabet
    associated with $x$, removing values that can never occur.
    Reducing this alphabet only removes isolated vertices from $G$, and so doesn't affect
    the value of $\thbar(G)$.
    Let $G_0$ be defined as in~\eqref{eq:G0}.  Since each $x$ is possible, this graph has
    loops on all vertices: $x \sim_{G_0} x$ for all $x$.

    As per the above discussion, the composite source (consisting of $m$ parallel
    instances of $P(x,u|i)$) will have characteristic graph $(G, G_0)^{\boxtimes m}$.
    Since $G_0$ has loops on all vertices, our generalized strong product~\eqref{eq:Gpp}
    has at least as many edges as the standard strong product~\eqref{eq:strong_prod}.
    Since $\thbar$ is monotone increasing under addition of edges and is multiplicative
    under the strong product~\cite{knuth94} we have
    $\thbar((G, G_0)^{\boxtimes m}) \ge \thbar(G^{\boxtimes m}) = \thbar(G)^m$.

    The distinguishability graph of $n$ parallel instances of the channel $\cN(v|s)$ is
    $H^{\djp n}$.  Since $\thbar$ is multiplicative under the disjunctive
    product~\cite{lovasz79} we have $\thbar(H^{\djp n}) = \thbar(H)^n$.
    If $m$ parallel sources can be sent via $n$ parallel channels then
    $(G, G_0)^{\boxtimes m} \homm H^{\djp n}$.  Since $\thbar$ is monotone under
    homomorphisms,
    \begin{align*}
        (G, G_0)^{\boxtimes m} \homm H^{\djp n}
        &\implies \thbar((G, G_0)^{\boxtimes m}) \le \thbar(H^{\djp n})
        \\ &\implies \thbar(G)^m \le \thbar(H)^n
        \\ &\implies \frac{n}{m} \ge \frac{\log \thbar(G)}{\log \thbar(H)}.
    \end{align*}
\end{proof}

%The smallest possible ratio $n/m$ (in the limit $m \to \infty$) is called the
%\emph{cost rate},
%\begin{align}
%    \label{eq:costrate}
%    \eta((G, G_0), \Hc) = \lim_{m \to \infty} \frac{1}{m} \min\left\{
%        n : (G, G_0)^{\boxtimes m} \homm H^{\djp n}
%        \right\}.
%\end{align}

Similar arguments apply for quantum source-channel coding.
It is easy to see that the confusability graphs for parallel channels should combine by tensor
product since the Kraus operators combine by tensor product.  We have been using instead the
distinguishability graph, which then combines as $(S^\perp \ot T^\perp)^\perp$.  We take
this as the definition of disjunctive product:
\begin{definition}
    \label{def:disjunctive_nc}
    Let $S \subseteq \linop{A}$ and $T \subseteq \linop{B}$ be non-commutative graphs.
    Their \emph{disjunctive product} is
    $S \djp T = S \ot \linop{B} + \linop{A} \ot T
    = (S^\perp \ot T^\perp)^\perp$.
\end{definition}
When $S$ and $T$ derive from classical graphs this definition is equivalent
to~\eqref{eq:disjunctive_classical}.
We will use the notation $S^{\djp n} := S \djp S \djp \dots \djp S$.
Analogous to~\eqref{eq:shannon_H}, the Shannon capacity of a quantum channel with
distinguishability graph $T$ is
\begin{align*}
    C_0(T^\perp) = \lim_{n \to \infty} \frac{1}{n} \log \cliquec(T^{\djp n})
\end{align*}
It is known that $\qthperp(T)$ is an upper bound on $C_0(T^\perp)$, since
$\qthperp(T^{\djp n}) = \qth((T^\perp)^{\ot n}) = \qthperp(T)^n$~\cite{dsw2013}.

Consider now two parallel sources, with characteristic graphs $S$ and $S'$.
Analogous to~\eqref{eq:G0} we define $(S,S_0)$, a \emph{non-commutative graph
with generalized loops}.
For discrete QSCC, the subject of \cref{thm:fgame}, define
\begin{align}
    S &= \Tr_{BC}\{ \linop{C} J K_r J^\dag \} \notag
    \\ S_0 &= \Tr_{BC}\{ \linop{C} J K_r^\perp J^\dag \}
    \label{eq:S0_fgame}
\end{align}
and for coherent QSCC, the subject of \cref{thm:fgame_coh}, define
\begin{align}
    S &= \Tr_{BC}\{ \linop{C} J Q_r J^\dag \} \notag
    \\ S_0 &= \Tr_{BC}\{ \linop{C} J Q_r^\perp J^\dag \} \notag
    \\ &= \Tr_{BC}\{ \linop{C} J J^\dag \},
    \label{eq:S0_fgame_coh}
\end{align}
where $J : R \to A\ot B\ot C$ and $r = \dim(R)$.
Analogous to~\eqref{eq:Gpp}-\eqref{eq:Gpp0} define the strong product
$(S'',S''_0) = (S, S_0) \boxtimes (S', S'_0)$ as
\begin{align}
    S'' &= (S \ot S') + (S_0 \ot S') + (S \ot S'_0) \notag
    \\ S''_0 &= S_0 \ot S'_0.
    \label{eq:noncomm_strong}
\end{align}
If $S,S_0,S',S'_0$ correspond to classical graphs $G,G_0,G',G'_0$ then this product
corresponds to the classical graph $(G, G_0) \boxtimes (G', G'_0)$.
If $G_0$ and $G'_0$
consist of only loops on each vertex (i.e.\ $S_0 = \linspan\{ \proj{x} \}$
and similarly for $S'_0$) then this corresponds to $G \boxtimes G'$.
Define the graph power $(S, S_0)^{\boxtimes m}$ to be repeated application
of~\eqref{eq:noncomm_strong}.

Other graph products could be defined similarly.
For example, the Cartesian product of graphs, $G \cart G'$ is defined to have edges
$(x,x') \sim (y,y') \iff (x=y \wedge x' \sim y') \vee (x \sim y \wedge x'=y')$,
so for non-commutative graphs one could define
$(S'',S''_0) = (S, S_0) \cart (S', S'_0)$ with
$S'' = (S_0 \ot S') + (S \ot S'_0)$ and $S''_0 = S_0 \ot S'_0$.
The complement of a graph has edges $x \not\sim y \wedge x\ne y$, which would have
non-commutative analogue $\overline{(S,S_0)} = (S^\perp \setminus S_0, S_0)$,
assuming $S_0 \subseteq S^\perp$.
We will not have occasion to consider such constructions, but mention it as a starting
point for possible development of a richer theory of non-commutative graphs.
A similar concept was explored in~\cite{arxiv:1002.2514}; however, they suggested a
specific form of $S_0$ in terms of the multiplicative domain of a channel whereas we leave
the form of $S_0$ to be determined by the application at hand.

As before, we abuse notation and take $(S,S_0)^{\boxtimes m} \homm T$
to mean $S' \homm T$ where $(S',S'_0)=(S,S_0)^{\boxtimes m}$, and
$\thbar((S, S_0)^{\boxtimes m})$ to mean $\thbar(S')$.
The strong product~\eqref{eq:noncomm_strong} indeed corresponds to the
characteristic graph of parallel sources:

\begin{theorem}
    \label{thm:product_coding}
    Consider discrete QSCC with two sources
    $\{ \rho_i \}_{i \in \{1,\dots,r\}} \subseteq \linop{A} \ot \linop{B}$ and
    $\{ \rho'_i \}_{i \in \{1,\dots,r'\}} \subseteq \linop{A'} \ot \linop{B'}$.
    As in \cref{thm:fgame}, for each $i$ let
    $\ket{\psi_i} \in A \ot B \ot C$ be a purification of $\rho_i$ and define the
    isometry $J : R \to A\ot B\ot C$ with $R = \mathbb{C}^r$ as
    $J = \sum_{i=1}^r \ket{\psi_i}\bra{i}$.
    Define $\ket{\psi'_i}$ and $J'$ similarly.

    Let $(S,S_0)$ and $(S',S'_0)$ be the characteristic graphs (with generalized loops) for
    these two sources, as defined by~\eqref{eq:S0_fgame}, and similarly
    $(S'',S''_0)$ for the joint source $\{  \ket{\psi_i} \ot \ket{\psi'_{i'}} \}_{i,i'}$.
    Then it holds that $(S'',S''_0) = (S, S_0) \boxtimes (S', S'_0)$.
    These two sources can be sent using one copy of the channel $\chan{N}$ iff
    \begin{align}
        (S, S_0) \boxtimes (S', S'_0) \homm T
        \label{eq:homSST}
    \end{align}
    where $T = (N^\dag N)^\perp$.

    The analogous statement holds for coherent QSCC, where now
    $(S,S_0)$, $(S',S'_0)$, and $(S'',S''_0)$ are defined using~\eqref{eq:S0_fgame_coh}
    rather than~\eqref{eq:S0_fgame}.

    In either case (discrete or coherent QSCC),
    it is possible to send $m$ copies of a source using $n$ copies of a channel iff
    \begin{align}
        (S, S_0)^{\boxtimes m} \homm T^{\djp n}.
        \label{eq:homSmTn}
    \end{align}
\end{theorem}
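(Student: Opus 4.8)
The plan is to reduce the whole statement to \cref{thm:fgame} and \cref{thm:fgame_coh} together with a single algebraic fact: on a tensor‑product index space the complete graph $K_{rr'}$ (and likewise $Q_{rr'}$) decomposes as a strong product of the complete graphs on the factors. Writing $\mathbb{C}^{rr'} = \mathbb{C}^r \ot \mathbb{C}^{r'}$ and using $\linop{\mathbb{C}^r} = K_r \oplus K_r^\perp$, one checks directly that
\begin{align*}
    K_{rr'} &= (K_r \ot K_{r'}) + (K_r^\perp \ot K_{r'}) + (K_r \ot K_{r'}^\perp),
    & K_{rr'}^\perp &= K_r^\perp \ot K_{r'}^\perp,
\end{align*}
since $K_r^\perp = \linspan\{\ket{i}\bra{i}\}$ is the diagonal and $\ket{(i,i')}\bra{(j,j')}$ is off the diagonal of $\mathbb{C}^{rr'}$ exactly when $i \ne j$ or $i' \ne j'$; the same identities with $Q$ in place of $K$ hold because $Q_r^\perp = \mathbb{C}I_r$, so $Q_{rr'}^\perp = \mathbb{C}I_r \ot \mathbb{C}I_{r'}$ and $Q_{rr'} = Q_r \ot \linop{\mathbb{C}^{r'}} + \linop{\mathbb{C}^r} \ot Q_{r'}$.

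With that, I would first establish $(S'', S''_0) = (S, S_0) \boxtimes (S', S'_0)$. A purification of $\rho_i \ot \rho'_{i'}$ is $\ket{\psi_i} \ot \ket{\psi'_{i'}}$, so after reordering the tensor factors into $(A \ot A') \ot (B \ot B') \ot (C \ot C')$ the isometry for the joint source (index space $\mathbb{C}^{rr'}$) is $J \ot J'$. The map $X \mapsto \Tr_{(BB')(CC')}\{\linop{CC'}\, (J \ot J')\, X\, (J \ot J')^{\dag}\}$ appearing in \cref{thm:fgame} factors as the tensor product of the two single‑source maps, because $J \ot J'$ splits across the two blocks and the twirl‑and‑partial‑trace acts blockwise. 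Substituting the decomposition of $K_{rr'}$ (resp.\ $Q_{rr'}$) and comparing with~\eqref{eq:S0_fgame} (resp.~\eqref{eq:S0_fgame_coh}) gives $S'' = (S \ot S') + (S_0 \ot S') + (S \ot S'_0)$ and $S''_0 = S_0 \ot S'_0$, which is exactly~\eqref{eq:noncomm_strong}.

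The ``one channel'' claim~\eqref{eq:homSST} is then immediate: sending both sources means Bob recovers the joint index $(i,i')$, i.e.\ the single $rr'$‑input source $\{\ket{\psi_i} \ot \ket{\psi'_{i'}}\}$ is transmitted, so \cref{thm:fgame} (discrete QSCC) or \cref{thm:fgame_coh} (coherent QSCC) says this is possible iff $S'' \homm T$ with $T = (N^\dag N)^\perp$ as in~\eqref{eq:qm_nonconf}, and by the previous step and the notational convention preceding the theorem this reads $(S, S_0) \boxtimes (S', S'_0) \homm T$. For~\eqref{eq:homSmTn} I would iterate: the joint of several sources is again of purification form, with isometry $J_1 \ot \cdots \ot J_k$, so the previous step applied by induction on $m$ shows that $m$ parallel copies of the source form a single source whose characteristic graph (with generalized loops) is $(S, S_0)^{\boxtimes m}$, as defined by repeated application of~\eqref{eq:noncomm_strong}. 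On the channel side, $n$ parallel copies of $\cN$ have Kraus operators $\{N_{i_1} \ot \cdots \ot N_{i_n}\}$, hence confusability graph $(N^\dag N)^{\ot n} = (T^\perp)^{\ot n}$ and, by \cref{def:disjunctive_nc}, distinguishability graph $\big((T^\perp)^{\ot n}\big)^{\perp} = T^{\djp n}$. One final application of \cref{thm:fgame}/\cref{thm:fgame_coh} to this combined source and channel yields that the task is possible iff $(S, S_0)^{\boxtimes m} \homm T^{\djp n}$.

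The only genuinely technical part is the step establishing $(S'', S''_0) = (S, S_0) \boxtimes (S', S'_0)$: the $K_n$ and $Q_n$ tensor decompositions are routine, but one must check carefully that $\Tr_{(BB')(CC')}\{\linop{CC'}\, (J \ot J')(\cdot)(J \ot J')^{\dag}\}$ really does split as a tensor product once the $ABC$ and $A'B'C'$ tensor factors have been interleaved correctly. Everything after that is bookkeeping and direct appeals to results already proved.
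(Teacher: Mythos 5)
Your proposal is correct and follows essentially the same route as the paper's proof: the identical decomposition $K_{r''} = K_r \ot K_{r'} + K_r^\perp \ot K_{r'} + K_r \ot K_{r'}^\perp$ (with $K_{r''}^\perp = K_r^\perp \ot K_{r'}^\perp$), the factorization $J'' = J \ot J'$, reduction to \cref{thm:fgame,thm:fgame_coh}, and induction on $m$ together with $T^{\djp n}$ for the parallel channel. The only difference is that you spell out the $Q_{rr'}$ decomposition for the coherent case explicitly, which the paper leaves as ``similar arguments.''
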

\begin{proof}
    We give the proof only for discrete QSCC; the proof for coherent QSCC follows from
    similar arguments.
    A state from the joint source will be of the form
    $\ket{\psi''_{ii'}} = \ket{\psi_i} \ot \ket{\psi'_{i'}}$ and the corresponding
    isometry will be $J'' = J \ot J'$, so we have (according to~\eqref{eq:S0_fgame})
    \begin{align*}
        S'' &= \Tr_{BB'CC'}\{ \linop{C \ot C'} J'' K_{r''} J''^\dag \}
        \\
        S''_0 &= \Tr_{BB'CC'}\{ \linop{C \ot C'} J'' K_{r''}^\perp J''^\dag \}
    \end{align*}
    where $r'' = r r' = \dim(R) \dim(R')$.
    It is readily verified that $(S'', S''_0) = (S, S_0) \boxtimes (S', S'_0)$, since
    \begin{align*}
        K_{r''} &= K_r \ot K_{r'} + K_r^\perp \ot K_{r'} + K_r \ot K_{r'}^\perp,
        \\ K_{r''}^\perp &= K_r^\perp \ot K_{r'}^\perp.
    \end{align*}
    By \cref{thm:fgame}, the joint source can be sent using a single use of channel $\cN$
    iff $S'' \homm T$, that is to say iff condition~\eqref{eq:homSST} holds.

    By induction, $m$ instances of a source can be sent with a single channel use iff
    $(S, S_0)^{\boxtimes m} \homm T$.
    Since the distinguishability graph of $n$ copies of the channel is $T^{\djp n}$, it is possible
    to send $m$ instances of the source using $n$ instances of the channel iff
    $(S, S_0)^{\boxtimes m} \homm T^{\djp n}$.
\end{proof}

For classical source-channel coding the amount of communication needed to transmit a joint
source is at least as much as is needed for each individual source, since the second
source can always be simulated: Alice and Bob can just agree ahead of time on some $x'$
and $u'$ that can be emitted be the second source.
Somewhat surprisingly, this does not necessarily hold for quantum source-channel coding.
For example, consider the following two sources.
The first source is some classical source for which an entanglement resource
$\proj{\lambda}$ would allow for more efficient transmission.
In other words, $\chromc(S)$ is large and $\chromce(S)$ is small.
Examples of such graphs are given in, e.g.~\cite{david2006quantum}.
The second source consists of only a single possible input: $\proj{\lambda}$.
So $S'=\varnothing$ and $S'_0=\mathbb{C} \Lambda$ where
$\Lambda = \Tr_B\{\proj{\lambda}\}$.
Then the first source requires an amount of communication $\chromc(S)$, the second requires no
communication (i.e.\ $\chromc(S')=1$), but the joint source requires
communication $\chromce(S) < \max\{ \chromc(S), \chromc(S') \}$.

Entanglement assisted chromatic number does not exhibit this same anomaly.
Indeed, the joint source can never be easier to transmit than either of the individual
sources since Alice and Bob can always simulate (some particular input from)
the second source, by choosing said state ahead of time and adding this to their
entanglement resource.
For a similar reason, even without entanglement assistance a joint source is not easier to
transmit than the individual sources in the case where the individual sources are each
capable of producing a product state: Alice and Bob can simulate any of these sources by
producing the product state themselves, in order to turn a single source into (a subset
of) the joint source.

For classical source-channel coding, we defined the cost rate as the infimum of $n/m$ such
that $m$ instances of the source can be transmitted using $n$ instances of the channel.
As per the above discussion, this can be achieved iff
$(G, G_0)^{\boxtimes m} \homm H^{\djp n}$,
so the cost rate is
\begin{align*}
    \lim_{m \to \infty} \frac{1}{m} \min\left\{
        n : (G, G_0)^{\boxtimes m} \homm H^{\djp n}
        \right\}.
\end{align*}
Cost rate for quantum source-channel coding can be defined similarly,
\begin{align}
    \label{eq:costrate_nc}
    \lim_{m \to \infty} \frac{1}{m} \min\left\{
        n : (S, S_0)^{\boxtimes m} \homm T^{\djp n}
        \right\},
\end{align}
where $(S,S_0)$ is the characteristic graph of the source
(as per~\eqref{eq:S0_fgame} or~\eqref{eq:S0_fgame_coh})
and $T$ is the distinguishability graph of the channel.
Similarly, the entanglement assisted cost rate is
\begin{align}
    \label{eq:costrate_nc_ent}
    \lim_{m \to \infty} \frac{1}{m} \min\left\{
        n : (S, S_0)^{\boxtimes m} \home T^{\djp n}
        \right\}.
\end{align}
Clearly $\eqref{eq:costrate_nc_ent} \le \eqref{eq:costrate_nc}$.

In all three cases, by Fekete's lemma the limit exists and is equal to the infimum, since
the $\min\{n : \dots\}$ quantities are sub-multiplicative in $m$.  This can be seen as
follows: if it is possible to transmit $m_1$ instances of the source using $n_1$ instances
of the channel via one protocol, and $m_2$ instances of the source using $n_2$ instances
of the channel via another protocol, one can transmit $m_1+m_2$ instances of the source
using $n_1+n_2$ instances of the channel by simply running the two protocols in parallel.

For the classical case, the Lov{\'a}sz $\thbar$ number is multiplicative under the
relevant graph products and is a homomorphism monotone, so it leads to a lower bound on the
cost rate, \cref{thm:generalized_costrate}.
A similar bound applies for quantum source-channel coding, with a caveat.
The $\qthperp$ quantity is not multiplicative under strong product in general; however, it is
when $S_0$ and $S'_0$ contain the identity.
So our generalization of \cref{thm:generalized_costrate} will require $I \in S_0$.
This happens for example when the states emitted by the source include a maximally entangled state,
or product states with Alice's shares forming a complete orthonormal basis (such as is the case
with classical source-channel coding).
We have then the following bound on cost rate.

\begin{theorem}
    \label{thm:costrate_nc}
    Consider a source with characteristic graph $(S,S_0)$,
    defined as in~\eqref{eq:S0_fgame} for discrete QSCC
    or as in~\eqref{eq:S0_fgame_coh} for coherent QSCC\@.
    Consider a noisy quantum channel $\cN$ with distinguishability graph
    $T = (N^\dag N)^\perp$.
    If $I \in S_0$ then the entanglement assisted cost rate~\eqref{eq:costrate_nc_ent} is
    lower bounded by $\log \qthperp(S) / \log \qthperp(T)$.
\end{theorem}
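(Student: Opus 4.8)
The plan is to chain three ingredients: the entanglement assisted version of \cref{thm:product_coding}, which is already built into the definition~\eqref{eq:costrate_nc_ent}; the monotonicity of $\qthperp$ under entanglement assisted homomorphisms (\cref{thm:qthmon}); and the multiplicative behaviour of $\qthperp$ under the two relevant products. Concretely, it suffices to show that $(S,S_0)^{\boxtimes m} \home T^{\djp n}$ forces $n/m \ge \log\qthperp(S)/\log\qthperp(T)$, since~\eqref{eq:costrate_nc_ent} is the infimum (equivalently, by Fekete's lemma, the limit) of $n/m$ over such pairs. Applying \cref{thm:qthmon} to the homomorphism gives $\qthperp((S,S_0)^{\boxtimes m}) \le \qthperp(T^{\djp n})$. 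The channel side is immediate: since $(T^{\djp n})^\perp = (T^\perp)^{\ot n}$ by \cref{def:disjunctive_nc} and $\qth$ is multiplicative under tensor products~\cite{dsw2013}, we have $\qthperp(T^{\djp n}) = \qthperp(T)^n$, as already recorded above \cref{def:disjunctive_nc}.

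The substantive step is the source-side inequality $\qthperp((S,S_0)^{\boxtimes m}) \ge \qthperp(S)^m$, and this is exactly where $I \in S_0$ is used. By induction it is enough to prove, for two sources, that $\qthperp((S,S_0)\boxtimes(S',S'_0)) \ge \qthperp(S)\,\qthperp(S')$ whenever $I \in S_0$ and $I \in S'_0$; the induction then closes because the loop space of the product is $S_0 \ot S'_0$, which again contains $I$. For the two-source step I would use the primal program~\eqref{eq:qth_primal}: take $X$ optimal for $\qthperp(S)$ and $X'$ optimal for $\qthperp(S')$, on ancillas $\tilde A$ and $\tilde B$, and set $X'' = X \ot X' + X \ot I_{B\tilde B} + I_{A\tilde A}\ot X'$, viewed as an operator on $A\ot\tilde A\ot B\ot\tilde B$ with ancilla $\tilde A\ot\tilde B$. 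Each summand lies in $S'' \ot \linop{\tilde A\ot\tilde B}$: the term $X \ot X'$ because $S \ot S' \subseteq S''$; the term $X \ot I_{B\tilde B}$ because $I_B \in S'_0$ and $S \ot S'_0 \subseteq S''$; the term $I_{A\tilde A}\ot X'$ because $I_A \in S_0$ and $S_0 \ot S' \subseteq S''$ --- these are precisely the three pieces of $S''$ in~\eqref{eq:noncomm_strong}. Moreover $I + X'' = (I_{A\tilde A}+X)\ot(I_{B\tilde B}+X') \succeq 0$, and $\opnorm{I+X''} = \opnorm{I+X}\,\opnorm{I+X'} = \qthperp(S)\,\qthperp(S')$, so $X''$ is feasible and witnesses $\qthperp((S,S_0)\boxtimes(S',S'_0)) \ge \qthperp(S)\,\qthperp(S')$.

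Putting the pieces together, any admissible pair $(m,n)$ satisfies $\qthperp(S)^m \le \qthperp((S,S_0)^{\boxtimes m}) \le \qthperp(T)^n$; taking logarithms (and assuming $\qthperp(T) > 1$, the remaining cases being degenerate) gives $n/m \ge \log\qthperp(S)/\log\qthperp(T)$, whence the cost rate is bounded below as claimed. I expect the strong-product supermultiplicativity to be the only delicate point: one must recognize that $I \in S_0$ is exactly what allows the padding terms $X\ot I$ and $I\ot X'$ to be absorbed into $S''$, and keep the tensor-factor bookkeeping consistent across the three copies of~\eqref{eq:qth_primal}; everything else is just assembling monotonicity and multiplicativity statements proved earlier in the paper.
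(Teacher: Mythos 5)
Your proposal is correct and follows essentially the same route as the paper: monotonicity of $\qthperp$ under entanglement assisted homomorphisms applied to $(S,S_0)^{\boxtimes m}\home T^{\djp n}$, multiplicativity under the disjunctive product on the channel side, and supermultiplicativity under the strong product on the source side via the feasible point $X''=(I+X)\ot(I+X')-I$, which is exactly the construction in the paper's \cref{thm:qthperp_mult} (where $I\in S_0$ absorbs the padding terms). Your explicit remark that the induction closes because $S_0\ot S_0'$ again contains $I$ is a detail the paper leaves implicit, but otherwise the arguments coincide.
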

\begin{proof}
    Since $I \in S_0$, the $\qthperp$ quantity is multiplicative under both strong and
    disjunctive graph powers, by \cref{thm:qthperp_mult}.
    Using this fact, and the fact that $\qthperp$ is monotone under entanglement assisted
    homomorphisms, we have
    \begin{align*}
        %\lim_{m \to \infty} \frac{1}{m} \min\left\{
        %    n : (S, S_0)^{\boxtimes m} \homm T^{\djp n}
        %    \right\}
        \eqref{eq:costrate_nc_ent}
        &\ge \inf_{m \to \infty} \frac{1}{m} \min\left\{
            n : \qthperp((S, S_0)^{\boxtimes m}) \le \qthperp(T^{\djp n})
            \right\}
        \\ &= \inf_{m \to \infty} \frac{1}{m} \min\left\{
            n : \qthperp(S)^m \le \qthperp(T)^n
            \right\}
        \\ &= \inf_{m \to \infty} \frac{1}{m} \min\left\{
            n : \log \qthperp(S) / \log \qthperp(T) \le n/m
            \right\}
        \\ &= \log \qthperp(S) / \log \qthperp(T).
    \end{align*}
\end{proof}

We now prove the lemma used in the preceding proof.

\begin{lemma}
    \label{thm:qthperp_mult}
    Let $S$ and $S'$ be trace-free non-commutative graphs.
    Then,
    \begin{itemize}
        \item $\qthperp(S \djp S') = \qthperp(S) \qthperp(S')$
        \item $\qthperp((S, S_0) \boxtimes (S', S'_0)) = \qthperp(S) \qthperp(S')$
            if $I \in S_0$ and $I \in S'_0$
    \end{itemize}
\end{lemma}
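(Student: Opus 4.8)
The plan is to use the SDP characterisation of $\qthperp$ from \cref{def:qth} together with the tensor multiplicativity of $\qth$ from Corollary~10 of~\cite{dsw2013}. For the disjunctive product this is immediate: by \cref{def:disjunctive_nc} we have $(S\djp S')^\perp = S^\perp\ot S'^\perp$, and since $S,S'$ are trace-free both $S^\perp$ and $S'^\perp$ contain the identity, so
\[
  \qthperp(S\djp S') = \qth\!\big((S\djp S')^\perp\big) = \qth(S^\perp\ot S'^\perp) = \qth(S^\perp)\,\qth(S'^\perp) = \qthperp(S)\,\qthperp(S').
\]
(This is the same fact already used to assert $\qthperp(T^{\djp n})=\qthperp(T)^n$.)

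For the strong product I would prove the two inequalities separately, and only the lower one will use the hypotheses $I\in S_0$, $I\in S'_0$. For the upper bound, inspecting the three summands in~\eqref{eq:noncomm_strong} shows that $S'' = S\ot S' + S_0\ot S' + S\ot S'_0 \subseteq S\ot\linop{A'} + \linop{A}\ot S' = S\djp S'$ with no assumption on $S_0,S'_0$; since $\qthperp$ is monotone under subspace inclusion (enlarging the graph space in~\eqref{eq:qth_primal} only enlarges the feasible set), the disjunctive-product case already gives $\qthperp(S'')\le\qthperp(S\djp S') = \qthperp(S)\,\qthperp(S')$.

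For the lower bound, let $X\in S\ot\linop{\tilde A}$ and $X'\in S'\ot\linop{\tilde A'}$ be optimal for~\eqref{eq:qth_primal} (the maxima are attained since the underlying graph spaces are trace-free, hence contain no nonzero positive semidefinite operator after tensoring with an ancilla). On $(A\ot A')\ot(\tilde A\ot\tilde A')$ put $X'' := X\ot I_{A'\tilde A'} + I_{A\tilde A}\ot X' + X\ot X'$, so that $I+X'' = (I+X)\ot(I+X')$. This operator is positive semidefinite (a tensor of PSD operators), has operator norm $\opnorm{I+X}\,\opnorm{I+X'} = \qthperp(S)\,\qthperp(S')$, and uses an ancilla $\tilde A\ot\tilde A'$ whose dimension is at least $\dim(A\ot A')$, which is admissible by the dimension-independence of the program noted in~\cite{dsw2013}. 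The one thing to check is $X''\in S''\ot\linop{\tilde A\ot\tilde A'}$: writing $X=\sum_a M_a\ot N_a$ and $X'=\sum_b M'_b\ot N'_b$ with $M_a\in S$, $M'_b\in S'$, the three terms of $X''$ have graph parts $M_a\ot I_{A'}\in S\ot S'_0$ (using $I_{A'}\in S'_0$), $I_A\ot M'_b\in S_0\ot S'$ (using $I_A\in S_0$), and $M_a\ot M'_b\in S\ot S'$, each of which lies in $S''$ by~\eqref{eq:noncomm_strong}. Hence $X''$ is feasible for $\qthperp(S'')$, giving $\qthperp(S'')\ge\qthperp(S)\,\qthperp(S')$ and, with the upper bound, equality.

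None of the steps is deep; the entire content sits in the membership check $X''\in S''\ot\linop{\cdot}$ for the lower bound, and that is exactly the point where the ``generalized loop'' hypotheses $I\in S_0$, $I\in S'_0$ are needed — mirroring the classical fact that the generalized strong product coincides with the ordinary strong product precisely when the loop-graphs are complete. The only bookkeeping nuisances are tracking ancillary dimensions and observing that, unlike the classical statement, the upper inequality $\qthperp(S'')\le\qthperp(S)\,\qthperp(S')$ holds unconditionally.
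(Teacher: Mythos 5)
Your proposal is correct and follows essentially the same route as the paper: multiplicativity of $\qth$ under tensor products from~\cite{dsw2013} for the disjunctive product, subset monotonicity of the primal~\eqref{eq:qth_primal} for the upper bound, and the feasible point $X''$ with $I+X'' = (I+X)\ot(I+X')$ for the lower bound, where the hypotheses $I\in S_0$, $I\in S'_0$ enter exactly in the membership check $X''\in[(S,S_0)\boxtimes(S',S'_0)]\ot\linop{\cdot}$. The remarks on ancilla dimensions and attainment of the maxima are harmless additions that the paper leaves implicit.
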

\begin{proof}
    From~\cite{dsw2013} we have $\qth(S^\perp \ot S'^\perp) = \qth(S^\perp) \qth(S'^\perp)$.
    But $(S^\perp \ot S'^\perp)^\perp = S \djp S'$, so $\qthperp(S \djp S') = \qthperp(S)
    \qthperp(S')$.
    Since $(S, S_0) \boxtimes (S', S'_0) \subseteq S \djp S'$ and since
    $\qthperp$ is monotone decreasing under subsets, we have
    \begin{align*}
        \qthperp((S, S_0) \boxtimes (S', S'_0)) \le \qthperp(S \djp S') = \qthperp(S)
        \qthperp(S').
    \end{align*}
    Let $X$ be an optimal solution to~\eqref{eq:qth_primal} for $\qthperp(S)$, from
    \cref{def:qth}.
    Then $X \in S \ot \linop{B}$ (for some Hilbert space $B$), $I+X \succeq 0$,
    and $\opnorm{I+X} = \qthperp(S)$.
    Similarly, there is an $X' \in S' \ot \linop{B'}$, $I+X' \succeq 0$,
    and $\opnorm{I+X'} = \qthperp(S')$.
    Define
    \begin{align*}
        X'' &= (I_{AB}+X) \ot (I_{A'B'}+X') - I_{AA'BB'}.
    \end{align*}
    Clearly $I+X'' \succeq 0$.
    Also,
    \begin{align*}
        X'' &= X \ot X' + I_{AB} \ot X' + X \ot I_{A'B'}
        \\ &\in [S \ot S' + I_A \ot S' + S \ot I_{A'}] \ot \linop{BB'}
        \\ &\subseteq [S \ot S' + S_0 \ot S' + S \ot S'_0] \ot \linop{BB'}
        \\ &= [(S, S_0) \boxtimes (S', S'_0)] \ot \linop{BB'}.
    \end{align*}
    So $X''$ is feasible for~\eqref{eq:qth_primal} for $\qthperp((S, S_0) \boxtimes (S', S'_0))$.
    Therefore
    \begin{align*}
        \qthperp((S, S_0) \boxtimes (S', S'_0)) &\ge \opnorm{I+X''}
        \\ &= \opnorm{(I+X) \ot (I+X')}
        \\ &= \qthperp(S) \qthperp(S').
    \end{align*}
\end{proof}

\section{Schrijver and Szegedy numbers}
\label{sec:schrijver}

In this section we will provide a generalization to non-commutative graphs for two
quantities related to Lov{\'a}sz's $\vartheta$: Schrijver's $\thm$ and Szegedy's $\thp$.
Schrijver's number comes from adding extra constraints to the maximization program for
$\vartheta$, yielding a smaller value; Szegedy's number comes from adding extra constraints
to the minimization (dual) program for $\vartheta$, yielding a larger value.
We will consider the complimentary quantities $\thmbar(G)=\thm(\Gc)$ and
$\thpbar(G)=\thp(\Gc)$.
These are homomorphism monotones in the same sense that $\thbar$
is~\cite{de2013optimization}; therefore they satisfy the sandwich theorem
\begin{align*}
    \omega(G) \le \thmbar(G) \le \thbar(G) \le \thpbar(G) \le \chi(G).
\end{align*}
These quantities are not suitable for bounding asymptotic channel capacity or cost rate
for source-channel coding because they are not multiplicative under the appropriate graph
products~\cite{6880319}.

For classical graphs these quantities have been shown to be monotone under entanglement
assisted homomorphisms~\cite{6880319}.
Strangely enough, our generalization to non-commutative graphs will yield quantities
monotone under homomorphisms but not under entanglement assisted homomorphisms.
For classical graphs the gap between $\thmbar(G)$ and $\thpbar(G)$ tends to be small or,
often, zero.
For non-commutative graphs the gap tends to be much more extreme, sometimes infinite, even
for random graphs of small dimension.
After developing basic properties of these quantities we will show how they can be
used to reproduce some results from the literature regarding entanglement assisted
activation of one-shot channel capacity and impossibility of one-way LOCC measurement of
entangled states.
Also we will provide a channel for which maximally entangled states are not sufficient for
achieving the entanglement assisted one-shot capacity.

The classical quantities are defined as
follows~\cite{lovasz79,knuth94,1056072,mceliece1978lovasz,365707}.
\begin{definition}
    The Lov{\'a}sz, Schrijver, and Szegedy numbers of the complement of a graph,
    $\thbar(G)$, $\thmbar(G)$, and $\thpbar(G)$, are defined
    by the following dual (and equivalent) semidefinite programs.
    All matrices are either real or complex (it doesn't matter), $J$ is the all-ones
    matrix, and $\nonneg$ is the cone of symmetric entrywise non-negative matrices.
    Take $S = \linspan\{ \ket{x}\bra{y} : x \sim y \}$ and
    $S_0 = \linspan\{ \ket{x}\bra{x} : x \in V(G) \}$ (the diagonal matrices).
    \begin{align}
        \thbar  (G) &= \max \ip{B, J} & \textrm{ s.t. } &B \succeq 0, \Tr B = 1,
            \notag \\ &\, &&B \in S + S_0
            \label{eq:lovasz_max}
        \\ \thmbar(G) &= \max \ip{B, J} & \textrm{ s.t. } &B \succeq 0, \Tr B = 1,
            \notag \\ &\, &&B \in S + S_0, B \in \nonneg
            \label{eq:schrij_max}
        \\ \thpbar(G) &= \max \ip{B, J} & \textrm{ s.t. } &B \succeq 0, \Tr B = 1,
            \notag \\ &\, &&B + L \in S + S_0, L \in \nonneg
            \label{eq:szegedy_max}
        \\ \thbar  (G) &= \min \lambda & \textrm{ s.t. } &Z \succeq J, (Z_{ii} = \lambda \textrm{ for all } i),
            \notag \\ &\, &&Z \in S^\perp
            \label{eq:lovasz_min}
        \\ \thmbar(G) &= \min \lambda & \textrm{ s.t. } &Z \succeq J, (Z_{ii} = \lambda \textrm{ for all } i),
            \notag \\ &\, &&Z + L \in S^\perp, L \in \nonneg
            \label{eq:schrij_min}
        \\ \thpbar(G) &= \min \lambda & \textrm{ s.t. } &Z \succeq J, (Z_{ii} = \lambda \textrm{ for all } i),
            \notag \\ &\, &&Z \in S^\perp, Z \in \nonneg
            \label{eq:szegedy_min}
    \end{align}
\end{definition}
The constraint $B \in \nonneg$ that is added to~\eqref{eq:lovasz_max}
to yield~\eqref{eq:schrij_max} has the following justification.
Suppose that $W \subseteq V(G)$ is a clique.  Then the matrix
\begin{align*}
    B_{ij} =
        \begin{cases}
            1/\abs{W} & \textrm{if } i,j \in W \\
            0 & \textrm{otherwise}
        \end{cases}
\end{align*}
is a feasible solution to~\eqref{eq:lovasz_max} with value $\abs{W}$.
So $\omega(G) \le \thbar(G)$.  But $B \in \nonneg$, so this condition can be added to the
maximization program to yield a potentially smaller quantity $\thmbar(G)$ that still upper
bounds $\omega(G)$.\footnote{
    An even tighter constraint, requiring $B$ to be completely positive, yields $\omega(G)$
    exactly~\cite{doi:10.1137/S1052623401383248}.
}
Similarly, if $f : V(G) \to \{1,\dots,m\}$ is a proper coloring of $G$ then
\begin{align*}
    Z_{ij} =
        \begin{cases}
            m & \textrm{if } f(i)=f(j) \\
            0 & \textrm{otherwise}
        \end{cases}
\end{align*}
is feasible for~\eqref{eq:lovasz_min} with value $\chi(G)$, so
$\thbar(G) \ge \chi(G)$.  Since this satisfies $Z \in \nonneg$, adding this condition
to the minimization program gives a quantity $\thpbar(G)$ still lower bounding $\chi(G)$.
We will follow this sort of strategy to create analogues of $\thmbar$ and $\thpbar$
for non-commutative graphs.

The primal program for $\qthperp$ can be written~\cite{dsw2013}
\begin{align}
    \qthperp(S) = \max &\; \braopket{\Phi}{ I \ot \rho + T }{\Phi} \notag
    \\ \textrm{s.t. } &\; \rho \succeq 0, \Tr \rho = 1, \notag
    \\ &\; I \ot \rho + T \succeq 0, \notag
    \\ &\; T \in S \ot \linop{A'},
    \label{eq:qth_rho_T}
\end{align}
where $A'$ is an ancillary system of the same dimension as $A$ and
$\ket{\Phi} = \sum_i \ket{i}_A \ot \ket{i}_{A'}$.
With this definition it is easy to see that $\cliquec(S) \le \qthperp(S)$:
since $\cliquec(S)$ is the classical communication capacity of the distinguishability graph $S$,
there are $m = \cliquec(S)$ vectors $\ket{\psi_1},\dots,\ket{\psi_m} \in A$
such that $\ket{\psi_i}\bra{\psi_j} \in S$ for $i \ne j$.
Define
\begin{align}
    \notag
    T &= \frac{1}{m} \sum_{i \ne j}
        \ket{\psi_i} \bra{\psi_j}_A \ot
        \ket{\conj{\psi_i}} \bra{\conj{\psi_j}}_{A'} \textrm{ and}
    \\ \rho &= \frac{1}{m} \sum_i \ket{\psi_i} \bra{\psi_i}_{A'},
    \label{eq:T_feas_c}
\end{align}
where a bar over a vector represents complex conjugation in the computational basis.
This is readily verified to be a feasible solution to~\eqref{eq:qth_rho_T} with value $m$.
A tighter upper bound on $\cliquec(S)$ can be obtained by adding constraints
to~\eqref{eq:qth_rho_T}.  As long as~\eqref{eq:T_feas_c} remains feasible under these new
constraints, the modified program will remain an upper bound on $\cliquec(S)$.

To this end, consider the ``rotated transpose'' linear superoperator
$\rot : \linop{A}\ot\linop{A'} \to \linop{A}\ot\linop{A'}$ with action
\begin{align*}
    \rot\left(
        \ket{i} \bra{j}_A  \ot \ket{k} \bra{l}_{A'}
    \right) &=
        \ket{i} \bra{k}_A  \ot \ket{j} \bra{l}_{A'}
\end{align*}
on standard basis states and
\begin{align*}
    \rot\left(
        \ket{\psi} \bra{\phi}_A  \ot \ket{\chi} \bra{\xi}_{A'}
    \right) &=
        \ket{\psi} \bra{\conj{\chi}}_A  \ot \ket{\conj{\phi}} \bra{\xi}_{A'}
\end{align*}
in general.
Note that $\rot$ is an involution (it is its own inverse).
Define the double-dagger operation
\begin{align*}
    X^\ddag = \rot(\rot(X)^\dag).
\end{align*}
We have $\rot(I_{A} \ot I_{A'}) = \proj{\Phi}$.
The $T$ from~\eqref{eq:T_feas_c} transforms as
\begin{align*}
    \rot(T) &= \frac{1}{m} \sum_{i \ne j} \proj{\psi_i} \ot \proj{\conj{\psi_j}}.
\end{align*}
Since $\rot(T)$ is a separable operator, we may add this as an extra constraint
in~\eqref{eq:qth_rho_T} to get a tighter bound on $\cliquec(S)$.

In general, consider some closed convex cone $\cone \subseteq \linop{A}\ot\linop{A'}$ and
a trace-free
%\footnote{
%    Really, it is only necessary that there is some positive operator orthogonal to $S$.
%}
non-commutative graph $S$.
We consider only cones over the real inner product space of Hermitian matrices.
For $S \in \linop{A}$, we use the notation
$\conj{S} := \{ \conj{M} : M \in S \} \subseteq \linop{A'}$,
where a bar over an operator denotes entrywise complex conjugate, with the conjugated
operator moved into the primed space (as discussed in \cref{sec:qthmon}).
Define the semidefinite program
\begin{align}
    \qthmperp{\cone}(S) = \max &\; \braopket{\Phi}{ I \ot \rho + T }{\Phi} \notag
    \\ \textrm{s.t. } &\; \rho \succeq 0, \Tr \rho = 1, \notag
    \\ &\; I \ot \rho + T \succeq 0, \notag
    \\ &\; T \in S \ot \conj{S}, \notag
    \\ &\; \rot(T) \in \cone.
    \label{eq:qthm_C_primal}
\end{align}
Note that $T \in S \ot \linop{A'}$ and $\rot(T) \in \cone$ implies $T \in S \ot \conj{S}$,
since $\cone$ contains only Hermitian operators.
We choose to explicitly state the condition $T \in S \ot \conj{S}$
in~\eqref{eq:qthm_C_primal}.

Since linear programming duality turns constraints into variables, the dual of this program is
similar to~\eqref{eq:qth_dual} but with an extra variable that runs over the dual cone
$\cone^*$.  In \cref{sec:duality} we show that strong duality holds, so that primal and dual
have equal value.  The dual program is
\begin{align}
    \qthmperp{\cone}(S) = \min &\; \opnorm{\Tr_A Y} \notag
    \\ \textrm{s.t. } &\; Y + (L + L^\dag) \in
        S^\perp \djp \conj{S}^\perp = (S \ot \conj{S})^\perp, \notag
    \\ &\; \rot(L) + \rot(L)^\dag \in \cone^*, \notag
    \\ &\; Y \succeq \proj{\Phi}, \notag
    \\ &\; L \in \linop{A} \ot \linop{A'}.
    \label{eq:qthm_C_dual}
\end{align}
Recall that ``$\djp$'' denotes the disjunctive product from \cref{def:disjunctive_nc}.
The point $\rho=I/\dim(A)$, $T=0$ is feasible for~\eqref{eq:qthm_C_primal},
giving $\qthmperp{\cone}(S) \ge 1$.
In \cref{sec:duality} we provide a feasible point for~\eqref{eq:qthm_C_dual},
giving $\qthmperp{\cone}(S) < \infty$.

Denote by $\sep$ the cone of separable operators in $\linop{A}\ot\linop{A'}$.
Since~\eqref{eq:T_feas_c} satisfies $\rot(T) \in \sep$,
it is feasible for~\eqref{eq:qthm_C_primal} for $\qthmperp{\sep}$.  Therefore
$\cliquec(S) \le \qthmperp{\sep}(S)$.
One can also show
$\cliqueq(S)^2 \le \qthmperp{\sep}(S)$ by similar means, but we will eventually obtain this
result by showing $\qthmperp{\sep}$ to be a homomorphism monotone in the same sense that
$\qthperp$ is.

From a computational perspective $\qthmperp{\sep}(S)$ is not the most convenient because there is
no efficient way to determine whether an operator is in $\sep$.
Fortunately there are closed convex cones containing $\sep$ that are efficiently optimized over and
that give good bounds on $\cliquec(S)$ and $\cliqueq(S)$.
Namely, consider $\psdcone$, the cone of positive semidefinite matrices,
$\ppt$, the cone of matrices with positive semidefinite partial transpose, or
even $\psdcone \cap \ppt$.
Note that $\psdcone$ and $\ppt$ are self-dual and the dual of $\psdcone \cap \ppt$ is
$\psdcone + \ppt$.
The dual of $\sep$ is
$\sep^* = \{ W : \ip{W, M} \ge 0 \textrm{ for all } M \in \sep \}$.
We have
\begin{align}
    \label{eq:qthm_chain}
    \cliquec(S) \le \qthmperp{\sep}(S) \le \qthmperp{\psdcone \cap \ppt}(S)
        \le \qthmperp{\psdcone}(S) \le \qthperp(S).
\end{align}
This sequence of refinements is reminiscent of the approximations to the copositive cone
that yield the Lov{\'a}sz and Schrijver numbers for classical
graphs~\cite{doi:10.1137/S1052623401383248,bomze2010gap}.
In fact the middle three values in the above chain of inequalities
collapse to Schrijver's number when $S$ derives from a classical graph.

\begin{theorem}
    \label{thm:qthm_classical}
    Let $G$ be a classical loop-free graph and $S = \linspan\{ \ket{i}\bra{j} : i \sim j \}$.
    Then for any closed convex cone $\cone$ satisfying $\sep \subseteq \cone \subseteq \sep^*$,
    it holds that $\qthmperp{\cone}(S) = \thmbar(G)$.
\end{theorem}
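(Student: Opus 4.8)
The plan is to settle the two extreme cases $\cone = \sep$ and $\cone = \sep^*$ and then squeeze all intermediate cones between them. The first observation is that $\qthmperp{\cone}(S)$ is monotone nondecreasing in $\cone$: enlarging $\cone$ only relaxes the constraint $\rot(T)\in\cone$ in the primal~\eqref{eq:qthm_C_primal}, so its optimal value can only increase. Since $\sep\subseteq\psdcone$ and $\psdcone$ is self-dual, dualizing gives $\psdcone=\psdcone^*\subseteq\sep^*$, hence $\sep\subseteq\psdcone\subseteq\sep^*$, and any $\cone$ with $\sep\subseteq\cone\subseteq\sep^*$ satisfies $\qthmperp{\sep}(S)\le\qthmperp{\cone}(S)\le\qthmperp{\sep^*}(S)$. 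Thus it suffices to prove $\thmbar(G)\le\qthmperp{\sep}(S)$ and $\qthmperp{\sep^*}(S)\le\thmbar(G)$.

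For the lower bound I would take $B$ optimal for Schrijver's maximization~\eqref{eq:schrij_max}, so $B\succeq 0$, $\Tr B=1$, $B$ is supported on the diagonal and the edges of $G$, and $B\in\nonneg$. Lift it to the ``diagonal-correlated'' pair $\rho=\sum_i B_{ii}\proj{i}_{A'}$ and $T=\sum_{i\sim j} B_{ij}\,\ket{i}\bra{j}_A\ot\ket{i}\bra{j}_{A'}$. Then $T\in S\ot\conj S$ because $S$ derives from $G$ and entrywise conjugation fixes the real matrices $\ket{i}\bra{j}$; also $\Tr\rho=\Tr B=1$; and $I\ot\rho+T$ acts as the matrix $B$ on the subspace $\linspan\{\ket{i}_A\ket{i}_{A'}\}$ and as $\rho\succeq 0$ on its orthogonal complement, so $I\ot\rho+T\succeq 0$ precisely because $B\succeq 0$. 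Finally $\rot(T)=\sum_{i\sim j} B_{ij}\,\proj{i}_A\ot\proj{j}_{A'}$ is a diagonal operator with nonnegative entries (using $B\in\nonneg$), hence separable, so $\rot(T)\in\sep$. The objective equals $\braopket{\Phi}{I\ot\rho+T}{\Phi}=\ip{B,J}=\thmbar(G)$, so this point is feasible for~\eqref{eq:qthm_C_primal} with $\cone=\sep$ and $\qthmperp{\sep}(S)\ge\thmbar(G)$.

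For the upper bound I would dualize. Let $Z$ and $L$ be optimal for Schrijver's minimization~\eqref{eq:schrij_min}, so $Z\succeq J$, $Z_{ii}=\thmbar(G)$, $Z+L\in S^\perp$, $L\in\nonneg$, and set $Y=\sum_{ij} Z_{ij}\,\ket{i}\bra{j}_A\ot\ket{i}\bra{j}_{A'}$ and $L'=\tfrac12\sum_{ij} L_{ij}\,\ket{i}\bra{j}_A\ot\ket{i}\bra{j}_{A'}$. Then $Y$ is block-diagonal, equal to $Z$ on $\linspan\{\ket{i}_A\ket{i}_{A'}\}$ and $0$ elsewhere; since $\proj{\Phi}$ plays the role of $J$ on that same subspace, $Z\succeq J$ gives $Y\succeq\proj{\Phi}$, and $\Tr_A Y=\thmbar(G)\,I_{A'}$ so $\opnorm{\Tr_A Y}=\thmbar(G)$. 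Next $Y+(L'+L'^\dag)=\sum_{ij}(Z+L)_{ij}\,\ket{i}\bra{j}_A\ot\ket{i}\bra{j}_{A'}$ has nonzero entries only at positions $((i,i),(j,j))$ with $i\not\sim j$, because $(Z+L)_{ij}=0$ on edges, so it lies in $S^\perp\djp\conj{S}^\perp=(S\ot\conj S)^\perp$. Finally $\rot(L'+L'^\dag)=\sum_{ij} L_{ij}\,\proj{i}_A\ot\proj{j}_{A'}$ is diagonal with nonnegative entries, hence in $\sep=(\sep^*)^*$. So $(Y,L')$ is feasible for the dual~\eqref{eq:qthm_C_dual} of $\qthmperp{\sep^*}(S)$ (where $\cone^*=(\sep^*)^*=\sep$ by the bipolar theorem), and weak duality for the pair~\eqref{eq:qthm_C_primal}--\eqref{eq:qthm_C_dual} yields $\qthmperp{\sep^*}(S)\le\thmbar(G)$. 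Chaining the inequalities gives $\qthmperp{\cone}(S)=\thmbar(G)$ for every $\cone$ in the stated range.

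The only step requiring genuine care is aligning the basis-free $\rot$-and-cone constraints of~\eqref{eq:qthm_C_primal}--\eqref{eq:qthm_C_dual} with the entrywise-nonnegativity constraint $\,\in\nonneg$ of the classical programs: one must isolate the fact that the diagonal-correlated lifts convert the classical $\nonneg$ conditions into statements of the form ``$\rot(\cdot)$ is a diagonal operator with nonnegative diagonal'', and that such an operator lies simultaneously in $\sep$ and in $\sep^*$ --- this is exactly why the value does not depend on the choice of $\cone$ between them. One also has to verify the two block-structure reductions $I\ot\rho+T\succeq 0\Leftrightarrow B\succeq 0$ and $Y\succeq\proj{\Phi}\Leftrightarrow Z\succeq J$, which are routine once one notes that $\rot$ is absent from those constraints and that $\proj{\Phi}$ is the analogue of $J$ on the correlated subspace. (As a sanity check, dropping the $\rot$-image constraint altogether in the very same lifts recovers the known reduction $\qthperp(S)=\thbar(G)$, i.e.\ the right endpoint of the chain~\eqref{eq:qthm_chain}.)
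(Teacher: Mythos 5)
Your proposal is correct, and half of it coincides with the paper's argument: the monotonicity squeeze between $\sep$ and $\sep^*$ and the lower bound $\thmbar(G)\le\qthmperp{\sep}(S)$ (lifting the optimal Schrijver $B=\rho+T'$ to $\rho$ and $T=VT'V^\dag$ with $V=\sum_i\ket{ii}\bra{i}$) are exactly what the paper does. Where you diverge is the upper bound $\qthmperp{\sep^*}(S)\le\thmbar(G)$: the paper stays entirely in the primal, taking an optimal $(\rho,T)$ for $\qthmperp{\sep^*}(S)$ and compressing it to $B=V^\dag(I\ot\rho+T)V$, checking $B\in\nonneg$ via $B_{ij}=\rho_{ii}\delta_{ij}+\braopket{ij}{\rot(T)}{ij}\ge0$ since $\rot(T)\in\sep^*$ and $\proj{ij}\in\sep$; you instead lift the optimal classical dual pair $(Z,L)$ of~\eqref{eq:schrij_min} to a feasible point of~\eqref{eq:qthm_C_dual} and invoke weak duality. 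Both are valid. The paper's route is self-contained at the level of the primal programs and makes explicit how $\rot(T)\in\sep^*$ is the basis-free avatar of entrywise nonnegativity, which is the conceptual crux of the generalization; your route leans on the conic duality established in \cref{sec:duality} (weak duality suffices for your direction, and the appendix proves strong duality anyway), and has the aesthetic advantage of mirroring exactly the construction the paper itself uses for the Szegedy case in \cref{thm:qthp_classical}, where the dual lift $Y=VZV^\dag$ appears. Your feasibility checks are sound: $Y\succeq\proj{\Phi}$ follows from $Z\succeq J$ by conjugation with $V$, the support condition follows from $Z+L\in S^\perp$, and $\rot(L')+\rot(L')^\dag=\sum_{ij}L_{ij}\proj{i}\ot\proj{j}\in\sep=(\sep^*)^*$ uses $L\in\nonneg$ and the symmetry of $L$.
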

\begin{proof}
    Define the isometry $V = \sum_i \ket{ii} \bra{i}$.
    Let $T$ and $\rho$ be an optimal solution
    for~\eqref{eq:qthm_C_primal} for $\qthmperp{\sep^*}(S)$.
    We will show that $B = V^\dag (I \ot \rho + T) V$ is feasible for~\eqref{eq:schrij_max}.
    This matrix has coefficients
    \begin{align*}
        B_{ij} &= \rho_{ii} \delta_{ij} + \braopket{ii}{T}{jj}
        \\ &= \rho_{ii} \delta_{ij} + \braopket{ij}{\rot(T)}{ij}.
    \end{align*}
    Since $\rho_{ii} \ge 0$, $\rot(T) \in \sep^*$, and $\ket{ij}\bra{ij} \in \sep$,
    it holds that $B_{ij} \ge 0$ for all $i,j$.
    So $B \in \nonneg$.
    We have $I \ot \rho + T \succeq 0 \implies B \succeq 0$.
    Since $T \in S \ot \conj{S}$ we have $\braopket{ii}{T}{jj} = 0$
    when $i \not\sim j$.
    In particular, $B_{ii} = \rho_{ii}$ and $B_{ij}=0$ when $i \not\sim j$, $i \ne j$.
    Since $\Tr \rho = 1$, also $\Tr B = 1$.
    So $B$ is feasible for~\eqref{eq:schrij_max}.
    Its value is
    \begin{align*}
        \ip{B, J} &= \sum_{ij} B_{ij}
        = \sum_{ij} \braopket{ii}{I \ot \rho + T}{jj}
        \\ &= \braopket{\Phi}{I \ot \rho + T}{\Phi}
        = \qthmperp{\sep^*}(S).
    \end{align*}
    Therefore $\thmbar(G) \ge \qthmperp{\sep^*}(S)$.

    Now let $B$ be an optimal solution for~\eqref{eq:schrij_max}.
    Decompose this into diagonal and off-diagonal components: $B = \rho + T'$.
    Define $T=VT'V^\dag$.  We will show these to be feasible
    for~\eqref{eq:qthm_C_primal}
    for $\qthmperp{\sep}(S)$.
    For any vector $\ket{\psi} \in A \ot A'$ we have
    \begin{align*}
        \braopket{\psi}{I \ot \rho + T}{\psi}
        &= \sum_{ij} \abs{\psi_{ij}^2} \rho_{jj}
            + \sum_{i \ne j} \psi_{ii}^* T'_{ij} \psi_{jj}
        \\ &= \sum_{i \ne j} \abs{\psi_{ij}^2} \rho_{jj}
            + \sum_{ij} \psi_{ii}^* B_{ij} \psi_{jj}
        \ge 0,
    \end{align*}
    where the last inequality follows from $\rho_{jj} \ge 0$ and $B \succeq 0$.
    Therefore $I \ot \rho + T \succeq 0$.
    We have
    \begin{align*}
        \rot(T) &= \sum_{ij} T'_{ij} \rot(\ket{ii}\bra{jj})
        \\ &= \sum_{ij} T'_{ij} \ket{ij}\bra{ij}
        \in \sep,
    \end{align*}
    where the last relation requires $T'_{ij} \ge 0$.
    Clearly $\rho \succeq 0$ and $\Tr \rho = \Tr B = 1$.
    For $i \not\sim j$ we have $T'_{ij} = 0 \implies (\bra{i} \ot I) T (\ket{j} \ot I) = 0$,
    giving $T \in S \ot \linop{A'}$.
    Similarly, $T \in \linop{A} \ot \conj{S}$.
    So in fact $T \in (S \ot \linop{A'}) \cap (\linop{A} \ot \conj{S})
        = S \ot \conj{S}$.
    Therefore $\rho$ and $T$ are feasible for~\eqref{eq:qthm_C_primal} for
    $\qthmperp{\sep}(S)$.  This solution has value
    \begin{align*}
        \braopket{\Phi}{ I \ot \rho + T }{\Phi}
        &= \sum_{ij} \braopket{ii}{I \ot \rho + T}{jj}
        \\ &= \sum_i \rho_{ii} + \sum_{ij} T'_{ij}
        = \ip{B, J} = \thmbar(G),
    \end{align*}
    giving $\qthmperp{\sep}(S) \ge \thmbar(G)$.

    Clearly $\sep \subseteq \cone \subseteq \sep^* \implies
    \qthmperp{\sep}(S) \le \qthmperp{\cone}(S) \le \qthmperp{\sep^*}(S)$ since maximization
    programs have nondecreasing value as constraints are loosened.
    Combining this with the above two inequalities gives the desired equality result.
\end{proof}

A generalization of Szegedy's number to non-commutative graphs follows similarly, now adding
extra constraints to the dual program~\eqref{eq:qth_dual}.  Extra constraints on the dual
become extra variables in the primal.
For a closed convex cone $\cone$ of operators in $\linop{A}\ot\linop{A'}$ and for
a trace-free non-commutative graph $S$, the primal and dual take the form
\begin{align}
    \qthpperp{\cone}(S) = \max &\; \braopket{\Phi}{ I \ot \rho + T }{\Phi} \notag
    \\ \textrm{s.t. } &\; \rho \succeq 0, \Tr \rho = 1, \notag
    \\ &\; I \ot \rho + T \succeq 0, \notag
    \\ &\; T + (L + L^\dag) \in
        S \djp \conj{S} = (S^\perp \ot \conj{S}^\perp)^\perp, \notag
    \\ &\; \rot(L) + \rot(L)^\dag \in \cone^*, \notag
    \\ &\; L \in \linop{A} \ot \linop{A'},
    \label{eq:qthp_C_primal}
    \\
    \qthpperp{\cone}(S) = \min &\; \opnorm{\Tr_A Y} \notag
    \\ \textrm{s.t. } &\; Y \in S^\perp \ot \conj{S}^\perp, \notag
    \\ &\; \rot(Y) \in \cone, \notag
    \\ &\; Y \succeq \ket{\Phi}\bra{\Phi}.
    \label{eq:qthp_C_dual}
\end{align}
That these two programs take the same value is shown in \cref{sec:duality}.
The point $\rho=I/\dim(A)$, $T=0, L=0$ is feasible for~\eqref{eq:qthp_C_primal},
giving $\qthpperp{\cone}(S) \ge 1$.
Although~\eqref{eq:qthm_C_dual} is always feasible,
in some cases~\eqref{eq:qthp_C_dual} is not feasible so
$\qthpperp{\cone}(S)$ can be infinite; see \cref{thm:qthm_qthp_complete} for an example.

Similar to~\eqref{eq:qthm_chain}, we have the chain of inequalities
\begin{align}
    \label{eq:qthp_chain}
    \qthperp(S) \le
    \qthpperp{\psdcone}(S) \le
    \qthpperp{\psdcone \cap \ppt}(S) \le
    \qthpperp{\sep}(S) \le
    \chromc(S).
\end{align}
The last inequality will be proved in \cref{thm:qthm_qthp_sandwich}, and the others follow from
the fact that~\eqref{eq:qthp_C_dual} has nondecreasing value as constraints are tightened.
Note, however, that the last two values can be $\infty$ and, unlike $\qthperp(S)$,
don't provide a bound on $\chromq(S)^2$.
As was the case with our Schrijver generalization, this generalized Szegedy quantity
matches the classical value when $S$ derives from a classical graph.

\begin{theorem}
    \label{thm:qthp_classical}
    Let $G$ be a classical loop-free graph and $S = \linspan\{ \ket{i}\bra{j} : i \sim j \}$.
    Then for any closed convex cone $\cone$ satisfying $\sep \subseteq \cone \subseteq \sep^*$,
    it holds that $\qthpperp{\cone}(S) = \thpbar(G)$.
\end{theorem}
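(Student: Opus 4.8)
The plan is to adapt the proof of \cref{thm:qthm_classical}, exploiting the isometry $V=\sum_i\ket{ii}\bra{i}$ together with the identities $VJV^\dag=\proj{\Phi}$ and $\rot(\ket{i}\bra{j}_A\ot\ket{i}\bra{j}_{A'})=\proj{i}_A\ot\proj{j}_{A'}$, and assuming the strong duality of \cref{sec:duality} so that each program equals its dual. The key difference from the $\qthmperp{\cone}$ case is that for Szegedy-type quantities the cone $\cone$ enters the \emph{minimization} side: in~\eqref{eq:qthp_C_dual} it is the constraint $\rot(Y)\in\cone$, and in~\eqref{eq:qthp_C_primal} it is $\rot(L)+\rot(L)^\dag\in\cone^*$. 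Because $\sep$ is a closed convex cone, $(\sep^*)^*=\sep$ and $\sep\subseteq\sep^*$; so it suffices, for the given $G$, to exhibit a dual-feasible point whose cone condition lands inside $\sep$ (hence feasible for every $\cone\supseteq\sep$) with value $\thpbar(G)$, and a primal-feasible point whose cone condition lands inside $\sep$ (hence feasible for every $\cone\subseteq\sep^*$) with value $\thpbar(G)$. These two bounds then squeeze $\qthpperp{\cone}(S)$ to $\thpbar(G)$ for every admissible $\cone$.

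For the upper bound, take $(Z,\lambda)$ optimal for Szegedy's minimization~\eqref{eq:szegedy_min}, so $Z\succeq J$, $Z_{ii}=\lambda$, $Z\in S^\perp$, $Z\in\nonneg$, and set $Y=VZV^\dag=\sum_{ij}Z_{ij}\,\ket{i}\bra{j}_A\ot\ket{i}\bra{j}_{A'}$. Then $Z\in S^\perp$ kills every term with $i\sim j$, giving $Y\in S^\perp\ot\conj{S}^\perp$; $Z\succeq J$ gives $Y-\proj{\Phi}=V(Z-J)V^\dag\succeq 0$ (both operators are supported on $\linspan\{\ket{ii}\}$); $Z\in\nonneg$ gives $\rot(Y)=\sum_{ij}Z_{ij}\,\proj{i}_A\ot\proj{j}_{A'}\in\sep$; and $\Tr_A Y=\sum_i Z_{ii}\,\proj{i}_{A'}=\lambda I$, so $\opnorm{\Tr_A Y}=\lambda$. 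Hence $Y$ is feasible for~\eqref{eq:qthp_C_dual} for every $\cone\supseteq\sep$, yielding $\qthpperp{\cone}(S)\le\lambda=\thpbar(G)$.

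For the lower bound, take $(B,L)$ optimal for Szegedy's maximization~\eqref{eq:szegedy_max}, so $B\succeq 0$, $\Tr B=1$, $B+L\in S+S_0$, $L\in\nonneg$. Let $\rho$ be the diagonal of $B$, let $B',L'$ be the off-diagonal parts of $B,L$, and set $T=VB'V^\dag$ and $M=\tfrac12 VL'V^\dag$, the latter playing the role of the slack variable $L$ of~\eqref{eq:qthp_C_primal}. Since $B+L\in S+S_0$ forces $B'+L'\in S$, we get $T+M+M^\dag=V(B'+L')V^\dag\in S\ot\linop{A'}\subseteq S\djp\conj{S}$; $L\in\nonneg$ gives $\rot(M)+\rot(M)^\dag=\sum_{i\ne j}L_{ij}\,\proj{i}_A\ot\proj{j}_{A'}\in\sep=(\sep^*)^*$; $\rho\succeq 0$ and $\Tr\rho=\Tr B=1$ are immediate; and $I\ot\rho+T\succeq 0$ follows from exactly the pointwise estimate used in \cref{thm:qthm_classical} (the cross terms reassemble into $\braopket{d}{B}{d}\ge 0$ with $\ket{d}=\sum_i\psi_{ii}\ket{i}$, leaving $\sum_{i\ne j}|\psi_{ij}|^2 B_{jj}\ge 0$). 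Thus $(\rho,T,M)$ is feasible for~\eqref{eq:qthp_C_primal} for every $\cone\subseteq\sep^*$, with value $\braopket{\Phi}{I\ot\rho+T}{\Phi}=1+\sum_{i\ne j}B_{ij}=\ip{B,J}=\thpbar(G)$, so $\qthpperp{\cone}(S)\ge\thpbar(G)$.

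Combining the two bounds gives $\qthpperp{\cone}(S)=\thpbar(G)$ for every $\cone$ with $\sep\subseteq\cone\subseteq\sep^*$. I do not expect a genuine obstacle: the verifications are essentially those of \cref{thm:qthm_classical}. The two points that need care are getting the direction of the cone dependence right (for Szegedy-type quantities a larger cone decreases the value, the reverse of the Schrijver-type case) and correctly transporting Szegedy's extra slack matrix $L$ into the slack of~\eqref{eq:qthp_C_primal}; this is why the lower bound is proved through the primal rather than the dual, since a naive compression $Z=V^\dag Y V$ of a dual-optimal $Y$ need not have constant diagonal and so would fail feasibility for~\eqref{eq:szegedy_min}.
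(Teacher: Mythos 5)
Your proof is correct and follows essentially the same route as the paper's: the upper bound by compressing a Szegedy-optimal $Z$ to $Y=VZV^\dag$ feasible for~\eqref{eq:qthp_C_dual} with $\rot(Y)\in\sep$, and the lower bound by lifting an optimal $(B,L)$ of~\eqref{eq:szegedy_max} to a feasible point of~\eqref{eq:qthp_C_primal} with slack $VL'V^\dag/2$. The only cosmetic difference is that you arrange both feasible points to work for every admissible $\cone$ at once, whereas the paper treats the extreme cones $\sep$ and $\sep^*$ and then sandwiches by monotonicity in $\cone$; these are the same argument.
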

\begin{proof}
    Define the isometry $V = \sum_i \ket{ii} \bra{i}$.
    Let $Z$ be an optimal solution for~\eqref{eq:szegedy_min}.
    Define $Y=VZV^\dag$.
    We have $Z \succeq J \implies Y \succeq VJV^\dag = \proj{\Phi}$.
    Since $Z \in S^\perp$ we have $Y=\sum_{i \not\sim j} Z_{ij} \ket{ii}\bra{jj}$; this
    is an element of $S^\perp \ot \conj{S}^\perp$.
    $Z$ being entrywise nonnegative ensures that
    $\rot(Y)=\sum_{ij} Z_{ij} \proj{i} \ot \proj{j} \in \sep$.
    So $Y$ is feasible for~\eqref{eq:qthp_C_dual} for $\qthpperp{\sep}(S)$.
    Its value is $\opnorm{\Tr_A Y} = \opnorm{\sum_i Z_{ii} \proj{i}} = \thpbar(G)$.
    Therefore $\qthpperp{\sep}(S) \le \thpbar(G)$.

    Now let $B$, $L'$ be an optimal solution for~\eqref{eq:szegedy_max} for $\thpbar(G)$.
    Without loss of generality, assume that $L'$ is Hermitian (any feasible solution
    for~\eqref{eq:szegedy_max} can be averaged with its adjoint).
    Also, assume that $L'$ vanishes on the diagonal since zeroing the diagonal entries
    of $L'$ doesn't affect feasibility for~\eqref{eq:szegedy_max}.
    Decompose $B$ into diagonal and off-diagonal components: $B = \rho + T'$.
    Define $T = VT'V^\dag$ and $L = VL'V^\dag/2$.
    We will show this to be feasible for~\eqref{eq:qthp_C_primal} for $\qthpperp{\sep^*}(S)$.
    By the arguments of \cref{thm:qthm_classical}, $\rho \succeq 0$, $\Tr \rho = 1$,
    and $I \ot \rho + T \succeq 0$.
    For $i \not\sim j$ we have $(T'+L')_{ij} = 0$ since $B+L' \in S+S_0$ and $T'+L'$ vanishes
    on the diagonal.
    So $T+L+L^\dag = V(T'+L')V^\dag = \sum_{i \sim j} (T'+L')_{ij} \ket{ii}\bra{jj}$,
    which is an element of $S \ot \conj{S}$.
    We have
    \begin{align*}
        \rot(L)
        &= \sum_{ij} \frac{1}{2} L'_{ij} \rot(\ket{ii}\bra{jj})
        \\ &= \sum_{ij} \frac{1}{2} L'_{ij} \proj{ij} \in \sep,
    \end{align*}
    where the last line relies on $L'_{ij} \ge 0$.
    Similarly $\rot(L)^\dag \in \sep$; therefore
    $\rot(L) + \rot(L)^\dag \in \sep = \sep^{**}$.
    So $\rho$, $T$, and $L$ are feasible for~\eqref{eq:qthp_C_primal} for
    $\qthpperp{\sep^*}(S)$.
    By the arguments of \cref{thm:qthm_classical}, the value of this solution
    is $\thpbar(G)$; therefore $\qthpperp{\sep^*}(S) \ge \thpbar(G)$.

    Clearly $\sep \subseteq \cone \subseteq \sep^* \implies
    \qthpperp{\sep}(S) \ge \qthpperp{\cone}(S) \ge \qthpperp{\sep^*}(S)$ since maximization
    programs have nonincreasing values as constraints are tightened.
    Combining this with the above two inequalities gives the desired equality result.
\end{proof}

\begin{theorem}
    \label{thm:qthm_qthp_mon}
    Suppose a closed convex cone $\cone$ is closed under the action
    of maps of the form $\chan{E} \ot \chanconj{E}$
    where $\chan{E}$ is a completely positive trace preserving map and
    $\chanconj{E}$ is the entrywise complex conjugate of $\chan{E}$.\footnote{
        Note that $(\chan{E} \ot \chanconj{E})(X)$ can be on a different Hilbert space than $X$.
        So, technically, one must consider a collection of cones, one for each Hilbert
        space.  For example, $\sep$ is such a collection.
    }
    In particular, the cones
    $\{ \sep, \psdcone, \ppt, \psdcone \cap \ppt, \sep^* \}$
    satisfy this requirement.
    Then $\qthmperp{\cone}$ and $\qthpperp{\cone}$ are homomorphism monotones in the
    sense that for trace-free non-commutative graphs $S$ and $T$ we have
    \begin{align}
        S \homm T \implies
           &\qthmperp{\cone}(S) \le \qthmperp{\cone}(T), \label{eq:qthm_mon}
        \\ &\qthpperp{\cone}(S) \le \qthpperp{\cone}(T). \label{eq:qthp_mon}
    \end{align}
\end{theorem}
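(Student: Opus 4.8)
The plan is to adapt the ``more direct'' proof of \cref{thm:qthmon}, now applied to the dual semidefinite programs~\eqref{eq:qthm_C_dual} and~\eqref{eq:qthp_C_dual} (whose values equal those of the primal programs by the strong duality of \cref{sec:duality}). Since $S \homm T$, \cref{def:homq} supplies a CPTP map $\chan{E} : \linop{A} \to \linop{B}$ with Kraus operators $\{E_i\}$ and $E^\dagger T^\perp E \subseteq S^\perp$. Set $W := \sum_i E_i \ot \conj{E}_i : A \ot A' \to B \ot B'$ and let $\Theta(X) := W^\dagger X W$; this is precisely the pullback~\eqref{eq:Y_from_Yprime} used in \cref{thm:qthmon}. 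Given an optimal dual solution for the program at $T$ --- the pair $(Y', L')$ in the $\qthmperp{\cone}$ case, or $Y'$ alone in the $\qthpperp{\cone}$ case --- define the candidate solution at $S$ by applying $\Theta$ to every variable: $Y := \Theta(Y')$ and $L := \Theta(L')$.

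Feasibility splits into three checks. (i) The conditions $Y \succeq \proj{\Phi}_A$ and $\opnorm{\Tr_A Y} \le \opnorm{\Tr_B Y'}$ carry over verbatim from \cref{thm:qthmon}: writing $W = (\bra{\Phi}_C \ot I)(J \ot \conj{J})$ with $J$ the Stinespring isometry of $\chan{E}$, one has $W^\dagger \ket{\Phi}_B = \ket{\Phi}_A$ by the cancellation identity~\eqref{eq:JJcancel}, and the diagram computation of \cref{fig:Yval} (which uses $JJ^\dagger \preceq I$ and that $\conj{J}$ is an isometry) gives the norm bound. (ii) The ``graph membership'' constraints --- $Y + (L + L^\dagger) \in (S \ot \conj{S})^\perp$ for $\qthmperp{\cone}$, and $Y \in S^\perp \ot \conj{S}^\perp$ for $\qthpperp{\cone}$ --- follow by linearity of $\Theta$ (and $\Theta(X^\dagger) = \Theta(X)^\dagger$) from the corresponding constraints at $T$: $\Theta$ sends a spanning element $A_1 \ot \conj{A_2}$ (with $A_1, A_2 \in T^\perp$) to $\sum_{ij} (E_i^\dagger A_1 E_j) \ot \conj{(E_i^\dagger A_2 E_j)}$, each factor of which lies in $S^\perp$, resp.\ $\conj{S}^\perp$, by $E^\dagger T^\perp E \subseteq S^\perp$ and its conjugate; the disjunctive-product version needed for $\qthmperp{\cone}$ is analogous.

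(iii) The only substantive point is the cone constraint: $\rot(Y) \in \cone$ for $\qthpperp{\cone}$, and $\rot(L) + \rot(L)^\dagger \in \cone^*$ for $\qthmperp{\cone}$. The key identity is that conjugating the pullback by the rotated transpose turns it into a Heisenberg-type map, $\rot \circ \Theta = (\chan{E}^* \ot \chanadjconj{E}) \circ \rot$, where $\chan{E}^*$ is the (completely positive, unital) adjoint of $\chan{E}$. Granting this, $\rot(Y) = (\chan{E}^* \ot \chanadjconj{E})(\rot(Y'))$; since $\rot(Y') \in \cone$ and $\chan{E}^*$ is completely positive, this lands in $\cone$ whenever $\cone$ is stable under such maps. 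For the $\cone^*$ constraint one uses that $(\chan{E}^* \ot \chanadjconj{E})(\cone^*) \subseteq \cone^*$ is equivalent --- via $\ip{(\chan{E}^* \ot \chanadjconj{E})(W), M} = \ip{W, (\chan{E} \ot \chanconj{E})(M)}$ --- to $(\chan{E} \ot \chanconj{E})(\cone) \subseteq \cone$, which is the hypothesis; applying this to $\rot(L') + \rot(L')^\dagger \in \cone^*$ yields $\rot(L) + \rot(L)^\dagger \in \cone^*$. For the $\qthpperp{\cone}$ direction one needs $\cone$ itself (rather than $\cone^*$) stable under $\chan{E}^* \ot \chanadjconj{E}$; this holds for each of $\sep, \psdcone, \ppt, \psdcone \cap \ppt, \sep^*$ because all five are preserved by $\chan{F} \ot \chanconj{F}$ for an arbitrary completely positive $\chan{F}$ (CP maps preserve positivity and separability, conjugate suitably under partial transpose, and preserve the dual cone $\sep^*$). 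With all feasibility conditions in hand one reads off $\qthmperp{\cone}(S) \le \opnorm{\Tr_A Y} \le \opnorm{\Tr_B Y'} = \qthmperp{\cone}(T)$, and likewise for $\qthpperp{\cone}$.

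The step I expect to be the main obstacle is establishing the intertwining identity $\rot \circ \Theta = (\chan{E}^* \ot \chanadjconj{E}) \circ \rot$ cleanly: the bookkeeping of primed versus unprimed spaces and the entrywise conjugations makes a brute-force computation on matrix entries unpleasant, so I would instead argue it diagrammatically, reusing the transposer identities of \cref{sec:qthmon} --- in essence, $\rot$ swaps the ``bra leg'' of the $A$-factor with the ``ket leg'' of the $A'$-factor, and conjugating $W^\dagger(\cdot)W$ by this swap reattaches each $E_i$ and $\conj{E}_i$ into the Heisenberg forms $\sum_i E_i^\dagger (\cdot) E_i$ and $\sum_i \conj{E}_i^\dagger (\cdot) \conj{E}_i$. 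A secondary point worth stating explicitly is that one really wants $\cone$ and $\cone^*$ stable under the adjoint family $\chan{E}^* \ot \chanadjconj{E}$ as well as under $\chan{E} \ot \chanconj{E}$; this is automatic for the five named cones, which is what makes the stated hypothesis adequate in practice.
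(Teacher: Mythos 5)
Your proposal is correct and follows essentially the same route as the paper's proof: pull back the dual solutions $Y'$ (and $L'$) via $\sum_{ij}(E_i\ot\conj{E}_i)^\dag(\cdot)(E_j\ot\conj{E}_j)$, reuse the $\ket{\Phi}$-cancellation and norm bound from \cref{thm:qthmon}, and handle the cone constraints through the intertwining identity $\rot(W^\dag X W)=(\chanadj{E}\ot\chanadjconj{E})(\rot(X))$, which is exactly~\eqref{eq:ERL}. Your explicit treatment of the $\cone$ versus $\cone^*$ closure under the adjoint (Heisenberg) maps is in fact slightly more careful than the paper's wording, which invokes closure of ``this cone'' without distinguishing the two; the distinction is harmless for the five named cones.
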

\begin{proof}
    The proof is similar to that of \cref{thm:qthmon}, so we only describe the needed
    modifications.
    To prove~\eqref{eq:qthm_mon}, let $Y',L' \subseteq \linop{B} \ot \linop{B'}$ be a feasible
    solution for~\eqref{eq:qthm_C_dual} for $\qthmperp{\cone}(T)$.
    As was done in \cref{thm:qthmon}, define $Y \subseteq \linop{A} \ot \linop{A'}$ as
    $Y = \sum_{ij} (E_i \ot \conj{E}_i)^\dag Y' (E_j \ot \conj{E}_j)$ where
    the Kraus operators $\{E_i\}$ are a homomorphism $S \homm T$.
    Similarly, define
    $L = \sum_{ij} (E_i \ot \conj{E}_i)^\dag L' (E_j \ot \conj{E}_j)$.
    We will show this to be a feasible solution for~\eqref{eq:qthm_C_dual} for
    $\qthmperp{\cone}(S)$ with value at most $\qthmperp{\cone}(T)$.
    The arguments in the proof of \cref{thm:qthmon} apply directly to show
    $Y \succeq \proj{\Phi}$ and $\opnorm{\Tr_A Y} \le \opnorm{\Tr_B Y'}$.

    Since $Y',L'$ are feasible for~\eqref{eq:qthm_C_dual} for $\qthmperp{\cone}(T)$
    we have that
    $Y'+L'+L'^\dag \in T^\perp \ot \linop{B'} + \linop{B} \ot \conj{T}^\perp$, giving
    \begin{align*}
        Y+L+L^\dag &= \sum_{ij} (E_i \ot \conj{E}_i)^\dag (Y'+L'+L'^\dag) (E_j \ot \conj{E}_j)
        \\ &\in E^\dag T^\perp E \ot \conj{E}^\dag \linop{B'} \conj{E}
        \\ &\qquad + E^\dag \linop{B} E \ot \conj{E}^\dag \conj{T}^\perp \conj{E}
        \\ &\subseteq S^\perp \ot \linop{A'} + \linop{A} \ot \conj{S}^\perp.
    \end{align*}
    All that remains is to show $\rot(L) + \rot(L)^\dag \in \cone^*$.
    We have
    \begin{align}
        \rot(L) &= \sum_{ij} \rot( (E_i \ot \conj{E}_i)^\dag L' (E_j \ot \conj{E}_j) ) \notag
        \\ &= \sum_{ij} (E_i \ot \conj{E}_j)^\dag \rot(L') (E_i \ot \conj{E}_j) \notag
        \\ &= (\chanadj{E} \ot \chanadjconj{E})(\rot(L')).
        \label{eq:ERL}
    \end{align}
    Since completely positive maps commute with the taking of adjoints we also have
    $\rot(L)^\dag = (\chanadj{E} \ot \chanadjconj{E})(\rot(L')^\dag)$.
    Consequently,
    $\rot(L) + \rot(L)^\dag = (\chanadj{E} \ot \chanadjconj{E})(\rot(L') + \rot(L')^\dag)$.
    But $\rot(L')+\rot(L')^\dag \in \cone$ and this cone is assumed to be closed under such
    product maps, so $\rot(L)+\rot(L)^\dag \in \cone$.

    To prove~\eqref{eq:qthp_mon}, let $Y' \subseteq \linop{B} \ot \linop{B'}$ be a feasible
    solution for~\eqref{eq:qthp_C_dual} for $\qthpperp{\cone}(T)$ and define $Y$ as in the
    previous paragraph.
    We will show this to be a feasible solution for~\eqref{eq:qthp_C_dual} for
    $\qthmperp{\cone}(S)$ with value at most $\qthmperp{\cone}(T)$.
    Again the arguments in the proof of \cref{thm:qthmon} apply directly to show
    $Y \succeq \proj{\Phi}$ and $\opnorm{\Tr_A Y} \le \opnorm{\Tr_B Y'}$.
    A straightforward modification of~\eqref{eq:JJYJJ_in_Sperp} yields
    $Y \in S^\perp \ot \conj{S}^\perp$.
    All that remains is to show that $\rot(Y) \in \cone$.
    Similar to~\eqref{eq:ERL}, we have
    $\rot(Y) = (\chanadj{E} \ot \chanadjconj{E})(\rot(Y'))$.
    But $\rot(Y') \in \cone$ and this cone is assumed to be closed under such
    product maps, so $\rot(Y)^\dag \in \cone$.
\end{proof}

\begin{lemma}
    Let $\cone$ be a closed convex cone. Then,
    \begin{enumerate}
        \item $\qthmperp{\cone}(K_n) = \qthpperp{\cone}(K_n) = n$ if $\cone \supseteq \sep$
        \item $\qthmperp{\cone}(Q_n) = n^2$ if $\cone \supseteq \sep$
        \item $\qthpperp{\cone}(Q_n) = n^2$ if $\proj{\Phi} \in \cone$ (e.g.\ if
            $\cone \supseteq \psdcone$)
        \item $\qthpperp{\cone}(Q_n) = \infty$ if $\proj{\Phi} \not\in \cone$
            (e.g.\ if $\cone \subseteq \ppt$)
    \end{enumerate}
    \label{thm:qthm_qthp_complete}
\end{lemma}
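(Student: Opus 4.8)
The plan is to bracket each of the four quantities between inequalities that already follow from the definitions, supplemented by two short explicit feasible points. Two structural facts will be used repeatedly. First, loosening the constraint $\rot(T)\in\cone$ in the maximization~\eqref{eq:qthm_C_primal} can only raise its value, while deleting that constraint altogether recovers~\eqref{eq:qth_rho_T}; hence $\sep\subseteq\cone$ implies $\qthmperp{\sep}(S)\le\qthmperp{\cone}(S)\le\qthperp(S)$. Dually, tightening $\rot(Y)\in\cone$ in~\eqref{eq:qthp_C_dual} only raises the minimum, and deleting it recovers the dual program~\eqref{eq:qth_dual} of $\qthperp$; hence $\qthperp(S)\le\qthpperp{\cone}(S)$ for every $\cone$. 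Together with the known values $\qthperp(K_n)=n$ and $\qthperp(Q_n)=n^2$ from~\cite{dsw2013}, these bracketings do most of the work.

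For $K_n$: inserting the standard basis into the construction~\eqref{eq:T_feas_c} gives a feasible point of~\eqref{eq:qthm_C_primal} for $\qthmperp{\sep}(K_n)$ of value $n$, so $\qthmperp{\sep}(K_n)\ge n$; combined with $\qthmperp{\cone}(K_n)\le\qthperp(K_n)=n$ this pins $\qthmperp{\cone}(K_n)=n$. For $\qthpperp{\cone}(K_n)$ the lower bound is $\qthperp(K_n)=n$, and for the upper bound I would exhibit $Y=n\sum_i\proj{ii}$: it lies in $S^\perp\ot\conj{S}^\perp$ (the diagonal matrices on $A\ot A'$, since $K_n^\perp$ is the diagonal), satisfies $\rot(Y)=Y\in\sep\subseteq\cone$, has $Y\succeq\proj{\Phi}$ because on $\linspan\{\ket{ii}\}$ it equals $nI\succeq\proj{\Phi}$ while vanishing off that subspace, and gives $\opnorm{\Tr_A Y}=n$.

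For $Q_n$ the complement $S^\perp=\mathbb{C}I$ dictates the form of every feasible dual point, which makes (3) and (4) nearly automatic. In (3), $Y=n\,I_A\ot I_{A'}$ lies in $S^\perp\ot\conj{S}^\perp=\mathbb{C}(I_A\ot I_{A'})$, has $\rot(Y)=n\proj{\Phi}\in\cone$ whenever $\proj{\Phi}\in\cone$, satisfies $Y\succeq\proj{\Phi}$ since $\opnorm{\proj{\Phi}}=n$, and has $\opnorm{\Tr_A Y}=n^2$; with $\qthpperp{\cone}(Q_n)\ge\qthperp(Q_n)=n^2$ this gives equality. In (4), any feasible $Y$ must be $c\,I_A\ot I_{A'}$ with $c\ge\opnorm{\proj{\Phi}}=n>0$, whence $\rot(Y)=c\proj{\Phi}\in\cone$ would force $\proj{\Phi}\in\cone$; so $\proj{\Phi}\notin\cone$ makes~\eqref{eq:qthp_C_dual} infeasible and $\qthpperp{\cone}(Q_n)=\infty$, with the stated examples coming from $\proj{\Phi}\succeq0$ and from the partial transpose of $\proj{\Phi}$ being the (non-positive) swap operator. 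For (2), the upper bound is again $\qthmperp{\cone}(Q_n)\le\qthperp(Q_n)=n^2$, and for the matching lower bound I would take $\rho=I_{A'}/n$ and $T=\proj{\Phi}-\tfrac1n I_A\ot I_{A'}$: then $\Tr_A T=\Tr_{A'}T=0$ puts $T\in Q_n\ot\conj{Q_n}$, one has $I\ot\rho+T=\proj{\Phi}\succeq0$ with objective $\braopket{\Phi}{\proj{\Phi}}{\Phi}=n^2$, and $\rot(T)=I_A\ot I_{A'}-\tfrac1n\proj{\Phi}$ is a positive multiple of the fidelity‑$0$ isotropic state, hence separable, so $\rot(T)\in\sep\subseteq\cone$ and the point is feasible for $\qthmperp{\sep}(Q_n)$. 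The only step that is not pure bookkeeping is this last separability claim; everything else is the cone‑monotonicity of the two semidefinite programs together with the two explicit feasible points.
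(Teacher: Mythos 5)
Your proof is correct and follows essentially the same route as the paper: the cone-monotonicity bracketing against $\qthperp$, the feasible point $\rho = I/n$, $T = \proj{\Phi} - I\ot I/n$ with separable $\rot(T)$ for part (2), $Y = nI\ot I$ for part (3), and the forced form $Y = cI\ot I$ for part (4). The only cosmetic difference is in part (1), where the paper invokes \cref{thm:qthm_classical,thm:qthp_classical} together with $\thmbar(K_n)=\thpbar(K_n)=n$, whereas you construct the corresponding feasible points by hand (and they are exactly the images of the classical optimal solutions under those theorems).
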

\begin{proof}
    For $\cone \supseteq \sep$ we have
    $\qthmperp{\sep}(K_n) \le \qthmperp{\cone}(K_n) \le \qthperp(K_n)
    \le \qthpperp{\cone}(K_n) \le \qthpperp{\sep}(K_n)$.
    By \cref{thm:qthm_classical,thm:qthp_classical}
    $\qthmperp{\sep}(K_n) = \qthpperp{\sep}(K_n) = n$,
    since
    $\thmbar(K_n)=\thpbar(K_n)=n$.

    A feasible solution for~\eqref{eq:qthm_C_primal} for $\qthmperp{\sep}(Q_n)$
    is given by $\rho=I/n$ and $T = \proj{\Phi} - I \ot I/n$.
    The operator $\rot(T) = I \ot I - \proj{\Phi}/n$ is separable~\cite{PhysRevA.66.062311}.
    %$\rot(T) = I \ot I - \proj{\Phi}/n \propto \int_{\ket{\psi}} \proj{\psi} \ot (I-\proj{\psi})$.
    The value of this solution is $n^2$, so $\qthmperp{\sep}(Q_n) \ge n^2$.
    For $\cone \supseteq \sep$ we have
    $\qthmperp{\sep}(Q_n) \le \qthmperp{\cone}(Q_n) \le \qthperp(Q_n)=n^2$,
    so in fact $\qthmperp{\cone}(Q_n) = n^2$.

    Suppose $\proj{\Phi} \in \cone$.
    Then $\rot(I \ot I) \in \cone$ so
    a feasible solution for~\eqref{eq:qthp_C_dual} for $\qthpperp{\cone}(Q_n)$
    with value $n^2$ is given by $Y = n I \ot I$; therefore $\qthpperp{\cone}(Q_n) \le n^2$.
    But also $\qthpperp{\cone}(Q_n) \ge \qthperp(Q_n) = n^2$, so in
    fact $\qthpperp{\cone}(Q_n) = n^2$.

    Suppose $\proj{\Phi} \not\in \cone$.
    Any feasible solution for~\eqref{eq:qthp_C_dual} for $\qthpperp{\cone}(Q_n)$
    requires $Y \in Q_n^\perp \ot \conj{Q}_n^\perp = \linspan\{ I \ot I \}$.
    In other words, $Y = cI \ot I$ for some $c > 0$.
    But then $\rot(Y) = c \proj{\Phi}$.
    Since $\proj{\Phi} \not\in \cone$, there can be no feasible solution.
    So $\qthpperp{\cone}(Q_n) = \infty$.
\end{proof}

\begin{corollary}
    \label{thm:qthm_qthp_sandwich}
    Let $S$ be a trace-free non-commutative graph.
    For $\cone \in \{ \sep, \psdcone, \ppt, \psdcone \cap \ppt, \sep^* \}$,
    it holds that
        $\cliquec(S) \le
        \qthmperp{\cone}(S) \le
        \qthperp(S) \le
        \qthpperp{\cone}(S) \le
        \chromc(S)$ and
        $[\cliqueq(S)]^2 \le \qthmperp{\cone}(S)$.
        For $\cone \in \{\psdcone, \sep^*\}$,
        $\qthpperp{\cone}(S) \le [\chromq(S)]^2$.
\end{corollary}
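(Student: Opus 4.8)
The plan is to derive every inequality from three facts already in hand: the homomorphism monotonicity of $\qthmperp{\cone}$ and $\qthpperp{\cone}$ (\cref{thm:qthm_qthp_mon}), their values on the complete graphs $K_n$ and $Q_n$ (\cref{thm:qthm_qthp_complete}), and the sandwich $\qthmperp{\cone}(S) \le \qthperp(S) \le \qthpperp{\cone}(S)$, which I would read off directly from the defining semidefinite programs. Each of the five cones $\sep,\psdcone,\ppt,\psdcone\cap\ppt,\sep^*$ is closed under maps $\chan{E}\ot\chanconj{E}$, so \cref{thm:qthm_qthp_mon} is available; each contains $\sep$; and $\psdcone$ and $\sep^*$ contain $\proj{\Phi}$, whereas $\ppt$ (and hence $\psdcone\cap\ppt$) does not, since the partial transpose of $\proj{\Phi}$ is the swap operator.

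For the two middle inequalities I would observe that the primal program \eqref{eq:qthm_C_primal} for $\qthmperp{\cone}$ is exactly the program \eqref{eq:qth_rho_T} for $\qthperp$ with the extra constraints $T \in S\ot\conj{S} \subseteq S\ot\linop{A'}$ and $\rot(T)\in\cone$, so its maximum can only drop; dually, the program \eqref{eq:qthp_C_dual} for $\qthpperp{\cone}$ is the dual program \eqref{eq:qth_dual} for $\qthperp$ with the extra constraints $Y \in S^\perp\ot\conj{S}^\perp \subseteq S^\perp\ot\linop{A'}$ and $\rot(Y)\in\cone$, so its minimum can only rise (to $+\infty$ if the program becomes infeasible). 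Neither comparison requires any hypothesis on $\cone$.

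For the outer bounds I would invoke \cref{def:omega_xi}: since $\cliquec(S)$ is the largest $n$ with $K_n\homm S$, monotonicity and $\qthmperp{\cone}(K_n)=n$ give $\cliquec(S)\le\qthmperp{\cone}(S)$; the same argument with $Q_n$ and $\qthmperp{\cone}(Q_n)=n^2$ gives $[\cliqueq(S)]^2\le\qthmperp{\cone}(S)$; and since (when finite) $\chromc(S)$ is the least $n$ with $S\homm K_n$, monotonicity and $\qthpperp{\cone}(K_n)=n$ give $\qthpperp{\cone}(S)\le\chromc(S)$ (trivial if $\chromc(S)=\infty$). For the final claim, $S\homm Q_{\chromq(S)}$ and monotonicity give $\qthpperp{\cone}(S)\le\qthpperp{\cone}(Q_{\chromq(S)})$, and by \cref{thm:qthm_qthp_complete} this last value equals $[\chromq(S)]^2$ exactly when $\proj{\Phi}\in\cone$, i.e.\ for $\cone\in\{\psdcone,\sep^*\}$; for $\ppt$-type cones it is $\infty$, which is why those are excluded. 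There is no serious obstacle here --- the proof is pure assembly --- and the only place to be careful is checking, cone by cone, the three membership facts ($\chan{E}\ot\chanconj{E}$-closure, containing $\sep$, containing $\proj{\Phi}$) that the cited lemmas require.
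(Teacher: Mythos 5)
Your proposal is correct and follows essentially the same route as the paper: the outer and squared bounds come from applying the monotonicity theorem to the defining homomorphisms $K_n \homm S$, $Q_n \homm S$, $S \homm K_n$, $S \homm Q_n$ together with the complete-graph values, and the middle inequalities $\qthmperp{\cone}(S) \le \qthperp(S) \le \qthpperp{\cone}(S)$ come from comparing the programs constraint-by-constraint, exactly as the paper does when establishing its chains of inequalities preceding the corollary. Your cone-by-cone checks (closure under $\chan{E}\ot\chanconj{E}$, containment of $\sep$, and membership of $\proj{\Phi}$ for $\psdcone$ and $\sep^*$ but not the $\ppt$-type cones) match the hypotheses of the cited lemmas, so nothing is missing.
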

\begin{proof}
    The corollary follows from application of \cref{thm:qthm_qthp_mon} to the definition
    of $\cliquec(S)$, $\cliqueq(S)$, $\chromc(S)$, and $\chromq(S)$, and using the values
    from \cref{thm:qthm_qthp_complete}.
    Note that for $\cone \subseteq \ppt$, in particular, the bound
    $\qthpperp{\cone}(S) \le [\chromq(S)]^2$ does not hold since
    $\chromq(Q_n) = n$ but $\qthpperp{\cone}(Q_n) = \infty$.
\end{proof}

Having developed the basic theory of Schrijver and Szegedy numbers for non-commutative graphs,
we turn now to commentary and applications.
It is interesting to note that a gap between $\thmbar$, $\thbar$, and $\thpbar$ for classical
graphs is somewhat difficult to find and the gaps are often small.
The smallest classical graph displaying a gap between any of these three quantities has 8
vertices.\footnote{
    Verified numerically.  The graph with graph6 code ``\texttt{GRddY\{}'' has
    $\thmbar=3.236$, $\thbar=3.302$, $\thpbar=3.338$.
}
The gap is much more pronounced for non-commutative graphs, showing up already for
graphs in $\linop{\mathbb{C}^2}$.
Indeed, by \cref{thm:qthm_qthp_complete}, $\qthperp(Q_2)=4$ but $\qthpperp{\ppt}(Q_2)=\infty$.
Numerical results on 10000 random graphs $S \in \linop{\mathbb{C}^3}$ with $\dim(S)=4$
yielded $\qthmperp{\ppt}(S)=1$ for all test cases and $\qthpperp{\ppt}(S)=\infty$ for 93\%
of test cases (with the solver failing to converge in one case).

An extreme gap between $\qthperp$ and $\qthmperp{\ppt}$
appears for $S=\mathbb{C}\Delta$ with
$\Delta = \textrm{diag}\{d-1,-1,\dots,-1\} \subseteq \linop{\mathbb{C}^d}$.
In this case,
$\qthperp(S)=d$~\cite{arxiv:1002.2514},
but $\qthmperp{\ppt}(S)=1$.
This can be seen as follows.
For $\qthperp(S)$, the feasible solution $T = \Delta \ot \proj{0}$, $\rho = \proj{0}$
allows $\qthperp(S)=d$.
For $\qthmperp{\ppt}(S)$ it is required first of all that $T \in S \ot \conj{S}$.
The only feasible solutions are then of the form $T = c \Delta \ot \conj{\Delta}$
for some constant $c$.
But $\rot(c \Delta \ot \conj{\Delta}) \in \ppt$ requires $c=0$.
Therefore the only feasible solution
for $\qthmperp{\ppt}(S)$ is $T=0$, giving $\qthmperp{\ppt}(S)=1$.
So in this case $\qthmperp{\ppt}(S)=1$ exactly matches the clique number
$\cliquec(S)$, since $1 \le \cliquec(S) \le \qthmperp{\ppt}(S)=1$.

Note, however, that the entanglement assisted clique number of $S=\mathbb{C}\Delta$ is
$\cliquece(S)=2$~\cite{arxiv:1002.2514}.
So, in this case, $\qthmperp{\ppt}(S)$ is \emph{not} an upper bound on one-shot
entanglement assisted capacity.
This is a bit of a surprise, since for classical graphs and for any cone
$\sep \subseteq \cone \subseteq \sep^*$
our $\qthmperp{\cone}$ and $\qthpperp{\cone}$ reduce to
$\thmbar$ and $\thpbar$ (by \cref{thm:qthm_classical,thm:qthp_classical}),
and these are known to be monotone under entanglement assisted
homomorphisms~\cite{6880319}.
In particular, for classical graphs, $\thmbar(G)$ upper bounds
one-shot entanglement assisted capacity.

The failure of $\qthmperp{\ppt}(S)$ to bound entanglement assisted one-shot capacity
$\cliquece$ can be understood as follows.
This capacity is the largest $n$ such that $K_n \home S$.
By \cref{def:homq_star} this means there is some $\Lambda \succ 0$ such that
$K_n \ot \Lambda \homm S$.
By \cref{thm:qthmon} we have $\qthperp(K_n \ot \Lambda) \le \qthperp(S)$
and by \cref{thm:sperp_Lambda} $\qthperp(K_n \ot \Lambda) = n$, so $n \le \qthperp(S)$.
Thus $\cliquece(S) \le \qthperp(S)$.
It is this last step that breaks down for $\qthmperp{\ppt}$.
By \cref{thm:qthm_qthp_mon} we have
$\qthmperp{\ppt}(K_n \ot \Lambda) \le \qthmperp{\ppt}(S)$.
But, as we will show in \cref{thm:qthm_ppt_Phi},
$\qthmperp{\ppt}(K_n \ot \Lambda) = 1$, so this is a trivial bound
that says nothing about $n$.

Although $\cone=\ppt$ is therefore unsuitable for bounding entanglement assisted clique
number, all is not lost.
In \cref{thm:qthm_psd_lambda} we will show
$\qthmperp{\psdcone}(S \ot I) = \qthmperp{\psdcone}(S)$.
So $\qthmperp{\psdcone}$ indeed provides a bound on entanglement assisted one-shot
capacity, when sender and receiver share a maximally entangled state (i.e.\ $\Lambda=I$).
For general $\Lambda$ this does not hold: $\qthmperp{\psdcone}(S \ot \Lambda)$ can
be smaller than $\qthmperp{\psdcone}(S)$.

\begin{lemma}
    \label{thm:qthm_ppt_Phi}
    Let $S$ be a trace-free non-commutative graph and $\Lambda \succeq 0$ with
    $\rank(\Lambda) > 1$.
    Then $\qthmperp{\ppt}(S \ot \Lambda) = 1$.
\end{lemma}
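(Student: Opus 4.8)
The plan is to work with the primal program~\eqref{eq:qthm_C_primal} for $\cone=\ppt$ and show that feasibility forces the matrix variable $T$ to vanish; the objective then collapses to $\braopket{\Phi}{I\ot\rho}{\Phi}=\Tr\rho=1$, matching the lower bound $\qthmperp{\ppt}(S\ot\Lambda)\ge1$ already observed after~\eqref{eq:qthm_C_primal}, so that equality holds.

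Write $S\subseteq\linop{A}$ and $\Lambda\in\linop{B}$, so that $S\ot\Lambda=\{M\ot\Lambda:M\in S\}$ lives in $\linop{A\ot B}$. The first step is to unpack the constraint $T\in(S\ot\Lambda)\ot\conj{(S\ot\Lambda)}$: after reordering so that the two copies of the $A$-system are grouped together and likewise the two copies of the $B$-system, this space becomes $(S\ot\conj{S})\ot\mathbb{C}(\Lambda\ot\conj{\Lambda})$. Hence every feasible $T$ has the form $T=T_0\ot(\Lambda\ot\conj{\Lambda})$ with $T_0\in S\ot\conj{S}$. The second step is to observe, directly from the action of $\rot$ on standard basis elements, that $\rot$ decomposes factorwise across this grouping of systems, so $\rot(T)=\rot(T_0)\ot\rot(\Lambda\ot\conj{\Lambda})$. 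Using the spectral decomposition $\Lambda=\sum_a\mu_a\proj{e_a}$, a short computation gives $\rot(\Lambda\ot\conj{\Lambda})=\proj{v}$ with $\ket{v}=\sum_a\mu_a\ket{e_a}\ket{\conj{e_a}}=(\Lambda\ot I)\ket{\Phi}_B$; this vector has Schmidt rank $\rank(\Lambda)>1$ and is therefore entangled, so its partial transpose $\proj{v}^{\Gamma}$ has both a strictly positive and a strictly negative eigenvalue.

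The third step is the constraint $\rot(T)\in\ppt$, i.e.\ $\rot(T)^{\Gamma}=\rot(T_0)^{\Gamma}\ot\proj{v}^{\Gamma}\succeq0$, with the partial transpose taken over the primed systems, consistently with the use of $\ppt$ elsewhere in this section. A tensor product of two Hermitian operators can be positive semidefinite only if one of the factors is zero, or both are positive semidefinite, or both are negative semidefinite; since $\proj{v}^{\Gamma}$ is none of these, we must have $\rot(T_0)^{\Gamma}=0$, hence $\rot(T_0)=0$, hence $T_0=0$ (as $\rot$ is an involution), hence $T=0$. Substituting $T=0$ into~\eqref{eq:qthm_C_primal} leaves objective value $\Tr\rho=1$ for every feasible $\rho$, giving $\qthmperp{\ppt}(S\ot\Lambda)\le1$.

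I expect the only point needing genuine care to be the bookkeeping in the second paragraph: tracking which of the four spaces $A,A',B,B'$ each tensor leg inhabits, and verifying cleanly that $\rot$ splits as ``$\rot$ on $AA'$'' $\ot$ ``$\rot$ on $BB'$,'' since $\rot$ is designed to interchange primed and unprimed indices and the reorderings must be handled carefully. The remaining ingredients --- the Schmidt rank of $\ket{v}$, the indefiniteness of $\proj{v}^{\Gamma}$ for an entangled pure state, and the positivity dichotomy for tensor products --- are routine.
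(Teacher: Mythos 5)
Your proposal is correct and follows essentially the same route as the paper's proof: force $T = T'\ot\Lambda\ot\conj{\Lambda}$, factor $\rot$ across the $AA'$ and $BB'$ groupings, identify $\rot(\Lambda\ot\conj{\Lambda})$ as a projector onto an entangled pure state, and conclude $T=0$ from the PPT constraint. Your treatment of the tensor-product positivity dichotomy (explicitly ruling out the case where both partial-transposed factors are negative semidefinite) is slightly more careful than the paper's one-line appeal to ``entangled pure states are not in $\ppt$,'' but the argument is the same.
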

\begin{proof}
    We will show that the only possible feasible solutions for~\eqref{eq:qthm_C_primal}
    are those with $T=0$.
    Indeed, suppose that $T \ne 0$.
    It is required that $T \in (S \ot \Lambda) \ot (\conj{S} \ot \conj{\Lambda})$,
    so $T$ must be of the form $T=T' \ot \Lambda \ot \conj{\Lambda}$ where
    $T' \in S \ot \conj{S}$.
    Then $\rot(T) = \rot(T') \ot \rot(\Lambda \ot \conj{\Lambda}) \in \ppt$ requires that
    $\rot(\Lambda \ot \conj{\Lambda}) \in \ppt$.
    But $\rot(\Lambda \ot \conj{\Lambda}) = \proj{\psi}$,
    where $\ket{\psi} = \sum_{ij} \Lambda_{ij} \ket{ij}$, is an entangled state since
    $\rank(\Lambda) > 1$.
    Entangled pure states are not in $\ppt$.
\end{proof}

\begin{lemma}
    \label{thm:qthm_psd_lambda}
    Let $S$ be a trace-free non-commutative graph and let $\Lambda \succeq 0$,
    $\Lambda \ne 0$.  Then
    \begin{align}
        \label{eq:schrij_psd_lambda_eq}
        \frac{\qthmperp{\psdcone}(S)-1}{\qthmperp{\psdcone}(S \ot \Lambda)-1} =
            \frac{\opnorm{\Lambda} \Tr(\Lambda)}{\Tr(\Lambda^2)}.
    \end{align}
    In particular, $\qthmperp{\psdcone}(S \ot I) = \qthmperp{\psdcone}(S)$.
\end{lemma}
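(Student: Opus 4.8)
The plan is to prove the two inequalities $\qthmperp{\psdcone}(S\ot\Lambda)-1 \ge \frac{\Tr(\Lambda^2)}{\opnorm{\Lambda}\Tr(\Lambda)}\bigl(\qthmperp{\psdcone}(S)-1\bigr)$ and its reverse, each by turning an arbitrary feasible point of one of the programs \eqref{eq:qthm_C_primal} into a feasible point of the other; together they give \eqref{eq:schrij_psd_lambda_eq}, read as the cross-multiplied identity (so the degenerate case $\qthmperp{\psdcone}(S)=1$ is also covered), and $\Lambda=I$ yields the final assertion since then $\Tr(\Lambda^2)=\opnorm{\Lambda}\Tr(\Lambda)$. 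Two structural observations reduce everything to a one-parameter comparison. First, $(S\ot\Lambda)\ot\conj{(S\ot\Lambda)}=(S\ot\conj{S})\ot\mathbb{C}(\Lambda\ot\conj{\Lambda})$, so (after reordering tensor factors) every $T''$ admissible for $\qthmperp{\psdcone}(S\ot\Lambda)$ has the form $T''=T'\ot(\Lambda\ot\conj{\Lambda})$ with $T'\in S\ot\conj{S}$. Second, $\rot$ is multiplicative across the $AA'$ versus $DD'$ split, so $\rot(T'')=\rot(T')\ot\rot(\Lambda\ot\conj{\Lambda})=\rot(T')\ot\proj{\psi_\Lambda}$ with $\ket{\psi_\Lambda}=\sum_{ij}\Lambda_{ij}\ket{ij}$ as in the proof of \cref{thm:qthm_ppt_Phi}; since $\ket{\psi_\Lambda}\ne0$ this gives $\rot(T'')\succeq0\iff\rot(T')\succeq0$, and since the objective of \eqref{eq:qthm_C_primal} equals $1+\braopket{\Phi}{T}{\Phi}$ while a direct computation gives $\braopket{\Phi}{T''}{\Phi}=\braket{\psi_\Lambda}{\psi_\Lambda}\braopket{\Phi}{T'}{\Phi}=\Tr(\Lambda^2)\braopket{\Phi}{T'}{\Phi}$, a point with this $T''$ has objective value $1+\Tr(\Lambda^2)\braopket{\Phi}{T'}{\Phi}$.

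For $\qthmperp{\psdcone}(S\ot\Lambda)-1\ge\dots$, let $(\rho,T')$ be any feasible point for $\qthmperp{\psdcone}(S)$ and set $\rho''=\rho\ot\conj{\Lambda}/\Tr(\Lambda)$ (a density operator) and $T''=\bigl(\opnorm{\Lambda}\Tr(\Lambda)\bigr)^{-1}T'\ot(\Lambda\ot\conj{\Lambda})$. By the structural observations $T''$ is in the right space and $\rot(T'')\succeq0$, so the only nontrivial constraint is $I_{AD}\ot\rho''+T''\succeq0$. Write this operator as $P\ot N+\bigl(\opnorm{\Lambda}\Tr(\Lambda)\bigr)^{-1}T'\ot M$ with $P=I_A\ot\rho\succeq0$, $M=\Lambda\ot\conj{\Lambda}$, $N=I_D\ot\conj{\Lambda}/\Tr(\Lambda)$ on $DD'$. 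Here $M$ and $N$ commute and $0\preceq M\preceq\opnorm{\Lambda}(I_D\ot\conj{\Lambda})=\opnorm{\Lambda}\Tr(\Lambda)N$, so diagonalizing $M,N$ simultaneously makes the operator block-diagonal, and in the block for a common eigenvalue pair $(m_\alpha,n_\alpha)$ positivity becomes $n_\alpha P+\bigl(\opnorm{\Lambda}\Tr(\Lambda)\bigr)^{-1}m_\alpha T'\succeq0$ (trivial when $n_\alpha=0$, which forces $m_\alpha=0$), i.e.\ $P+\beta T'\succeq0$ with $\beta=m_\alpha/\bigl(\opnorm{\Lambda}\Tr(\Lambda)n_\alpha\bigr)\in[0,1]$; this holds by convexity of $\psdcone$ since $P\succeq0$ and $P+T'\succeq0$. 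Thus $(\rho'',T'')$ is feasible with value $1+\frac{\Tr(\Lambda^2)}{\opnorm{\Lambda}\Tr(\Lambda)}\braopket{\Phi}{T'}{\Phi}$, and taking the supremum over $(\rho,T')$ gives the inequality. I expect this positivity check to be the main obstacle: the naive decomposition $T'=T'_+-T'_-$ is useless because $P+T'\succeq0$ does not imply $P\succeq T'_-$, and the point of the $\conj{\Lambda}$-factor in $\rho''$ is precisely to make the two $DD'$-operators commute and one dominate a multiple of the other, which is what forces the constant $\opnorm{\Lambda}\Tr(\Lambda)$.

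For the reverse inequality, let $(\rho'',T'')$ be any feasible point for $\qthmperp{\psdcone}(S\ot\Lambda)$, write $T''=T'\ot(\Lambda\ot\conj{\Lambda})$, and pick a unit eigenvector $\ket{v}\in D$ of $\Lambda$ for the eigenvalue $\opnorm{\Lambda}$. Compressing $I_{AD}\ot\rho''+T''\succeq0$ by the isometry $W=I_A\ot\ket{v}_D\ot I_{A'}\ot I_{D'}$ (which replaces $\Lambda$ by $\bra{v}\Lambda\ket{v}=\opnorm{\Lambda}$ and $I_D$ by $1$) gives $I_A\ot\rho''+\opnorm{\Lambda}T'\ot\conj{\Lambda}\succeq0$ on $A\ot A'\ot D'$; tracing out $D'$ then gives $I_A\ot(\Tr_{D'}\rho'')+\opnorm{\Lambda}\Tr(\Lambda)T'\succeq0$, with $\Tr_{D'}\rho''$ a density operator on $A'$. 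Hence $\bigl(\Tr_{D'}\rho'',\ \opnorm{\Lambda}\Tr(\Lambda)T'\bigr)$ is feasible for $\qthmperp{\psdcone}(S)$ — the cone condition $\rot(\opnorm{\Lambda}\Tr(\Lambda)T')\succeq0$ being inherited as before — with value $1+\opnorm{\Lambda}\Tr(\Lambda)\braopket{\Phi}{T'}{\Phi}=1+\frac{\opnorm{\Lambda}\Tr(\Lambda)}{\Tr(\Lambda^2)}\braopket{\Phi}{T''}{\Phi}$; taking the supremum over $(\rho'',T'')$ yields $\qthmperp{\psdcone}(S)-1\ge\frac{\opnorm{\Lambda}\Tr(\Lambda)}{\Tr(\Lambda^2)}\bigl(\qthmperp{\psdcone}(S\ot\Lambda)-1\bigr)$, which rearranges to the reverse inequality.

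Two remarks round this off: no hypothesis $\Lambda\succ0$ is needed, since the domination $M\preceq\opnorm{\Lambda}\Tr(\Lambda)N$ and the top-eigenvector compression both use only $\Lambda\succeq0$, $\Lambda\ne0$; and because each direction is run on arbitrary feasible points and concludes by passing to suprema, one need not worry about attainment of the optima. Combining the two inequalities gives \eqref{eq:schrij_psd_lambda_eq}, and $\Lambda=I$ gives $\qthmperp{\psdcone}(S\ot I)=\qthmperp{\psdcone}(S)$.
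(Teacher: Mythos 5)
Your proof is correct. Your first direction (constructing a feasible point for $\qthmperp{\psdcone}(S \ot \Lambda)$ from one for $\qthmperp{\psdcone}(S)$ via $\rho'' = \rho \ot \conj{\Lambda}/\Tr(\Lambda)$ and $T'' = T' \ot \Lambda \ot \conj{\Lambda}/(\opnorm{\Lambda}\Tr(\Lambda))$) is essentially identical to the paper's, which verifies positivity by the operator inequality $I_B \succeq \Lambda/\opnorm{\Lambda}$ rather than your simultaneous block-diagonalization; these are the same idea in two dialects, and your convexity phrasing $P+\beta T' = (1-\beta)P+\beta(P+T')$ is a clean way to put it. Your reverse direction, however, takes a genuinely different route. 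The paper works in the eigenbasis of $\Lambda$, argues that $\rho$ may be taken block-diagonal, extracts from each block the constraint $I_A\ot\rho_j/(\lambda_i\lambda_j)+T'\succeq 0$, and then selects the block $\sigma$ of least trace, showing $\Tr(\sigma)^{-1}\ge\opnorm{\Lambda}\Tr(\Lambda)$ before rescaling. You instead compress the positivity constraint by the top eigenvector of $\Lambda$ on the $D$ factor and then take the partial trace over $D'$, which produces the rescaled feasible pair $(\Tr_{D'}\rho'',\ \opnorm{\Lambda}\Tr(\Lambda)T')$ in one stroke, with the two factors $\opnorm{\Lambda}$ and $\Tr(\Lambda)$ arising from the compression and the partial trace respectively. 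This is arguably tidier: it avoids the block-diagonality reduction for $\rho''$ and the "least trace" selection, and it makes transparent where each factor in the constant comes from. Both arguments use the same two structural facts (the forced factorization $T''=T'\ot\Lambda\ot\conj{\Lambda}$ and the equivalence $\rot(T'')\succeq 0\iff\rot(T')\succeq 0$ via $\rot(\Lambda\ot\conj{\Lambda})=\proj{\psi_\Lambda}$), and your handling of the degenerate case and of non-attainment by passing to suprema over arbitrary feasible points is fine.
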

\begin{proof}
    Work in a basis in which $\Lambda$ is diagonal:
    $\Lambda=\diag(\lambda_1, \dots, \lambda_n)$ with
    $\opnorm{\Lambda} = \lambda_1 \ge \lambda_2 \ge \dots \ge \lambda_n \ge 0$.

    ($\ge$):
    Say $S \subseteq \linop{A}$ and $\Lambda \in \linop{B}$.
    Let $T \in \linop{A\ot B\ot A'\ot B'}$
    and $\rho \in \linop{A'\ot B'}$ be an optimal solution for~\eqref{eq:qthm_C_primal}
    for $\qthmperp{\psdcone}(S \ot \Lambda)$.
    Since $T \in (S \ot \Lambda) \ot (\conj{S} \ot \conj{\Lambda})$
    it must be that $T=T' \ot (\Lambda \ot \conj{\Lambda})$ for some $T' \in S \ot \conj{S}$.
    So $T$ is block diagonal:
    \begin{align*}
        T = \sum_{ij} \lambda_i \lambda_j T' \ot \proj{i}_B \ot \proj{j}_{B'}.
    \end{align*}
    Without loss of generality $\rho$ is also block diagonal:
    $\rho = \sum_j \rho_j \ot \proj{j}_{B'}$.
    This can be assumed since the off diagonal components of $\rho$ can be zeroed out
    without affecting its trace or the relation $I_{AB} \ot \rho + T \succeq 0$.
    Since $I_{AB} \ot \rho + T$ is block diagonal and positive semidefinite, each block
    must be positive semidefinite:
    $I_A \ot \rho_j + \lambda_i \lambda_j T' \succeq 0$ or, equivalently,
    \begin{align*}
        I_A \ot \frac{\rho_j}{\lambda_i \lambda_j} + T' \succeq 0.
    \end{align*}
    Let $\sigma$ be the member of $\{ \rho_j / \lambda_1 \lambda_j \}_j$ with the least
    trace.
    We have
    \begin{align*}
        \Tr(\Lambda) \Tr(\sigma)
        = \sum_j \lambda_j \Tr(\sigma)
        \le \sum_j \Tr(\rho_j)/\lambda_1
        = \Tr(\rho) / \opnorm{\Lambda}.
    \end{align*}
    But $\Tr(\rho)=1$ so $c := \Tr(\sigma)^{-1} \ge \Tr(\Lambda) \opnorm{\Lambda}$.
    We have $\Tr(c \sigma)=1$ and
    $I_A \ot \sigma + T' \succeq 0 \implies I_A \ot c\sigma + cT' \succeq 0$.
    Also
    \begin{align}
        \rot(&T) \succeq 0
        \notag \\ &\implies \rot(T') \ot \rot(\sum_{ij} \lambda_i \lambda_j \proj{i}_B \ot
            \proj{j}_{B'}) \succeq 0
        \notag \\ &\implies \rot(T') \ot (\sum_i \lambda_i \ket{ii}_{BB'})
            (\sum_j \lambda_j \bra{jj}_{BB'}) \succeq 0
        \notag \\ &\implies \rot(T') \succeq 0.
        \label{eq:rot_Tprime_pos}
    \end{align}
    So $c\sigma$ and $cT'$ are feasible for~\eqref{eq:qthm_C_primal} for
    $\qthmperp{\psdcone}(S)$ with value
    \begin{align}
        %\left\langle\Phi_A\middle|I_A &\ot c\sigma + cT'\middle|\Phi_A\right\rangle
        \braopketnolr{\Phi_A}{I_A &\ot c\sigma + cT'}{\Phi_A}
        \notag \\ &= \Tr(c\sigma) + c \braopket{\Phi_A}{T'}{\Phi_A}
        \notag \\ &= 1 + \frac{c}{\Tr(\Lambda^2)} \braopket{\Phi_A}{T'}{\Phi_A}
            \braopket{\Phi_B}{\Lambda \ot \conj{\Lambda}}{\Phi_B}
        \notag \\ &= 1 + \frac{c}{\Tr(\Lambda^2)} \braopket{\Phi_{AB}}{T}{\Phi_{AB}}
        \notag \\ &\ge 1 + \frac{\Tr(\Lambda) \opnorm{\Lambda}}{\Tr(\Lambda^2)}
            (\qthmperp{\psdcone}(S \ot \Lambda)-1).
        \label{eq:sch_psd_feas_from_lambda}
    \end{align}
    Therefore $\qthmperp{\psdcone}(S) \ge \eqref{eq:sch_psd_feas_from_lambda}$
    and the left side of~\eqref{eq:schrij_psd_lambda_eq} is at least as great as the right side.

    ($\le$):
    Let $\rho'$ and $T'$ be an optimal solution for~\eqref{eq:qthm_C_primal}
    for $\qthmperp{\psdcone}(S)$.
    Define $\rho = \rho' \ot \conj{\Lambda} / \Tr(\Lambda)$
    and $T = T' \ot \Lambda \ot \conj{\Lambda} / (\opnorm{\Lambda} \Tr(\Lambda))$.
    Then $\Tr(\rho)=1$, $T \in (S \ot \Lambda) \ot (\conj{S} \ot \conj{\Lambda})$, and
    \begin{align*}
        I_{AB} \ot \rho + T
        &= \frac{1}{\Tr(\Lambda)} \left(
                I_A \ot \rho' \ot I_B + T' \ot \frac{\Lambda}{\opnorm{\Lambda}}
            \right) \ot \conj{\Lambda}
        \\ &\succeq \frac{1}{\Tr(\Lambda)} \left(
                I_A \ot \rho' \ot \frac{\Lambda}{\opnorm{\Lambda}} +
                T' \ot \frac{\Lambda}{\opnorm{\Lambda}}
            \right) \ot \conj{\Lambda}
        \\ &= \frac{1}{\Tr(\Lambda) \opnorm{\Lambda}} \left(
                I_A \ot \rho' + T'
            \right) \ot \Lambda \ot \conj{\Lambda}
        \succeq 0.
    \end{align*}
    And $\rot(T) \succeq 0$ by following the logic of~\eqref{eq:rot_Tprime_pos} in
    reverse.
    So $\rho$ and $T$ are feasible for~\eqref{eq:qthm_C_primal}
    for $\qthmperp{\psdcone}(S \ot \Lambda)$.
    The objective value is
    \begin{align}
        \braopketnolr{\Phi_{AB}}{I_{AB} &\ot \rho + T}{\Phi_{AB}}
        \notag \\ &= 1 + \braopket{\Phi_{AB}}{T}{\Phi_{AB}}
        \notag \\ &= 1 + \frac{\braopket{\Phi_A}{T'}{\Phi_A}
            \braopket{\Phi_B}{\Lambda \ot \conj{\Lambda}}{\Phi_B}}
            {\opnorm{\Lambda} \Tr(\Lambda)}
        \notag \\ &= 1 + (\qthmperp{\psdcone}(S)-1)
            \frac{\Tr(\Lambda^2)}{\opnorm{\Lambda} \Tr(\Lambda)}.
        \label{eq:sch_psd_lam_feas_from_S}
    \end{align}
    So $\qthmperp{\psdcone}(S \ot \Lambda) \ge \eqref{eq:sch_psd_lam_feas_from_S}$
    and the left side of~\eqref{eq:schrij_psd_lambda_eq} is no greater than the right side.
\end{proof}

An extreme example of the difference between unassisted capacity and entanglement assisted
capacity is given in theorem 3 of~\cite{arxiv:0906.2527}:
a channel is defined having distinguishability graph $S = Q_n \ot I_2$, where
$I_2$ is the $2 \times 2$ identity operator.
In~\cite{arxiv:0906.2527} it is shown that this channel has no unassisted zero-error
classical capacity (even with many uses of the channel) but has one-shot entanglement
assisted quantum capacity $\log n$.  In other words, $\cliqueqe(S) = n$.
Our techniques show this result to be ``obvious in retrospect''.
Indeed, trivially $Q_n \home S$ since $Q_n \ot I_2 \homm Q_n \ot I_2$.
So $\cliqueqe(S) \ge n$; the channel has one-shot entanglement assisted capacity of
at least $\log n$ qubits.
And by \cref{thm:qthm_ppt_Phi}, $\qthmperp{\ppt}(S)=1$ so $\cliquec(S)=1$; the channel has
no one-shot capacity in the absence of entanglement.
Unfortunately we cannot use these techniques to bound the asymptotic capacity
$\lim_{m \to \infty} \frac{1}{m} \log \cliquec(S^{*m})$
since $\qthmperp{\ppt}$ is not in general multiplicative under powers $S^{\djp m}$
(even for classical graphs~\cite{6880319}).
We conjecture, however, that $\qthmperp{\cone}$ (for certain cones $\cone$)
is multiplicative when $\qthmperp{\cone}(S)=1$.

Inspired by this $S = Q_n \ot I_2$ example, we construct a channel that has no one-shot
capacity when assisted by a maximally entangled state of arbitrary dimension, but does
have one-shot capacity when assisted by a non-maximally entangled state.
To our knowledge this is a new result.
We note that the possibility of such behavior for a classical channel is still an open
problem~\cite{arxiv:1212.1724,6880319}.
This example nicely illustrates the utility of these semidefinite programming bounds
which, at least for small dimensions, are very computationally tractable.
The following example was found and
verified numerically before \cref{thm:qthm_psd_lambda} was discovered; the latter was
inspired by the former.

\begin{theorem}
    \label{thm:need_non_mes}
    There is a channel that can transmit an error-free quantum state of dimension $n$
    (i.e.\ $\log n$ qubits) using entanglement between sender and receiver, but that
    cannot transmit even a single error-free classical bit if the sender and receiver only
    share a maximally entangled state.
\end{theorem}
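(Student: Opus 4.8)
The plan is to produce the channel by specifying its distinguishability graph. By the non-commutative form of Lemma~2 of~\cite{arxiv:0906.2527} recalled just before \cref{thm:any_is_char}, every trace-free non-commutative graph is the distinguishability graph of some channel, so it is enough to exhibit a trace-free non-commutative graph $S$ such that (i) $Q_n \home S$, which by the one-shot quantum-capacity example of \cref{sec:qsrcchan} means the channel $\cN$ with $(N^\dag N)^\perp = S$ transmits an error-free $n$-dimensional quantum state when some entanglement resource is supplied, and (ii) $K_2 \ot I_d \not\homm S$ for every dimension $d$, which means not even one error-free classical bit can be sent when the assisting state is maximally entangled (of any size).

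For the construction I would take $S = Q_n \ot \Lambda_0$, where $\Lambda_0 \succ 0$ is the fixed diagonal operator with a single eigenvalue $1$ and $n^6$ eigenvalues equal to $n^{-2}$, i.e.\ $\Lambda_0 = \diag(1, n^{-2}, \dots, n^{-2})$. Since $Q_n$ is trace-free, so is $S$, hence $S$ is the distinguishability graph of a channel. Property~(i) is immediate: $Q_n \ot \Lambda_0 = S \homm S$ via the identity map, and $\Lambda_0 \succ 0$, so $Q_n \home S$ by \cref{def:homq_star}; the entanglement resource realizing this is the purification of $\Lambda_0$, which is \emph{non-}maximally entangled because $\Lambda_0 \not\propto I$.

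Property~(ii) is where the Schrijver-type quantity $\qthmperp{\psdcone}$ enters. First compute $\qthmperp{\psdcone}(S)$ using \cref{thm:qthm_psd_lambda}: with $\qthmperp{\psdcone}(Q_n) = n^2$ from \cref{thm:qthm_qthp_complete} and $\opnorm{\Lambda_0} = 1$, $\Tr(\Lambda_0) = 1 + n^4$, $\Tr(\Lambda_0^2) = 1 + n^2$, the lemma gives $\qthmperp{\psdcone}(S) - 1 = (n^2-1)(1+n^2)/(1+n^4) = (n^4-1)/(n^4+1)$, so $\qthmperp{\psdcone}(S) = 2n^4/(n^4+1) < 2$. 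Now suppose for contradiction that $K_2 \ot I_d \homm S$ for some $d$. The cone $\psdcone$ is closed under maps of the form $\chan{E} \ot \chanconj{E}$, so \cref{thm:qthm_qthp_mon} yields $\qthmperp{\psdcone}(K_2 \ot I_d) \le \qthmperp{\psdcone}(S)$; but by the last statement of \cref{thm:qthm_psd_lambda} together with \cref{thm:qthm_qthp_complete} we have $\qthmperp{\psdcone}(K_2 \ot I_d) = \qthmperp{\psdcone}(K_2) = 2$, contradicting $\qthmperp{\psdcone}(S) < 2$. Hence $K_2 \ot I_d \not\homm S$ for all $d$, which is exactly~(ii).

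After this the argument is complete; the only routine parts are the displayed arithmetic and checking that $S$, $K_2$, and $K_2 \ot I_d$ are all trace-free so that the cited monotonicity and scaling lemmas apply. The one genuinely delicate point — the reason \cref{thm:qthm_psd_lambda} rather than mere monotonicity of $\qthperp$ is needed — is that we must exclude assistance by a maximally entangled state of \emph{arbitrary} dimension in one stroke: the invariance $\qthmperp{\psdcone}(R \ot I) = \qthmperp{\psdcone}(R)$ is precisely what lets a single computation cover every $d$. (One could additionally remark, via \cref{thm:qthm_ppt_Phi}, that $\qthmperp{\ppt}(S) = 1$, a statement consistent with, but weaker than, what is used here.)
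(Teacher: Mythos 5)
Your proposal is correct and follows essentially the same route as the paper: take $S = Q_n \ot \Lambda$ with $\frac{\opnorm{\Lambda}\Tr(\Lambda)}{\Tr(\Lambda^2)} > n^2-1$, use \cref{thm:qthm_psd_lambda} and \cref{thm:qthm_qthp_complete} to get $\qthmperp{\psdcone}(S) < 2 = \qthmperp{\psdcone}(K_2 \ot I_d)$, and conclude via the homomorphism monotonicity of \cref{thm:qthm_qthp_mon}. The only difference is cosmetic — you use an explicit $\Lambda_0 = \diag(1, n^{-2},\dots,n^{-2})$ of dimension $1+n^6$ valid for every $n$, whereas the paper optimizes $\Lambda$ over a given dimension $m$ — and your arithmetic checks out.
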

\begin{proof}
    Let $T = Q_n \ot \Lambda$ where $\Lambda$ satisfies
    $c := \frac{\opnorm{\Lambda} \Tr(\Lambda)}{\Tr(\Lambda^2)} > n^2-1$.
    For instance, take
    $\Lambda = \diag(1,\alpha,\dots,\alpha) \in \linop{\mathbb{C}^m}$ where
    $\alpha = (\sqrt{m}-1)/(m-1)$.  This maximizes $c$ for a given $m$, achieving
    $c=(m-1)/2(\sqrt{m}-1)$.
    So if $n=2$ we can take $m=26$ to get $c > 3$.

    By lemma 2 of~\cite{arxiv:0906.2527}, $T$ is the distinguishability graph of some
    quantum channel.
    $Q_n \ot \Lambda \homm T$ (there is always a homomorphism from a graph to itself), so
    a quantum state of dimension $n$ can be sent using an entanglement resource
    $\ket{\lambda}$ with reduced density operator $\Lambda$.
    In fact, the encoding is trivial: Alice simply puts her state to be transmitted, along
    with her half of the entanglement resource, directly into the channel.

    On the other hand, by \cref{thm:qthm_psd_lambda},
    $\qthmperp{\psdcone}(K_2 \ot I) = \qthmperp{\psdcone}(K_2) = 2$
    (with $I$ being identity on a space of arbitrary finite dimension) whereas
    $\qthmperp{\psdcone}(T) = 1 + (\qthmperp{\psdcone}(Q_n) - 1)/c = 1+(n^2-1)/c < 2$.
    Since $\qthmperp{\psdcone}$ is a homomorphism monotone, $K_2 \ot I \not\homm T$;
    it is not possible to transmit an error-free classical bit using a maximally entangled
    resource.
\end{proof}

As mentioned above, we conjecture that $\qthmperp{\cone}$ (for certain cones $\cone$)
is multiplicative when $\qthmperp{\cone}(S)=1$.
If this were the case, then $\qthmperp{\cone}(S)=1$ would be enough to guarantee
that a channel has no zero-error asymptotic capacity without entanglement assistance.
We might as well focus on $\cone=\sep$ since this is the smallest of the cones we have
considered, and so gives the strongest bound.
When is $\qthmperp{\sep}(S) = 1$?
Below we present a characterization, but leave the interpretation open.

\begin{theorem}
    Let $S$ be a trace-free non-commutative graph.
    $\qthmperp{\sep}(S) = 1$ iff there is an $M \in (S \ot \conj{S})^\perp$
    such that $\rot(M)-I \in \sep^*$.
    Such channels have no unassisted one-shot capacity.
\end{theorem}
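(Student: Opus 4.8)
The plan is to read off the characterization from semidefinite duality. First note that the point $\rho = I/\dim(A)$, $T = 0$ is feasible for the primal~\eqref{eq:qthm_C_primal}, so $\qthmperp{\sep}(S) \ge 1$ always and it suffices to decide when $\qthmperp{\sep}(S) \le 1$. By the strong duality proved in \cref{sec:duality}, $\qthmperp{\sep}(S)$ equals the optimum of the dual~\eqref{eq:qthm_C_dual}; and any dual-feasible $Y$ has $Y \succeq \proj{\Phi}$, hence $\Tr_A Y \succeq \Tr_A \proj{\Phi} = I$ and $\opnorm{\Tr_A Y} \ge 1$. So the dual optimum is $1$ exactly when it is attained at a pair $(Y,L)$ with $\opnorm{\Tr_A Y} = 1$ (attainment follows from a compactness argument: $Y \succeq 0$ with $\opnorm{\Tr_A Y}$ bounded forces $\Tr Y$ bounded, so the relevant sublevel set of the closed feasible region is compact). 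For such a $Y$ we have $\Tr_A Y = I = \Tr_A \proj{\Phi}$, so $Y - \proj{\Phi} \succeq 0$ has vanishing partial trace and therefore $Y = \proj{\Phi}$. Hence $\qthmperp{\sep}(S) = 1$ if and only if $Y = \proj{\Phi}$ is feasible for~\eqref{eq:qthm_C_dual}, i.e.\ if and only if there is an $L \in \linop{A}\ot\linop{A'}$ with $\proj{\Phi} + (L+L^\dag) \in (S\ot\conj{S})^\perp$ and $\rot(L)+\rot(L)^\dag \in \sep^*$.

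It then remains to translate this last condition into the stated one about $M$. The key identity is $\rot(\proj{\Phi}) = I$ (equivalently $\rot(I) = \proj{\Phi}$) from \cref{sec:qthmon}, together with the facts that $\rot$ is a Hilbert--Schmidt-self-adjoint involution and that $(S\ot\conj{S})^\perp$ is closed under both $\dag$ and the ``rotated adjoint'' $X^\ddag = \rot(\rot(X)^\dag)$ (the latter because $\rot(S\ot\conj{S})$ consists of the operators supported, on both sides, on the image of $S$ under vectorization, and that space is $\dag$-closed). Given a feasible $L$ as above, put $M = \rot\bigl(I + \rot(L) + \rot(L)^\dag\bigr) = \proj{\Phi} + L + L^\ddag$; then $M = M^\ddag$, $\rot(M) - I = \rot(L)+\rot(L)^\dag \in \sep^*$ by construction, and $M \in (S\ot\conj{S})^\perp$ after a routine symmetrization of $L$ (using that $(S\ot\conj{S})^\perp$ is $\dag$- and $\ddag$-closed and that $\proj{\Phi}$ is fixed by both operations). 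Conversely, given $M \in (S\ot\conj{S})^\perp$ with $\rot(M)-I \in \sep^*$ — so $\rot(M)$ is Hermitian and $M = M^\ddag$ — set $L = \tfrac12(M - \proj{\Phi})$; then $\proj{\Phi} + L + L^\dag = M \in (S\ot\conj{S})^\perp$ and $\rot(L)+\rot(L)^\dag = \rot(M) - I \in \sep^*$, so $(\proj{\Phi},L)$ is dual-feasible with value $\opnorm{\Tr_A \proj{\Phi}} = 1$, giving $\qthmperp{\sep}(S) \le 1$.

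Finally, the closing sentence is immediate from \cref{thm:qthm_qthp_sandwich}: $\cliquec(S) \le \qthmperp{\sep}(S)$, so $\qthmperp{\sep}(S) = 1$ forces $\cliquec(S) = 1$, and since $\cliquec(S)$ is the one-shot zero-error classical capacity of a channel whose distinguishability graph is $S$ (such a channel exists by \cref{thm:any_is_char}; see \cref{tab:quantities_summary}), that channel cannot transmit even one error-free bit in a single use. I expect the main obstacle to be the second paragraph: one must track carefully how $\rot$ interacts with the two distinct Hermitian symmetrizations ($\dag$ versus $\ddag$) appearing in the primal objective and the dual constraints, and verify that the free variable $L$ in the dual may be normalized so that $\proj{\Phi} + L + L^\dag$ is precisely the operator $M$. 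This bookkeeping runs exactly parallel to the derivation of~\eqref{eq:qthm_C_dual} in \cref{sec:duality} and to the manipulations in the proof of \cref{thm:qthm_classical}; the duality reduction and the capacity corollary are routine by comparison.
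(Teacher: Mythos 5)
Your argument follows essentially the same route as the paper's: force $Y=\proj{\Phi}$ from $\opnorm{\Tr_A Y}=1$ together with $Y\succeq\proj{\Phi}$, then translate dual feasibility of $Y=\proj{\Phi}$ into the condition on $M$ via $\rot(\proj{\Phi})=I$ and a $\dag$/$\ddag$ symmetrization. One small but genuine gap sits in your converse direction: from $\rot(M)-I\in\sep^*$ you correctly get that $\rot(M)$ is Hermitian, hence $M=M^\ddag$, but this does \emph{not} give $M=M^\dag$ (e.g.\ $M=\ket{1}\bra{2}\ot\ket{1}\bra{2}$ has Hermitian $\rot(M)$ but is not Hermitian), and your identity $\proj{\Phi}+L+L^\dag=M$ with $L=\tfrac12(M-\proj{\Phi})$ is false unless $M$ is Hermitian. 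The fix is the same symmetrization you already invoke for $L$ in the forward direction: replace $M$ by $(M+M^\dag+M^\ddag+M^{\dag\ddag})/4$, which still lies in $(S\ot\conj{S})^\perp$ (that subspace is $\dag$- and $\ddag$-closed) and still satisfies $\rot(\cdot)-I\in\sep^*$ (note $\rot(M^\dag)$ is the $A\leftrightarrow A'$ swap of $\rot(M)$, and $\sep^*$ and $I$ are swap-invariant); this is exactly the step the paper makes explicit. With that patch, and granting dual attainment (which you address by compactness and the paper takes for granted), your proof is correct and coincides with the paper's.
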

\begin{proof}
    ($\implies$):
    Let $S \subseteq \linop{A}$ be a trace-free non-commutative graph with
    $\qthmperp{\cone}(S) = 1$
    Let $Y, L$ be an optimal solution for~\eqref{eq:qthm_C_dual} for
    $\qthmperp{\cone}(S)$.
    We have
    \begin{align*}
        \opnorm{\Tr_A Y} = \qthmperp{\cone}(S) = 1
        &\implies \Tr_A Y \preceq I = \Tr_A(\proj{\Phi})
        \\ &\implies \Tr_A (Y - \proj{\Phi}) \preceq 0
        \\ &\implies \Tr (Y - \proj{\Phi}) \le 0
    \end{align*}
    But $Y - \proj{\Phi} \succeq 0$ so in fact $Y = \proj{\Phi}$.

    Notice that $Y = \proj{\Phi}$ is symmetric under $\dag$ and
    $\ddag$ (i.e.\ $Y = Y^\dag = Y^\ddag$).
    The subspace $(S \ot \conj{S})^\perp$ is also symmetric under these operations,
    as is the cone $\sep^*$.
    So we can assume without loss of generality that $L$ is invariant under
    $\dag$ and $\ddag$.  Indeed, any general $L$ could be replaced with
    $(L + L^\dag + L^\ddag + L^{\dag\ddag})/4$.
    Then $Y+2L \in (S \ot \conj{S})^\perp$ and $\rot(L) \in \sep^*$.
    Define $M = Y+2L$.  Then $M \in (S \ot \conj{S})^\perp$ and
    $\rot(M) - I = \rot(\proj{\Phi}) + 2\rot(L) - I = I + 2 \rot(L) - I =
    2 \rot(L) \in \sep^*$.

    ($\impliedby$):
    Suppose $M \in (S \ot \conj{S})^\perp$ and $\rot(M)-I \in \sep^*$.
    By the same logic as the first part of the proof, we can assume that $M$
    is invariant under $\dag$ and $\ddag$, so that $M=M^\dag$ and
    $\rot(M) = \rot(M)^\dag$.
    Define $Y=\proj{\Phi}$ and $L=(M-Y)/2$.
    Then $Y+L+L^\dag = M \in (S \ot \conj{S})^\perp$ and
    $\rot(L)+\rot(L)^\dag = \rot(M) - \rot(Y) = \rot(M) - I \in \sep^*$, so
    this is a feasible solution for~\eqref{eq:qthm_C_dual} for $\qthmperp{\cone}(S)$.
    Its value is $\opnorm{\Tr_A Y} = \opnorm{I_{A'}} = 1$, so
    $\qthmperp{\cone}(S) \le 1$.
    But any feasible solution has $Y \ge \proj{\Phi}$ and so must have value at least
    $\opnorm{\Tr_A\{\proj{\Phi}\}} = 1$.
    Therefore also $\qthmperp{\cone}(S) \ge 1$.
\end{proof}

We now turn our attention to $\qthpperp{\cone}$.
Whereas $\qthmperp{\cone}(S)=1$, for any cone $\cone \supseteq \sep$,
certifies that a channel has no one-shot capacity (without entanglement assistance),
$\qthpperp{\cone}(S)=\infty$ certifies that a source cannot be transmitted using local
operations and one-way classical communication (LOCC-1).
This is because
\begin{align*}
    \cone \supseteq \sep \implies \qthpperp{\cone}(S) \le \qthpperp{\sep}(S) \le \chi(S).
\end{align*}
So if $\qthpperp{\cone}(S)=\infty$ then $\chi(S)=\infty$ and no amount of classical
communication from Alice to Bob can transmit the source.

As an example,~\cite{PhysRevA.88.062316} provides a set of three maximally entangled states
that are LOCC-1 indistinguishable:
\begin{align*}
    \ket{\psi_0} &= \frac{1}{2} (\ket{00} + \ket{11})_{A_1B_1} \ot (\ket{00} + \ket{11})_{A_2B_2}
    \\
    \ket{\psi_1} &= \frac{1}{2} (\omega \ket{00} + \ket{11})_{A_1B_1} \ot (\ket{01} + \ket{10})_{A_2B_2}
    \\
    \ket{\psi_2} &= \frac{1}{2} (\gamma \ket{00} + \ket{11})_{A_1B_1} \ot (\ket{00} - \ket{11})_{A_2B_2}
\end{align*}
where $\omega$ and $\gamma$ are phases in general position.
The characteristic graph for this source is $\linspan\{ I, Z \} \ot Q_2$.
The quantity $\qthpperp{\ppt}(S)$ is efficiently computable numerically (at least for
spaces this small), and immediately provides a certificate that these states are LOCC-1
indistinguishable, with no manual computation needed.
In the case of this example there is in fact an alternate proof of this result.
If the three states defined above were LOCC-1 distinguishable then there would be an $n$
such that $\linspan\{ I, Z \} \ot Q_2 \homm K_n$.  But then
\begin{align*}
    Q_2 \homm \diag(1,0) \ot Q_2 \homm \linspan\{ I, Z \} \ot Q_2 \homm K_n
\end{align*}
where the second follows from $\diag(1,0) \ot Q_2 \subseteq \linspan\{ I, Z \} \ot Q_2$.
By transitivity of homomorphisms this yields $Q_2 \homm K_n$.
But a qubit cannot be transmitted through a classical channel so $Q_2 \not\homm K_n$.

\section{Conclusion}
\label{sec:conclusion}

We have defined and investigated the problem of quantum zero-error source-channel coding.
This broad class of problems includes dense coding, teleportation, channel capacity,
and one-way LOCC state measurement.
Whereas classical zero-error source-channel coding relies on graphs, the quantum version
relies on non-commutative graphs.  Central to this theory is a generalization of the
notion of graph homomorphism to non-commutative graphs.

For classical graphs, it is known that the Lov{\'a}sz number is monotone under
homomorphisms (and in fact even entanglement assisted homomorphisms).
The Lov{\'a}sz number has been generalized to non-commutative graphs
by~\cite{dsw2013}; we showed this quantity to be monotone under entanglement assisted
homomorphisms on non-commutative graphs.

We investigated the problem of sending many parallel source instances using many parallel
channels (block coding) and found that the Lov{\'a}sz number provides a bound on the cost rate, but
only if the source satisfies a particular condition.  Classical sources, as well as
sources that can produce a maximally entangled state, both satisfy this condition.

We defined Schrijver and Szegedy quantities for non-commutative graphs.  These are
monotone under non-commutative graph homomorphisms, but not entanglement assisted
homomorphisms.
In fact, we derived a sequence of such quantities that are all equal to the traditional
Schrijver and Szegedy quantities for classical graphs but can take different values
on general non-commutative graphs.
These results were used to investigate some known examples from the literature regarding
entanglement assisted communication over a noisy channel and one-way LOCC measurements.
Strangely, one of the Schrijver variants, $\qthmperp{\psdcone}$, scores non-maximally
entangled states as more valuable a resource than maximally entangled states (which are
not even visible to $\qthmperp{\psdcone}$).
Exploiting this oddity we constructed a channel that can transmit several
zero-error qubits if sender and receiver can share an arbitrary entangled state, but
cannot transmit even a single classical bit if only a maximally entangled resource is
allowed.
It is still an open question whether such behavior is possible for a classical channel.

Most of all, and more importantly than any specific bounds provided for the quantum
source-channel coding problem, we have furthered the program of non-commutative graph
theory set forth in~\cite{dsw2013}.  It is a curiosity that a field as discrete as graph
theory can be ``quantized'' by replacing sets with Hilbert spaces and binary relations
with operator subspaces.  Non-commutative graphs offer the promise that some of the
wealth of graph theory may be imported into the theory of operator subspaces.
But actually this promise is more of a tease, as even the most basic facts from graph theory
lead only to (interesting!) open questions in the theory of non-commutative graphs.
We close by outlining some of these questions.

\begin{itemize}
    \item For classical graphs, $\chi(G) \omega(\Gc) \ge \abs{V(G)}$.
        Does this hold also for non-commutative graphs, with an appropriate definition of
        graph complement?  We propose the complement (for trace-free graphs)
        $S^c = (S+\mathbb{C}I)^\perp$, and conjecture that
        $\chromc(S) \cliquec(S^c) \ge n$ where $S \subseteq \linop{\mathbb{C}^n}$.
        Note that $\chromc(S)$ and $\cliquec(S^c)$ are only defined
        when $S$ and $S^c$ are both trace free.
        Similarly, does it hold that $\qthperp(S)\qthperp(S^c) \ge n$?

    \item What is the analogue of vertex transitive for non-commutative
        graphs, and what are the properties of these graphs?
        We propose to define the automorphism group as
        $\textrm{Aut}(S) = \{ U \textrm{ unitary} : USU^\dag = S \}$
        and to call such a group vertex transitive if the only operators
        satisfying $U \rho U^\dag = \rho$ for all $U \in \textrm{Aut}(S)$
        are those proportional to identity.

    \item A Hamiltonian path for a trace-free non-commutative graph $S \in \linop{\mathbb{C}^n}$
        can be taken to be a set of nonzero vectors such that $\ket{\psi_i}\bra{\psi_{i+1}} \in S$
        for $i \in \{1,\dots,n-1\}$.  Does the Lov{\'a}sz conjecture generalize?  That is
        to say, does every connected trace-free vertex transitive non-commutative graph
        have a Hamiltonian path?

    \item Let $S$ be a non-commutative graph associated with the classical graph $G$.
        We saw that $\chi(G)$ is the smallest $n$ such that $S \to K_n$ and
        orthogonal rank $\orthrank(G)$ is the smallest $n$ such that $S \to Q_n$.
        Projective rank $\xi_f$~\cite{arxiv:1212.1724} is to
        $\orthrank$ as fractional chromatic number $\chi_f$ is to $\chi$.
        Since $\chi_f(G) = \min\{ p/q : G \to K_{p:q} \}$ where $K_{p:q}$ is the Kneser
        graph~\cite{GodsilRoyle200105}, is it the case that
        $\xi_f(G) = \min\{ p/q : S \to K'_{p:q} \}$ for some class of non-commutative
        graphs $K'_{p:q}$?

    \item How is the distinguishability graph of a channel related to that of the
        complementary channel?  The same question can be asked for the source: swapping
        Alice and Bob's inputs defines a complementary source.

    \item For classical graphs, $\thmbar$ and $\thpbar$ are monotone under entanglement assisted
        homomorphisms.  For non-commutative graphs this does not always hold.  Is there some
        insight here?  Or does this mean there is some better generalization of
        $\thmbar$ and $\thpbar$?

    \item Is it the case that $\qthmperp{\cone}(S) = 1$ implies
        $\qthmperp{\cone}(S^{\djp n}) = 1$, for some suitable choice of $\cone$?
        If so, $\qthmperp{\cone}(S) = 1$ would certify that a channel had no
        asymptotic zero-error capacity.

    \item Any trace-free non-commutative graph is both the characteristic graph of some
        source and the distinguishability graph of some channel.
        Is there something to be learned from this relation between sources and channels?
        %For example, if $S^\perp$ contains no rank-one operator then
        %$\chromc(S) = \infty$\footnote{
        %    This follows from~\eqref{eq:homq_def2} and the fact that $K_n^\perp$
        %    contains rank-one operators.
        %};
        %$S$ contains no rank-one operator if and only if $\cliquec(S)=1$~\cite{arxiv:0906.2527}.
        %Is there any connection?

    \item It is known that two channels with no one-shot capacity, when put in parallel,
        may have positive one-shot capacity~\cite{6094278,arxiv:0906.2527,6157069}.
        Is there a similar effect with sources?  Are there two sources that are both one-way
        LOCC (LOCC-1) indistinguishable but in parallel are LOCC-1 distinguishable?

    \item The quantity $\opnorm{\Lambda} \Tr(\Lambda) / \Tr(\Lambda^2)$, which shows up
        in \cref{thm:qthm_psd_lambda}, is only greater than $1$ for the reduced density
        operator of a non-maximally entangled state.  Is this an ad hoc quantity, or is it
        a meaningful measure of entanglement?
\end{itemize}

\section*{Acknowledgments}

The author would like to thank Simone Severini, David Roberson, and Vern Paulsen for many
helpful discussions.

This research received financial support from the National
Science Foundation through Grant PHY-1068331.

\appendices
\section{Duality Proofs}
\label[secinapp]{sec:duality}

We will derive the dual of~\eqref{eq:qthm_C_primal}, which we rewrite here for reference.
\begin{align}
    \qthmperp{\cone}(S) = \max &\; \braopket{\Phi}{ I \ot \rho + T }{\Phi} \notag
    \\ \textrm{s.t. } &\; \rho \succeq 0, \Tr \rho = 1, \notag
    \\ &\; I \ot \rho + T \succeq 0, \notag
    \\ &\; T \in S \ot \conj{S}, \notag
    \\ &\; \rot(T) \in \cone,
    \label{eq:qthm_C_appendix}
\end{align}
Section 4.7 of~\cite{GartnerMatousek201201} gives the following duality recipe for conic
programming over real vectors, where $\mathcal{G}$ and $\mathcal{H}$ are closed convex cones:
\begin{align}
    \notag
    \textrm{(Primal)} \quad \max &\ip{\cvec, \xvec}
        \\ \notag \textrm{ s.t. } &\bvec - A(\xvec) \in \mathcal{G},
        \\ &\xvec \in \mathcal{H}
    \label{eq:gartner_primal}
    \\ \notag
    \textrm{(Dual)} \quad \min &\ip{\bvec, \yvec}
        \\ \notag \textrm{ s.t. } &A^T(\yvec) - \cvec \in \mathcal{H}^*,
        \\ &\yvec \in \mathcal{G}^*.
    \label{eq:gartner_duality}
\end{align}
This nearly suffices for our purposes, since~\eqref{eq:qthm_C_appendix} can be viewed as a program
over real vectors by considering the real inner product space of Hermitian matrices with
the Hilbert--Schmidt inner product
(cf.~\cite{watroussdplecture} for the special case where the cones are $\psdcone$).
The difficulty is that the superoperator
$\rot$ is not Hermiticity-preserving, and so cannot be considered as a
linear map on the space of Hermitian matrices.
This is not hard to fix, as the condition $\rot(T) \in \cone$ requires $\rot(T)$ to be
Hermitian and so is equivalent to the pair
of conditions $\rot(T)-\rot(T)^\dag=0$ and $\rot(T)+\rot(T)^\dag \in \cone$.
The first of these can also be written $T - T^\ddag = 0$ (recall that
we define $X^\ddag = \rot(\rot(X)^\dag)$).
Note that the left-hand sides of these relations, seen as superoperators (e.g.\ $T \to T -
T^\ddag$), are not linear in the space $\linop{A} \ot \linop{A'}$ since they each contain an
anti-linear term.
They are, however, linear in the real inner product space of Hermitian matrices.
Within this space, the map $T \to \rot(T) + \rot(T)^\dag$ is self-adjoint.
Indeed, for Hermitian $L,T$ we have
\begin{align*}
    \ip{L, \rot(T) + \rot(T)^\dag}
    & = \ip{L, \rot(T)} +\ip{L, \rot(T)^\dag}
    \\ &= \ip{L, \rot(T)} + \ip{\rot(T), L}^*
    \\ &= \ip{\rot(L), T} + \ip{T, \rot(L)}^*
    \\ &= \ip{\rot(L), T} + \ip{\rot(L)^\dag, T}
    \\ &= \ip{\rot(L) + \rot(L)^\dag, T}.
\end{align*}
The map $T \to T-T^\ddag$ is also self-adjoint within the space of Hermitian matrices.
The primal becomes
\begin{align}
    \qthmperp{\cone}(S) = \max &\; \braopket{\Phi}{ I \ot \rho + T }{\Phi} \notag
    \\ \textrm{s.t. } &\; 1 - \Tr \rho = 0, \label{eq:qthm_primal_trrho}
    \\ &\; I \ot \rho + T \succeq 0, \label{eq:qthm_primal_rho_T_sum}
    \\ &\; T - T^\ddag = 0, \label{eq:qthm_primal_antidh}
    \\ &\; \rot(T)+\rot(T)^\dag \in \cone, \label{eq:qthm_primal_RC}
    \\ &\; \rho \succeq 0, T \in S \ot \conj{S}, \label{eq:qthm_primal_notcon}
\end{align}
Applying the recipe~\eqref{eq:gartner_duality} gives a dual formulation with a variable
%($\yvec$ in~\eqref{eq:gartner_duality}) 
for each constraint in the primal:
%($\bvec - A(\xvec) \in \mathcal{G}$ in~\eqref{eq:gartner_primal}):
$\lambda$ for~\eqref{eq:qthm_primal_trrho},
$W$ for~\eqref{eq:qthm_primal_rho_T_sum}, $X$ for~\eqref{eq:qthm_primal_antidh},
and $L'$ for~\eqref{eq:qthm_primal_RC}.
In other words, $\yvec = \lambda \oplus W \oplus X \oplus L'$ (with these thought of as vectors
in the inner product space of Hermitian matrices).
%These correspond to the $\bvec - A(\xvec) \in \mathcal{G}$ constraint
%in~\eqref{eq:gartner_primal}.
The~\eqref{eq:qthm_primal_notcon} constraints correspond to the
$\xvec \in K$ constraint in~\eqref{eq:gartner_primal}, taking $\xvec = \rho \oplus T$.
The dual will have a constraint for each variable of the primal:~\eqref{eq:qthm_dual_rho} for
$\rho$ and~\eqref{eq:qthm_dual_T} for $T$.
The dual is then
\begin{align}
    \min &\; \lambda \notag
    \\ \textrm{s.t. } &\; \lambda I - \Tr_A W - I \succeq 0, \label{eq:qthm_dual_rho}
    \\ &\; -(W +X-X^\ddag +\rot(L')+\rot(L')^\dag)
    \notag \\ &\qquad -\proj{\Phi} \in (S \ot \conj{S})^\perp,
     \label{eq:qthm_dual_T}
    \\ &\; \lambda \in \mathbb{R}, W \succeq 0, X \textrm{ Hermitian}, L' \in \cone^*.
    \label{eq:qthm_dual_WX}
\end{align}
Define $Y = W + \proj{\Phi}$ and $L = \rot(L') + (X-X^\ddag)/2$.
Note that $L$ is not necessarily Hermitian, but $L'$ is since $L' \in \cone^*$.
We have $\rot(L) + \rot(L)^\dag = L' + L'^\dag + (\rot(X) - \rot(X^\ddag) +
\rot(X)^\dag - \rot(X^\ddag)^\dag)/2 = 2L' \in \cone^*$
since $\rot(X^\ddag) = \rot(X)^\dag$.
So these give a solution to
%Notice that $L' \in \cone \implies L' \textrm{ Hermitian} \implies \rot(L'^\dag)=\rot(L')$.
%gives a solution to the program
\begin{align}
    \min &\; \lambda\notag
    \\ \textrm{s.t. } &\; \lambda I - \Tr_A Y \succeq 0 \notag
    \\ &\; Y + (L + L^\dag) \in (S \ot \conj{S})^\perp, \notag
    \\ &\; \rot(L) + \rot(L)^\dag \in \cone^*, \notag
    \\ &\; Y \succeq \proj{\Phi}, \notag
    \\ &\; \lambda \in \mathbb{R}, L \in \linop{A} \ot \linop{A'}. \label{eq:qthm_dual_YL}
\end{align}
Conversely, a solution to~\eqref{eq:qthm_dual_YL} gives a solution
to~\eqref{eq:qthm_dual_rho}-\eqref{eq:qthm_dual_WX}
via $W = Y - \proj{\Phi}$,
$L' = (\rot(L) + \rot(L)^\dag)/2$,
$X = [(L - L^\ddag) + (L - L^\ddag)^\dag]/4$.
The program~\eqref{eq:qthm_dual_YL} is equivalent to~\eqref{eq:qthm_C_dual}.
%since $(S \ot \conj{S})^\perp = S^\perp \ot \linop{A'} + \linop{A} \ot \conj{S}^\perp$.

We now show the primal and dual to have equal and finite optimum values.
Let $L'$ be in the relative interior\footnote{
    See~\cite{boyd2004convex} for the definition of relative interior.
} of $\cone^*$, and let $X=0$.
There is a $W \succ 0$ such that the left hand side of~\eqref{eq:qthm_dual_T} is
proportional to negative identity, and so is in $(S \ot \conj{S})^\perp$.
For a large enough $\lambda$,~\eqref{eq:qthm_dual_rho} is satisfied with strict
inequality.
Thus the dual program~\eqref{eq:qthm_dual_rho}-\eqref{eq:qthm_dual_WX} is strictly
feasible.
The dual has finite value because $W \succeq 0$ requires $\lambda \ge 1$
in~\eqref{eq:qthm_dual_rho}.
Therefore strong duality holds: the
primal~\eqref{eq:qthm_primal_trrho}-\eqref{eq:qthm_primal_notcon} and
dual~\eqref{eq:qthm_dual_rho}-\eqref{eq:qthm_dual_WX} are both feasible and take the same
optimal value.
Since these are equivalent to~\eqref{eq:qthm_C_appendix} and~\eqref{eq:qthm_dual_YL},
these two are also feasible and take the same value.

%To show that~\eqref{eq:qthm_C_appendix} and~\eqref{eq:qthm_dual_YL} are equal and finite,
%let $M$ be in the relative
%interior\footnote{
%    See~\cite{boyd2004convex} for the definition of relative interior.
%} of $\cone^*$ and set $L=\rot(M)/2$.
%Notice that $\rot(L)+\rot(L)^\dag=M$ since $M$ is Hermitian,
%so $\rot(L)+\rot(L)^\dag$ is in the relative interior of $\cone^*$.
%Then for large enough $c$ we have $Y := cI-(L+L^\dag) \succ \proj{\Phi}$
%and $Y+L+L^\dag=cI \in (S \ot \conj{S})^\perp$,
%so~\eqref{eq:qthm_dual_YL} is strictly feasible.
%Therefore strong duality holds and the
%primal~\eqref{eq:qthm_C_appendix} and dual~\eqref{eq:qthm_dual_YL} optimal values are equal.
%The point $\rho=I/\dim(A)$, $T=0$ is feasible for~\eqref{eq:qthm_C_appendix}, so
%$\qthmperp{\cone}(S)$ is finite.

We now compute the primal for~\eqref{eq:qthp_C_dual}, which we rewrite here for reference.
\begin{align}
    \qthpperp{\cone}(S) = \min &\; \opnorm{\Tr_A Y} \notag
    \\ \textrm{s.t. } &\; Y \in S^\perp \ot \conj{S}^\perp, \notag
    \\ &\; \rot(Y) \in \cone, \notag
    \\ &\; Y \succeq \ket{\Phi}\bra{\Phi}.
    \label{eq:qthp_C_appendix}
\end{align}
As with $\qthmperp{\cone}$, we can rewrite this using only Hermiticity preserving
maps:
\begin{align}
    \qthpperp{\cone}(S) = \min &\; \lambda \notag
    \\ \textrm{s.t. } &\; \lambda I - \Tr_A Y \succeq 0, \label{eq:qthp_dual_trY}
    \\ &\; Y - \ket{\Phi}\bra{\Phi} \succeq 0, \label{eq:qthp_dual_YJ}
    \\ &\; Y - Y^\ddag = 0, \label{eq:qthp_dual_antidh}
    \\ &\; \rot(Y) + \rot(Y)^\dag \in \cone, \label{eq:qthp_dual_RC}
    \\ &\; \lambda \in \mathbb{R},
        Y \in S^\perp \ot \conj{S}^\perp. \label{eq:qthp_dual_notcon}
\end{align}
The primal will have a variable for each constraint in the dual:
$\rho$ for~\eqref{eq:qthp_dual_trY},
$T'$ for~\eqref{eq:qthp_dual_YJ},
$X$ for~\eqref{eq:qthp_dual_antidh}, and
$L'$ for~\eqref{eq:qthp_dual_RC}.
In other words, $\xvec = \rho \oplus T' \oplus X \oplus L'$.
The~\eqref{eq:qthp_dual_notcon} constraints correspond to $\yvec \in \mathcal{G}^*$
in~\eqref{eq:gartner_duality}.
The dual will have a constraint for each variable of the primal:~\eqref{eq:qthp_primal_lambda}
for $\lambda$ and~\eqref{eq:qthp_primal_Y} for $Y$.
The primal is then
\begin{align}
    \max &\; \braopket{\Phi}{T'}{\Phi} \notag
    \\ \textrm{s.t. } &\; 1 - \Tr \rho = 0, \label{eq:qthp_primal_lambda}
    \\ &\; I \ot \rho - T' - X + X^\ddag - \rot(L')
        \notag \\ &\qquad - \rot(L')^\dag \in
        (S^\perp \ot \conj{S}^\perp)^\perp, \label{eq:qthp_primal_Y}
    \\ &\; \rho \succeq 0, T' \succeq 0, X \textrm{ Hermitian},
        L' \in \cone^*. \label{eq:qthp_primal_TX}
\end{align}
Define $T = T' - I \ot \rho$
and $L = \rot(L') + (X-X^\ddag)/2$.
Note that $L$ is not necessarily Hermitian, but $L'$ is since $L' \in \cone^*$.
As before, we have $\rot(L) + \rot(L)^\dag \in \cone^*$.
These give a solution to
\begin{align}
    \max &\; \braopket{\Phi}{I \ot \rho + T}{\Phi} \notag
    \\ \textrm{s.t. } &\; \Tr \rho = 1, \notag
    \\ &\; T + (L + L^\dag) \in
        (S^\perp \ot \conj{S}^\perp)^\perp \notag
    \\ &\; \rot(L) + \rot(L)^\dag \in \cone^*, \notag
    \\ &\; \rho \succeq 0, I \ot \rho + T \succeq 0, \notag
    \\ &\; L \in \linop{A} \ot \linop{A'}. \label{eq:qthp_primal_TL}
\end{align}
Conversely, a solution to~\eqref{eq:qthp_primal_TL} gives a solution
to~\eqref{eq:qthp_primal_lambda}-\eqref{eq:qthp_primal_TX}
via $T' = T + I \ot \rho$,
$L' = (\rot(L) + \rot(L)^\dag)/2$,
$X = [(L - L^\ddag) + (L - L^\ddag)^\dag]/4$.
The program~\eqref{eq:qthp_primal_TL} is equivalent to~\eqref{eq:qthp_C_primal}.
%since $(S^\perp \ot \conj{S}^\perp)^\perp = S \ot \linop{A'} + \linop{A} \ot \conj{S}$.

We now show the primal and dual to have equal, but not necessarily finite, optimum values.
Let $L''$ be in the relative interior of $\cone^*$; then for any $c>0$, $L'=cL''$ is also
in the relative interior of $\cone^*$.  Let $X=0$ and $\rho=I/\dim(A)$.
For sufficiently small $c$, there is a $T' \succ 0$ such that the left hand side
of~\eqref{eq:qthp_primal_Y} vanishes.
Thus the primal~\eqref{eq:qthp_primal_lambda}-\eqref{eq:qthp_primal_TX} is strictly feasible.
If the primal takes finite optimum value, then by strong duality the dual is feasible and
takes the same value.
On the other hand, if the primal is unbounded (has infinite optimal value) then by weak
duality the dual is infeasible and so also has infinite value.
See \cref{thm:qthm_qthp_complete} for an example of such a case.

%To show that~\eqref{eq:qthp_C_appendix} and~\eqref{eq:qthp_primal_TL} are equal,
%let $M$ be in the relative interior of $\cone^*$.  For small enough $c$, the point
%$L = c \rot(M)$, $T = -L-L^\dag$, $\rho=I/\dim(A)$ is strictly feasible
%for~\eqref{eq:qthp_primal_TL} so strong duality holds
%and the primal~\eqref{eq:qthp_primal_TL} and dual~\eqref{eq:qthp_C_appendix} values are equal.
%Note, however, that in some cases~\eqref{eq:qthp_C_appendix} is not feasible so
%$\qthpperp{\cone}(S)$ could be infinite: see \cref{thm:qthm_qthp_complete} for an example.

%\bibliographystyle{IEEEtran}
%\bibliography{everything}

%%%%%%%%%%%%%%%%%%%%%%%%%%%%%%%%%%%%%%%%%%%%%%%%%%%%%%%%%%%%%%%
%%%%%%%%%%%%%%%%%%%%%%%%%%%%%%%%%%%%%%%%%%%%%%%%%%%%%%%%%%%%%%%
%%%%%%%%%%%%%%%%%%%%%%%%%%%%%%%%%%%%%%%%%%%%%%%%%%%%%%%%%%%%%%%

% Generated by IEEEtran.bst, version: 1.13 (2008/09/30)

%%%%%%%%%%%%%%%%%%%%%%%%%%%%%%%%%%%%%%%%%%%%%%%%%%%%%%%%%%%%%%%
%%%%%%%%%%%%%%%%%%%%%%%%%%%%%%%%%%%%%%%%%%%%%%%%%%%%%%%%%%%%%%%
%%%%%%%%%%%%%%%%%%%%%%%%%%%%%%%%%%%%%%%%%%%%%%%%%%%%%%%%%%%%%%%

\begin{IEEEbiographynophoto}{Dan~Stahlke}
    received an MS in Physics from the University of Alaska Fairbanks, in 2010, studying
    nonlinear dynamics under the supervision of Renate Wackerbauer, and a PhD in Physics
    from Carnegie Mellon University, in 2014, studying quantum information under the
    supervision of Robert B. Griffiths.
    He currently writes software at Intel Corporation.
\end{IEEEbiographynophoto}

\end{document}